\let\oldnl\nl
\newcommand{\nonl}{\renewcommand{\nl}{\let\nl\oldnl}}
\newcommand{\email}[1]{Email: \href{mailto:#1}{\texttt{#1}}}
  \newcommand{\PK}[1]{{\color{blue}[Pascal: #1]}}
  \newcommand{\RO}[1]{{\color{red}[Rafael: #1]}}
  \newcommand{\PK}[1]{}
  \newcommand{\RO}[1]{}
\newtheorem{theorem}{Theorem}[section]
\newtheorem*{theorem*}{Theorem}
\newtheorem{lemma}[theorem]{Lemma}
\newtheorem*{lemma*}{Lemma}
\newtheorem{proposition}[theorem]{Proposition}
\newtheorem{corollary}[theorem]{Corollary}
\newtheorem{remark}[theorem]{Remark}
\newtheorem*{open*}{Open~question}
\newtheorem{definition}[theorem]{Definition}
\newcommand{\cc}{\mathbb{C}}
\newcommand{\rr}{\mathbb{R}}
\newcommand{\nn}{\mathbb{N}}
\newcommand{\kk}{\mathbb{K}}
\newcommand{\diag}{\operatorname{diag}}
\newcommand{\rk}{\operatorname{rank}}
\newcommand{\Kspan}[1]{\operatorname{span}_\kk\{#1\}}
\DeclareMathOperator{\Ima}{Im}
\begin{document}

\title{Tensor decomposition beyond uniqueness, with an application to the minrank problem}

\author{Pascal Koiran\thanks{Univ Lyon, EnsL, UCBL, CNRS,  LIP, F-69342, LYON Cedex 07, France. 
\email{pascal.koiran@ens-lyon.fr}.}
\and
Rafael Oliveira\thanks{University of Waterloo. \ \email{rafael@uwaterloo.ca}}
}

\maketitle

\begin{abstract}
    We prove a generalization to Jennrich's uniqueness theorem for tensor decompositions in the undercomplete setting.
    Our uniqueness theorem is based on an alternative definition of the standard tensor decomposition, which we call matrix-vector decomposition.
    Moreover, in the same settings in which our uniqueness theorem applies, we also design and analyze an efficient randomized algorithm to compute the unique minimum matrix-vector decomposition (and thus a tensor rank decomposition of minimum rank). 
    As an application of our uniqueness theorem and our efficient algorithm, we show how to compute all matrices of minimum rank (up to scalar multiples) in certain generic vector spaces of matrices.
\end{abstract}


\section{Introduction}

A tensor can be viewed as a multidimensional array with entries in some field~$\kk$. 
In this paper, we will only consider tensors of order 3, i.e., elements of $\kk^{m \times n \times p} = \kk^m \otimes \kk^n \otimes \kk^p$.
Given 3 vectors $u \in \kk^m, v \in \kk^n, w \in \kk^p$ we recall that their tensor product $u \otimes v \otimes w$ is the tensor $T \in \kk^{m \times n \times p}$ with entries: $T_{ijk}=u_i  v_j w_k$.
By definition, a tensor of this form with $u,v,w \neq 0$ is said to be of {\em rank one}.
The rank of an arbitrary tensor $T$ is defined as the smallest integer $r$ such that $T$ can be written as a sum of $r$ tensors of rank one (and the rank of $T=0$ is 0). 
The decomposition
\begin{equation} \label{eq:decomp} 
T=\sum_{i=1}^{r} u_i \otimes v_i \otimes w_i
\end{equation}
is said to be unique (or sometimes, "essentially unique") if up to a permutation, the rank-1 terms $u_i \otimes v_i \otimes w_i$ are the same in all decompositions of $T$ as a sum of $r$ tensors of rank one.

The starting point of this paper is a uniqueness theorem for decomposition of order 3 tensors and an associated decomposition algorithm:
\begin{theorem}[Jennrich's uniqueness theorem] \label{th:jennrich}
Let $T=\sum_{i=1}^r u_i \otimes v_i \otimes w_i$ be a tensor in $\kk^{m \times n \times p}$ such that:
\begin{itemize}
\item[(i)] The vectors $u_i$ are linearly independent.
\item[(ii)] The vectors $v_i$ are linearly independent.
\item[(iii)]  The vectors $w_i$ are pairwise independent.
\end{itemize}
Then $\rk(T)=r$, and the decomposition of $T$ as a sum of $r$ rank one tensors is essentially unique.
\end{theorem}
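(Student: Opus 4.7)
The plan is to use the classical simultaneous-diagonalization argument (often attributed to Jennrich and Leurgans): reduce the three-way problem to an ordinary matrix eigenvalue problem by taking a random linear combination of the frontal slices of $T$. Let $T_k \in \kk^{m \times n}$ denote the $k$-th slice of $T$ in the last mode, and for $a \in \kk^p$ set $T_a = \sum_{k=1}^p a_k T_k$. Substituting the decomposition gives the factorization
\[
T_a \;=\; U\, D_a\, V^T,\qquad U = [u_1\,|\,\cdots\,|\,u_r],\ V = [v_1\,|\,\cdots\,|\,v_r],\ D_a = \diag(\langle a, w_i\rangle).
\]
For $a$ generic (working over an infinite field, or passing to the algebraic closure), hypothesis (iii) forces every $w_i \ne 0$ and so $D_a$ is invertible, while (i) and (ii) make $U$ and $V$ full-column-rank; hence $\rk(T_a)=r$. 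On the other hand, for any rank decomposition $T = \sum_{j=1}^s \tilde u_j \otimes \tilde v_j \otimes \tilde w_j$ one has $\rk(T_a) \le s$. Combining, $\rk(T) = r$.

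For the uniqueness, I would pick $a, b \in \kk^p$ generically enough that, in addition, all the ratios $\lambda_i := \langle a, w_i\rangle/\langle b, w_i\rangle$ are pairwise distinct. Hypothesis (iii) is exactly what guarantees this generic set is nonempty: $\lambda_i=\lambda_j$ amounts to the polynomial identity $\langle a, w_i\rangle\langle b, w_j\rangle = \langle a, w_j\rangle\langle b, w_i\rangle$, which holds for all $(a,b)$ only if $w_i$ and $w_j$ are proportional. Because $U, V$ have full column rank, a routine check of the Moore--Penrose identities (using $U^+ U = I_r$ and $V^T (V^T)^+ = I_r$) yields
\[
T_a T_b^+ \;=\; U\, D_a D_b^{-1}\, U^+ \qquad\text{and}\qquad T_a^T (T_b^T)^+ \;=\; V\, D_a D_b^{-1}\, V^+ .
\]
The nonzero eigenspaces of the first matrix are therefore the lines $\kk u_i$, labeled by the distinct eigenvalues $\lambda_i$; symmetrically, the second determines the lines $\kk v_i$, labeled by the same $\lambda_i$. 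This simultaneously identifies the $u_i$'s and the $v_i$'s up to scalar and provides an intrinsic pairing between them.

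Finally, let $T = \sum_{j=1}^r u_j' \otimes v_j' \otimes w_j'$ be any second decomposition of length $r$. The same slicing gives $T_a = U' D_a' (V')^T$, and since $\rk(T_a)=r$ one must have $U', V'$ full-column-rank and $D_a', D_b'$ invertible; the spectral computation above applies verbatim, so the lines $\kk u_j'$ coincide with the lines $\kk u_i$ and similarly for the $v$'s, with a common permutation $\sigma$ dictated by the shared eigenvalues. Writing $u_j' = \alpha_j u_{\sigma(j)}$ and $v_j' = \beta_j v_{\sigma(j)}$, subtracting the two decompositions, and using that the matrices $u_i v_i^T$ are linearly independent (an immediate consequence of (i)+(ii): if $\sum_i c_i u_i v_i^T = 0$, dualize $v_j$ to get $c_j u_j = 0$), one concludes $\alpha_j \beta_j w_j' = w_{\sigma(j)}$, so the rank-one terms match.

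The main obstacle I anticipate is verifying the existence of a good pair $(a,b)$: this is the single place where the field must be infinite (otherwise the generic-nonvanishing argument fails) and where hypothesis (iii) is genuinely used. Everything else --- the slicing, the pseudoinverse identities, and the final reconstruction of the $w_i$ --- is routine linear algebra once $(a,b)$ has been fixed.
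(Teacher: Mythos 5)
Your proof is correct over $\rr$ or $\cc$ and reproduces the classical simultaneous-diagonalization argument, but it takes a genuinely different route from the one the paper itself writes out for this theorem. The paper states \cref{th:jennrich} as known and, in Appendix~B, proves an equivalent variant (with the roles of the $v_i$ and $w_i$ swapped) purely by \emph{rank arguments}: by \cref{lem:span} the span of the 3-slices of $T$ equals $\operatorname{Span}(u_1v_1^T,\ldots,u_rv_r^T)$, and then \cref{th:uniquerank} (fed by \cref{lem:rank2b}) shows that these are, up to scaling, the \emph{only} rank-one matrices in that span, from which uniqueness of the rank-one terms (and then of the $w_i$) drops out with no eigenvalue computation at all. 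Your argument is closer in spirit to \cref{algo:jennrich} in Appendix~A and to \cref{prop:image}: slice, form $T_aT_b^{\dagger}$, and read off the $u_i$ as eigenvectors of the distinct nonzero eigenvalues. Each approach buys something: yours is directly algorithmic and matches the standard presentation in~\cite{moitra18}, while the paper's rank argument avoids eigenvalues and the Moore--Penrose inverse altogether, which is exactly why it extends to arbitrary fields.

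Two small things worth tightening in your write-up. First, the Moore--Penrose identity $T_aT_b^{\dagger}=UD_aD_b^{-1}U^{\dagger}$ needs $\kk$ to be $\rr$ or $\cc$ (the pseudoinverse lives in an inner-product setting); ``passing to the algebraic closure'' does not repair this over, say, positive characteristic. Over a general field you should either use an explicit left inverse of $U$ and right inverse of $V^T$ in place of $U^{\dagger}$, $(V^T)^{\dagger}$, or work with the pencil $(T_a,T_b)$ directly as in \cref{sec:images2} and \cref{prop:eigen}. Second, when you say the spectral computation ``applies verbatim'' to a second decomposition $T=\sum_j u_j'\otimes v_j'\otimes w_j'$, you should note that the pairwise distinctness of the ratios $\lambda_j'=\langle a,w_j'\rangle/\langle b,w_j'\rangle$ is not something you get to genericize for (the second decomposition is unknown); rather, it follows \emph{a posteriori}: $T_aT_b^{\dagger}$ is a fixed matrix with exactly $r$ distinct nonzero eigenvalues each having a one-dimensional eigenspace, and the $r$ linearly independent vectors $u_j'$ are nonzero-eigenvalue eigenvectors, so no two of their eigenvalues can coincide.
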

This result is a special case of Kruskal's uniqueness theorem~\cite{kruskal77}. 
In contrast to Kruskal's theorem, \cref{th:jennrich} has an efficient algorithmic proof.  
The resulting algorithm is known as the \say{simultaneous diagonalization} or \say{Jennrich's algorithm.}\footnote{The algorithm was not actually discovered by Robert Jennrich, so this name should be viewed as a tribute to his contributions to tensor decomposition (see~\cite{harshman70,harshman72}) rather than as a historically accurate attribution. 
The first published version of the algorithm seems to be from~\cite{leurgans93}.} The polynomial running time of this algorithm is a remarkable feature since tensor decomposition (or just computing the rank of a tensor) is in general NP-hard~\cite{hastad90}.
This property has made the simultaneous diagonalization algorithm a cornerstone of further algorithmic work on tensors.
For instance, this algorithm was used as a subroutine in~\cite{ma16} to obtain an algorithm for overcomplete decomposition of random tensors.  
The term {\em overcomplete} refers to the situation when the tensor rank is larger than the dimensions of the tensor; by contrast, in the {\em undercomplete} setting of \cref{th:jennrich} we must have $r \leq \min(m,n)$.
Some of the ideas behind Jennrich's algorithm have also inspired the spectral algorithm in~\cite{hopkins19}.
More recently, the simultaneous diagonalization algorithm was used in~\cite{koi24overcomplete} to obtain the first efficient algorithm for overcomplete decomposition of generic tensors of order 3. 
Jennrich's algorithm was also used in~\cite{johnston23} to find low rank matrices in matrix  subspaces. 
This result of~\cite{johnston23} was in turn used in~\cite{kothari24} to give another  efficient algorithm for overcomplete decomposition of generic tensors of order 3. 
In light of all these results, it is quite natural to look for more applications and generalizations of Jennrich's uniqueness theorem and of the corresponding decomposition algorithm.
This is the main focus of this paper.

\subsection{Our results}

In this paper we work in the undercomplete setting and prove three main results, which we now outline.
Our first result is 
a generalization of Jennrich's uniqueness theorem (\cref{th:jennrich}) where condition (iii) is removed.
Our second result is on the algorithmic side, where we also generalize Jennrich's decomposition algorithm to the setting of our uniqueness theorem.
As a consequence of these results, our third result is to obtain an efficient algorithm that finds all matrices of minimum rank 
in certain generically chosen subspaces of matrices.

To properly describe our results, we introduce some basic notation that will be used throughout the paper. 
We denote by $M_{m,n}(\kk)$ the set of matrices with $m$ rows, $n$ columns and entries in~$\kk$. 
We denote by $M_n(\kk)$ the set of square matrices of size $n$, by~$GL_n(\kk)$ the group of invertible matrices of size $n$,
and by $I_n$ the identity matrix of size $n$.

Before we formally state our results, we motivate the conceptual aspect behind our uniqueness theorem, which we call \emph{matrix-vector decompositions}.

It is well known that without condition (iii) in \cref{th:jennrich}, the tensor
decomposition of minimal rank is no longer unique.
One way in which uniqueness fails already happens in the decompositions of matrices: it is easy to see that the decomposition of a rank 2 matrix as the sum of two matrices of rank 1 is never unique; see \cref{lem:rank2} in \cref{sec:matrix} for a proof of this fact.
This failure of uniqueness may at first seem problematic, as most of the efficient tensor decomposition algorithms apply in a setting where the decomposition of smallest rank is known to be unique. 
Informally speaking, uniqueness (and the ingredients in a uniqueness proof) help a decomposition algorithm \say{zero in} on the correct decomposition.

The way we deal with the above obstacle to uniqueness is by considering a more relaxed decomposition of a tensor as a sum of tensor products of the form $M \otimes w$ where $M$ is a matrix and $w$ is a vector.
More precisely, we have the following definition:

\begin{definition}[Matrix-vector decompositions]\label{def:matrix_vector}
Let $T \in \mathbb{K}^{m \times n \times p}$ be a tensor. 
A matrix-vector decomposition of $T$ is a decomposition of the form
\begin{equation} \label{eq:matrixvector}
T=\sum_{\ell=1}^q M_{\ell} \otimes w_{\ell}
\end{equation}
where $M_{\ell} \in M_{m,n}(\mathbb{K}) \setminus \{0\}$, $w_{\ell} \in \mathbb{K}^p \setminus \{0\}$ for every $\ell = 1,\ldots,q$, and no two vectors~$w_{\ell}$ in this list are colinear.

The rank of this decomposition is defined as $\sum_{\ell =1}^q  \rk(M_{\ell})$.
\end{definition}

In \cref{prop:smallest} we show that the smallest rank of a matrix-vector decomposition equals the tensor rank of $T$.
Thus, the above decomposition generalizes the traditional tensor decomposition (the latter also requires that each $M_{\ell}$ be of rank 1), and it has the advantage that it avoids the aforementioned issue for matrices.  
In contrast to condition~(iii) in \cref{th:jennrich}, non-colinearity of the vectors $w_\ell$ can be assumed without loss of generality: if two vectors $w_{\ell}, w_{\ell'}$ are scalar multiples of each other, we can add up the corresponding (properly scaled) matrices $M_{\ell}, M_{\ell'}$ (and the number of terms in the decomposition goes from $q$ to $q-1$).

A related notion of decomposition is studied in~\cite{johnston23}.\footnote{An improved analysis of one of their results can be found in~\cite{dastidar25}.}
One key difference is that they want to minimize the number of terms in a decomposition, whereas we want to minimize the sum of the ranks of the $M_{\ell}$.

So far, it looks like all we've achieved with Definition~\ref{def:matrix_vector} is to redefine tensor rank in a slightly unusual way. 
But we will now see that this definition leads to new uniqueness results and decomposition algorithms.
This is the content of our main uniqueness theorem.

\begin{restatable}[Uniqueness theorem]{theorem}{thmUniqueA}\label{th:unique1}
Let $T=\sum_{i=1}^r u_i \otimes v_i \otimes w_i$ be a tensor in $\mathbb{K}^{m \times n \times p}$ such that:
\begin{itemize}
\item[(i)] The vectors $u_i$ are linearly independent.
\item[(ii)] The vectors $v_i$ are linearly independent.
\item[(iii)] Every vector $w_i$ is nonzero.
\end{itemize}
Then $\rk(T)=r$, and $T$ has 
a unique rank~$r$ matrix-vector decomposition.
\end{restatable}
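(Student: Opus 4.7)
My plan is to combine a flattening computation giving $\rk(T)=r$ with a Jennrich-style two-slice spectral argument for the uniqueness statement. First, I exhibit an explicit rank-$r$ matrix-vector decomposition by grouping the indices $\{1,\ldots,r\}$ according to colinearity of the $w_i$'s: let $C_1,\ldots,C_s$ be the equivalence classes with representatives $\bar w_t$ so that $w_i=\alpha_i\bar w_t$ for $i\in C_t$, and set $M_t:=\sum_{i\in C_t}\alpha_i u_i v_i^T$. Hypotheses (i) and (ii) make the rank-one matrices $u_i v_i^T$ linearly independent, hence $\rk(M_t)=|C_t|$ and $T=\sum_t M_t\otimes\bar w_t$ is a matrix-vector decomposition of total rank $r$. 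For the lower bound $\rk(T)\geq r$ I factor the mode-$1$ flattening as $T_{(1)}=UG^T$ where $G$ has columns $v_i\otimes w_i$; these are linearly independent (if $\sum_i\beta_i v_iw_i^T=0$ then linear independence of the $v_i$'s forces $\beta_i w_i=0$, hence $\beta_i=0$ since $w_i\neq 0$), so $\rk(T_{(1)})=r$ and thus $\rk(T)\geq r$.

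For uniqueness, let $T=\sum_{\ell=1}^{q}N_\ell\otimes y_\ell$ be any other matrix-vector decomposition of rank $r$. Expanding each $N_\ell=\sum_{k=1}^{r_\ell}a_{\ell k}b_{\ell k}^T$ as a rank decomposition yields a rank-$r$ tensor decomposition $T=\sum_{\ell,k}a_{\ell k}\otimes b_{\ell k}\otimes y_\ell$. Writing $T_{(1)}=AE^T$ with $A$ the matrix of columns $a_{\ell k}$ and $E$ the matrix of columns $b_{\ell k}\otimes y_\ell$, the equality $\rk(T_{(1)})=r$ forces $\rk(A)=r$, i.e., the full list $\{a_{\ell k}\}$ is linearly independent; applying the same argument to $T_{(2)}$ gives linear independence of $\{b_{\ell k}\}$. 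Thus the atomized new decomposition satisfies the same linear-independence hypotheses as the original.

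The spectral step is the heart of the argument. For generic linear forms $\phi_1,\phi_2\in(\kk^p)^*$ I write each slice in two ways, $T\cdot\phi_j=UD_jV^T=AD'_jB^T$, where $D_j=\diag(\phi_j(w_i))$ and $D'_j$ is diagonal with entry $\phi_j(y_\ell)$ repeated $r_\ell$ times. Since $U,V,A,B$ all have full column rank and $D_2,D'_2$ are invertible for generic $\phi_2$, the pseudoinverse product formula gives
\[
M\;:=\;(T\cdot\phi_1)\,(T\cdot\phi_2)^+\;=\;UD_1D_2^{-1}U^+\;=\;AD'_1(D'_2)^{-1}A^+.
\]
For generic $(\phi_1,\phi_2)$, two indices produce the same ratio $\phi_1(\cdot)/\phi_2(\cdot)$ iff their third-factor vectors are colinear, so the distinct nonzero eigenvalues of $M$ are in bijection both with the classes $C_t$ and with the indices $\ell$. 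The eigenspace of each is simultaneously the column space of some $M_t$ and the column space of some $N_\ell$, producing a bijection $\sigma$ that pairs $M_t$ with $N_{\sigma(t)}$ via equal column spaces (in particular $s=q$).

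Finally, the column spaces of the $M_t$ are in direct sum inside $\kk^m$ (each is spanned by a disjoint block of the jointly linearly independent family $\{u_i\}$), so there is a projection $\pi_t:\kk^m\to\kk^m$ onto the column space of $M_t$ annihilating the others. Applying $\pi_t$ to the first factor in both decompositions gives $M_t\otimes\bar w_t=N_{\sigma(t)}\otimes y_{\sigma(t)}$, and a direct slicewise comparison shows that equality of two nonzero rank-one matrix-vector tensors forces the vector parts to be proportional and the matrix parts inversely proportional, so $y_{\sigma(t)}=c_t\bar w_t$ and $N_{\sigma(t)}=c_t^{-1}M_t$ for some nonzero $c_t$. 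The main obstacle I expect is the spectral step: classical Jennrich assumes simple eigenvalues coming from pairwise non-colinear third-factor vectors, whereas here each class $C_t$ contributes an eigenvalue of multiplicity $|C_t|$, so the identification must be made at the level of eigenspaces, which is why the joint linear independence of $\{u_i\}$ and $\{a_{\ell k}\}$ established via the flattenings is essential.
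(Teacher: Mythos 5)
Your high-level strategy — form $(T\cdot\phi_1)(T\cdot\phi_2)^+$, read off the column spaces $\Ima(M_t)$ as eigenspaces of eigenvalues that now carry multiplicity $|C_t|$, then pair the two decompositions via a direct-sum projection — is exactly the structure of the paper's proof, and your flattening argument correctly establishes $\rk(T)=r$ as well as full column rank of the atom matrices $A$ and $B$ appearing in any competing rank-$r$ decomposition. The projection step at the end ($\pi_t$ annihilating all but $\Ima(M_t)$, then the slicewise rank-one comparison) matches the role of Lemmas~\ref{lem:direct} and~\ref{lem:rank1} in the paper.

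The genuine gap is the use of the Moore--Penrose pseudoinverse $(T\cdot\phi_2)^+$ and $U^+$. The theorem is stated over an arbitrary field $\mathbb{K}$, and the pseudoinverse does not exist in that generality: $(U^TU)^{-1}U^T$ requires $U^TU$ to be invertible for full-column-rank $U$, which already fails over fields of positive characteristic and over fields like $\cc(t)$ that carry no suitable Hermitian form (e.g.\ $(1,2)^T$ over $\mathbb{F}_5$ has $U^TU=0$). The paper explicitly flags this and for precisely this reason replaces the matrix $T_aT_b^\dagger$ by the \emph{generalized eigenvalue problem} $T_ax=\lambda T_bx$ for the pencil $(T_a,T_b)$; the corresponding statement you want is \cref{lem:image2}, which shows that for generic $a,b$ the nontrivial eigenvalues of the pencil are in bijection with the classes $\ell$, and that $T_b(V_\lambda)=\Ima M_\ell$. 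Your spectral step becomes correct over all $\mathbb{K}$ if you replace the eigenspaces of $T_aT_b^+$ by the eigenspaces of the pencil and replace the eigenspace itself by its image under $T_b$. As written, your argument only proves the theorem over $\rr$, $\cc$, and subfields where the pseudoinverse behaves; in fact it closely mirrors the alternative pseudoinverse-based treatment that the paper relegates to \cref{sec:images} for exactly this reason. A separate, minor slip: ``the rank-one matrices $u_iv_i^T$ are linearly independent, hence $\rk(M_t)=|C_t|$'' is a non-sequitur (a sum of linearly independent rank-one matrices can still collapse in rank, e.g.\ $u_1v_1^T+u_1v_2^T$). The conclusion $\rk(M_t)=|C_t|$ is nevertheless correct, but it must be derived from linear independence of the $u_i$ \emph{and} of the $v_i$ as in \cref{lem:rankr}, not from linear independence of the products $u_iv_i^T$.
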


We now give an equivalent version of the above theorem which is stated solely in terms of matrix-vector decompositions.
This version will be more convenient to use in certain parts of the paper.

\begin{restatable}[Uniqueness theorem, equivalent formulation]{theorem}{thmUniqueB}\label{th:unique2}
Suppose that a tensor $T \in \mathbb{K}^{m \times n \times p}$ has a matrix-vector decomposition of the form:
\begin{equation} \label{eq:uniqueth}
T=\sum_{\ell=1}^q M_{\ell} \otimes w_{\ell}
\end{equation}
where the linear spaces $\Ima(M_{\ell})$ are in direct sum  { and where the linear spaces $\Ima(M_{\ell}^T)$ are also in direct sum.} Then $\rk(T)=\sum_{\ell = 1}^q \rk(M_{\ell})$, and (\ref{eq:uniqueth})
is the unique matrix-vector decomposition of $T$ of minimum rank.
\end{restatable}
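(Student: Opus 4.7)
The plan is to deduce Theorem~\ref{th:unique2} directly from Theorem~\ref{th:unique1}. Starting from the matrix-vector decomposition $T=\sum_{\ell=1}^q M_\ell\otimes w_\ell$, I would unfold each $M_\ell$ using a rank decomposition $M_\ell=\sum_{i=1}^{r_\ell}u_i^{(\ell)}(v_i^{(\ell)})^T$ with $r_\ell=\rk(M_\ell)$, where $\{u_i^{(\ell)}\}_{i}$ is a basis of $\Ima(M_\ell)$ and $\{v_i^{(\ell)}\}_{i}$ is a basis of $\Ima(M_\ell^T)$ (such bases always arise from any rank decomposition). Substituting into $(\ref{eq:uniqueth})$ yields a rank-1 expansion
$$T=\sum_{\ell=1}^{q}\sum_{i=1}^{r_\ell} u_i^{(\ell)}\otimes v_i^{(\ell)}\otimes w_\ell$$
with $r:=\sum_{\ell}r_\ell$ summands.

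Next, I would check the three hypotheses of Theorem~\ref{th:unique1} for this decomposition. The assumption that the $\Ima(M_\ell)$'s are in direct sum means the union $\bigcup_\ell\{u_i^{(\ell)}\}_i$ is a basis of $\bigoplus_\ell\Ima(M_\ell)$, and hence linearly independent; applying the same reasoning to the $\Ima(M_\ell^T)$'s handles the second factor. The third-slot vectors are the $w_\ell$'s (each repeated $r_\ell$ times), all nonzero by Definition~\ref{def:matrix_vector}. Theorem~\ref{th:unique1} then gives $\rk(T)=r=\sum_\ell\rk(M_\ell)$ and the uniqueness of the rank-$r$ matrix-vector decomposition of $T$.

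To finish, I would argue that this unique rank-$r$ matrix-vector decomposition is exactly $(\ref{eq:uniqueth})$, and that $r$ is the smallest possible matrix-vector rank. For the first point, group the $r$ rank-1 summands by colinearity of their third slot: since the $w_\ell$'s are pairwise non-colinear (built into Definition~\ref{def:matrix_vector}), the groups are precisely $\{u_i^{(\ell)}\otimes v_i^{(\ell)}\otimes w_\ell:i=1,\ldots,r_\ell\}$ for $\ell=1,\ldots,q$, whose sums reassemble $M_\ell\otimes w_\ell$. For the second point, any other matrix-vector decomposition of $T$ unfolds (via the same rank-decomposition trick) into a rank-1 decomposition with $\sum_\ell \rk(M_\ell)$ summands, so this quantity is always at least $\rk(T)=r$; alternatively one can simply invoke \cref{prop:smallest}.

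The main delicate point is the identification in the last paragraph: one has to confirm that the unique rank-$r$ matrix-vector decomposition produced by Theorem~\ref{th:unique1} is indeed our original $\sum_\ell M_\ell\otimes w_\ell$ rather than some coarser or finer grouping of the rank-1 terms. This hinges on the pairwise non-colinearity of the $w_\ell$'s enforced by Definition~\ref{def:matrix_vector}; once that is exploited, the identification becomes a bookkeeping check and the two formulations become genuinely equivalent.
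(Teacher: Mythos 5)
Your argument reduces \cref{th:unique2} to \cref{th:unique1} by unfolding each $M_\ell$ into a rank decomposition and verifying that the three hypotheses of \cref{th:unique1} hold for the resulting rank-one expansion. This translation of hypotheses is correct and is essentially the content of \cref{prop:form2to1} in the paper. The bookkeeping in your last paragraph (regrouping the rank-one terms by the non-colinear $w_\ell$'s to recover $(\ref{eq:uniqueth})$, then invoking uniqueness) is also fine, modulo the point below.

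The gap is that you have proved an implication, not the theorem. In the paper, \cref{th:unique1} is \emph{not} established independently: the two theorems are shown to have equivalent hypotheses (Propositions~\ref{prop:form1to2} and~\ref{prop:form2to1}), and then \cref{th:unique2} is proved directly in Section~\ref{sec:proof}, with \cref{th:unique1} following as a corollary. So invoking \cref{th:unique1} as a black box makes your argument circular within the paper's logical structure. The actual work of proving uniqueness is never done: it is exactly the step you defer to ``Theorem~\ref{th:unique1} then gives... the uniqueness.'' The paper supplies this via a generalized-eigenvalue analysis of the pencil $(T_a,T_b)$: \cref{lem:image2} pins down the eigenspaces and shows their $T_b$-images are the spaces $\Ima(M_\ell)$; \cref{th:uniqueimage2} deduces that these image spaces are the same in any minimum-rank matrix-vector decomposition; and the short final argument using \cref{lem:direct} and \cref{lem:rank1} upgrades equality of image spaces to equality of the terms $M_\ell\otimes w_\ell$. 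None of this has an analogue in your proposal. If you wanted to run the logic in the direction you chose, you would need an independent proof of \cref{th:unique1} --- for instance a direct spectral argument generalizing Jennrich's, or the rank-based argument sketched in the appendix for the original Jennrich theorem --- but that is precisely the missing content.
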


We prove the equivalence of these two theorems in \cref{sec:state}, and we prove the above theorems in \cref{sec:proof}.

With the above uniqueness theorems at hand, we are now ready to state our algorithmic contribution: under the uniqueness conditions, we can compute the minimum matrix-vector decomposition.

\newpage

\begin{restatable}[Matrix-vector decomposition algorithm]{theorem}{uniquenessAlg}\label{theorem: uniqueness algorithm}
Suppose that a tensor $T \in \mathbb{K}^{m \times n \times p}$ has a matrix-vector decomposition of the form:
$$T=\sum_{\ell=1}^q M_{\ell} \otimes w_{\ell}$$
where the linear spaces $\Ima(M_{\ell})$ are in direct sum  { and where the linear spaces $\Ima(M_{\ell}^T)$ are also in direct sum.} 

There is a randomized, polynomial-time algorithm (Algorithm~\ref{algo:main}) such that, on input $T$ as above, it outputs the above matrix-vector decomposition (as usual, the $M_{\ell}$ and $w_{\ell}$ are determined only up to scaling and permutation).
\end{restatable}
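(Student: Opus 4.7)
The plan is to adapt Jennrich's simultaneous-diagonalization algorithm to this more general setting. Fix any rank factorization $M_\ell = A_\ell B_\ell^T$ with $A_\ell \in \mathbb{K}^{m \times r_\ell}$ whose columns are a basis of $\Ima(M_\ell)$ and $B_\ell \in \mathbb{K}^{n \times r_\ell}$ whose columns are a basis of $\Ima(M_\ell^T)$, where $r_\ell = \rk(M_\ell)$. Concatenate $A = [A_1 \mid \cdots \mid A_q] \in \mathbb{K}^{m \times r}$ and $B = [B_1 \mid \cdots \mid B_q] \in \mathbb{K}^{n \times r}$ with $r = \sum_\ell r_\ell$; both have full column rank by the two direct-sum hypotheses. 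For any $x \in \mathbb{K}^p$, the slice $T(x) := \sum_k x_k\, T_{\cdot,\cdot,k}$ satisfies $T(x) = \sum_\ell \langle w_\ell, x\rangle M_\ell = A\,D(x)\,B^T$, where $D(x)$ is the $r \times r$ block-diagonal matrix whose $\ell$-th block is $\langle w_\ell, x\rangle I_{r_\ell}$. All of the tensor's information is thereby \say{encoded} in the fixed pair $(A,B)$ together with the scalar-parameter family $D(x)$.

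The first algorithmic step is simultaneous diagonalization. Sample $x,y \in \mathbb{K}^p$ uniformly from a sufficiently large finite subset of $\mathbb{K}$. A short pseudoinverse calculation, using $B^T (B^T)^+ = I_r$ (valid since $B^T$ has full row rank) and $A^+ A = I_r$ (valid since $A$ has full column rank), yields
$$T(x)\, T(y)^+ \;=\; A\,D(x)\,D(y)^{-1}\,A^+,$$
provided $D(y)$ is invertible. This is a rank-$r$ operator on $\mathbb{K}^m$, diagonalisable with eigenspaces exactly $\Ima(M_1),\ldots,\Ima(M_q)$ and respective eigenvalues $\lambda_\ell := \langle w_\ell, x\rangle/\langle w_\ell, y\rangle$ of multiplicity $r_\ell$, together with a zero eigenspace of dimension $m-r$. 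I would thus form $T(x)T(y)^+$, run any standard eigendecomposition routine, and read off each $\Ima(M_\ell)$ as the eigenspace of the corresponding nonzero eigenvalue $\lambda_\ell$.

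The second step is projection and readout. Given the subspaces $\Ima(M_\ell)$, for each $\ell$ construct a linear map $\pi_\ell : \mathbb{K}^m \to \mathbb{K}^m$ that is the identity on $\Ima(M_\ell)$ and vanishes on $\bigoplus_{k\ne\ell} \Ima(M_k) \oplus \ker(A^+)$. Contracting $\pi_\ell$ with the first mode of $T$ produces the tensor $\sum_k (\pi_\ell M_k)\otimes w_k = M_\ell\otimes w_\ell$, since $\Ima(M_k) \subseteq \ker(\pi_\ell)$ for $k\ne\ell$ while $\pi_\ell$ fixes $\Ima(M_\ell)$ pointwise. From this simple tensor, every frontal slice is a scalar multiple of $M_\ell$, so normalising any nonzero frontal slice recovers $M_\ell$ up to scale, and the corresponding scalars in the remaining slices are exactly the coordinates of $w_\ell$.

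The main obstacle is the genericity analysis for the random choice of $(x,y)$, which I expect to handle via Schwartz--Zippel. The algorithm needs (a) every $\langle w_\ell, y\rangle\ne 0$, (b) every $\langle w_\ell, x\rangle\ne 0$, and (c) the $q$ ratios $\lambda_\ell$ pairwise distinct, each being the nonvanishing at $(x,y)$ of a fixed polynomial. Conditions (a) and (b) hold generically because every $w_\ell$ is nonzero, while (c) reduces to the nonvanishing of
$$\prod_{\ell < k}\bigl(\langle w_\ell,x\rangle\langle w_k,y\rangle-\langle w_k,x\rangle\langle w_\ell,y\rangle\bigr),$$
whose $(\ell,k)$-factor is a nonzero bilinear form exactly because $w_\ell$ and $w_k$ are linearly independent---this is where the non-colinearity clause of \cref{def:matrix_vector} is used. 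A standard Schwartz--Zippel bound then shows that a uniformly random draw from a polynomially large finite set succeeds with probability $1-o(1)$, delivering a randomized polynomial-time algorithm as required.
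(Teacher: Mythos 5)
Your proof is correct for $\mathbb{K} = \mathbb{R}$ or $\mathbb{C}$, and it is close in spirit to the paper's approach, but there are two points worth flagging.

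First, the structure of your algorithm matches the paper's: compute the spaces $\Ima(M_\ell)$ via a simultaneous-diagonalization step, then use that knowledge to isolate each pair $(M_\ell, w_\ell)$. Your step 1, using $T(x)T(y)^+ = A\,D(x)D(y)^{-1}A^+$, is essentially the paper's alternative Proposition~\ref{prop:image} (the Moore--Penrose version). Your step 2 differs in presentation: you build for each $\ell$ a projector $\pi_\ell$ that is the identity on $\Ima(M_\ell)$ and annihilates $\bigoplus_{k\neq\ell}\Ima(M_k)\oplus\ker(A^+)$, then contract $\pi_\ell$ into the first mode of $T$ to recover the simple tensor $M_\ell\otimes w_\ell$. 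The paper instead applies a single global change of basis $A\in GL_m(\mathbb{K})$ sending each $\Ima(M_\ell)$ to a coordinate block (the ``disjoint rows'' normal form, Algorithm~\ref{algo:disjointrows}), after which all $q$ pairs can be read off blockwise. These are essentially the same idea; your version processes one $\ell$ at a time, the paper's processes all at once. Your genericity analysis via Schwartz--Zippel is the same as the paper's and is correct.

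Second, and more substantively: the theorem is stated over an arbitrary field $\mathbb{K}$, and your proof does not cover that generality. The Moore--Penrose pseudoinverse requires an inner product, so your step 1 only makes sense over $\mathbb{R}$ or $\mathbb{C}$ (the paper acknowledges this restriction when presenting its own pseudoinverse alternative in the appendix). The paper's main Algorithm~\ref{algo:images} avoids the pseudoinverse entirely: it computes the rank $r$, extracts a full-rank $r\times r$ submatrix $M_a$ of $T_a$ and the matching submatrix $M_b$ of $T_b$, and finds the nontrivial generalized eigenvalues of the pencil $(T_a,T_b)$ as roots of $\det(M_a-\lambda M_b)$ (justified by Proposition~\ref{prop:eigen}); the eigenspaces are then $\ker(T_a-\lambda_i T_b)$ and the images are $T_b(V_i)$. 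This works over any field (given a root-finding oracle), and over $\mathbb{Q}$ it runs in polynomial time on a Turing machine because the eigenvalues turn out to be rational. If you want your proof to match the stated generality, you would need to replace the pseudoinverse step with such a generalized-eigenvalue computation; once the subspaces $\Ima(M_\ell)$ and a complement of their sum are in hand, your projector-based readout step works over any field without change.
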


Note that our decomposition algorithm applies under the same conditions as in \cref{th:unique1} (or equivalently \cref{th:unique2}).

When $\kk=\mathbb{Q}$, our algorithm can be implemented in the Turing machine model of computation, and it runs in time polynomial in the bit size of the input tensor $T$ (Remark~\ref{rem:time}). 

In the case of general fields $\kk$, we assume that we have access to an algorithm for the computation of the roots of a univariate polynomial with coefficients in $\kk$ (for step 4 of Algorithm~\ref{algo:images}). 
This is a fairly standard assumption in the study of tensor decomposition algorithms.
For a more thorough discussion of this issue, see \cite[Section~1.4]{koi24overcomplete}.

With the above algorithmic result at hand, we can state our third result: an efficient algorithm which finds all the matrices of minimum rank in subspaces of matrices with a basis satisfying certain special properties.

\begin{restatable}[Minrank algorithm]{theorem}{thmMinRank}\label{th:minrank}
Suppose that $V \subset M_{m,n}(\mathbb{K})$ is a subspace spanned by a basis $M_1,\ldots,M_p$ (the \say{hidden basis}) where the linear spaces $\Ima(M_1),\ldots,\Ima(M_p)$ are in direct sum, and where  the linear spaces $\Ima(M_1^T),\ldots,\Ima(M_p^T)$ are also in direct sum.

There is a randomized, polynomial-time algorithm (Algorithm~\ref{algo:minrank}) such that, when given as input any basis $Z_1,\ldots,Z_p$ of $V$, it correctly finds the hidden basis and it outputs 
$\rho=\min_{M \in V, M\neq 0} \rk M$, as well as matrices $A_1,\dots,A_s$ which are, up to scalar multiplication, the only matrices of rank $\rho$ in $V$.
Moreover, the matrices $A_i$ are a subset of the hidden basis.
\end{restatable}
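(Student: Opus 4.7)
The plan is to reduce the problem to an application of Theorem \ref{theorem: uniqueness algorithm}. Given the input basis $Z_1, \ldots, Z_p$ of $V$, I would form the tensor $T := \sum_{j=1}^p Z_j \otimes e_j \in \kk^{m \times n \times p}$, where $e_1, \ldots, e_p$ denotes the standard basis of $\kk^p$. Let $A \in GL_p(\kk)$ be the (unknown) change-of-basis matrix satisfying $Z_j = \sum_i A_{ij} M_i$; substituting yields
$$T = \sum_{i=1}^p M_i \otimes w_i,$$
where $w_i \in \kk^p$ is the vector with $j$-th coordinate $A_{ij}$ (i.e., the transpose of the $i$-th row of $A$). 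Since $A$ is invertible, the $w_i$ are linearly independent, hence nonzero and pairwise non-colinear, so this is a valid matrix-vector decomposition in the sense of \cref{def:matrix_vector}. By the hypothesis on $V$, both $\{\Ima(M_i)\}_i$ and $\{\Ima(M_i^T)\}_i$ are in direct sum, so $T$ satisfies the hypotheses of Theorem \ref{theorem: uniqueness algorithm}.

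Running Algorithm \ref{algo:main} on $T$ then returns, in randomized polynomial time, matrices that coincide with $M_1, \ldots, M_p$ up to scalar multiples and permutation---exactly the hidden basis, modulo these inherent ambiguities. To finish, I would identify the minimum-rank matrices in $V$ from the recovered basis. Write an arbitrary $M \in V$ as $M = \sum_i c_i M_i$. Because the images $\Ima(M_i)$ are in direct sum, $Mx = 0$ forces $c_i M_i x = 0$ for every $i$, so $\ker(M) = \bigcap_{i: c_i \neq 0} \ker(M_i)$. Combined with $\ker(M_i) = \Ima(M_i^T)^\perp$ and the direct-sum hypothesis on the row spaces, this gives $\dim \ker(M) = n - \sum_{i : c_i \neq 0} \rk(M_i)$, hence
$$\rk(M) = \sum_{i : c_i \neq 0} \rk(M_i).$$
In particular $\rk(M) \geq \min_i \rk(M_i) =: \rho$, with equality if and only if $M$ is a nonzero scalar multiple of some $M_i$ with $\rk(M_i) = \rho$. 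The algorithm therefore sets $\rho := \min_i \rk(M_i)$ and outputs those recovered basis elements $M_i$ of rank $\rho$.

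The conceptual content lies entirely in the reduction above: encoding the unknown change of basis $A$ into the third mode of a tensor converts the \say{hidden basis} structure on $V$ into a matrix-vector decomposition whose uniqueness conditions match the direct-sum hypotheses on $V$ exactly. Once this is observed, constructing $T$, invoking Theorem \ref{theorem: uniqueness algorithm}, and performing the short rank computation are all routine, and the overall running time is randomized polynomial by Theorem \ref{theorem: uniqueness algorithm}. I do not anticipate any substantive technical obstacle.
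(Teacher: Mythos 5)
Your proposal is correct and follows essentially the same route as the paper: form the tensor whose $3$-slices are $Z_1,\ldots,Z_p$, observe that the change-of-basis matrix $A \in GL_p(\mathbb{K})$ turns this into the matrix-vector decomposition $T = \sum_i M_i \otimes w_i$ with linearly independent $w_i$, and invoke the decomposition algorithm (\cref{theorem: uniqueness algorithm}) to recover the hidden basis up to scaling and permutation.

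The one place you diverge slightly is the final rank argument. The paper factors this into two modular steps (\cref{th:smallest} and \cref{th:uniquerank}): first that the column-image direct sum alone forces $\ker M \subseteq \ker M_j$ for some $j$ in the support of $M$, giving $\rk M \geq \min_i \rk M_i$; then, separately, that the $N^{i,j}$-rank condition (implied by the row-image direct sum) rules out any $M$ with two or more nonzero coefficients achieving the minimum. You instead derive the exact formula $\rk(M) = \sum_{i : c_i \neq 0} \rk(M_i)$ in one shot by combining both direct-sum hypotheses via $\ker M = \bigcap_{c_i \neq 0} \ker M_i = \bigl(\sum_{c_i \neq 0} \Ima(M_i^T)\bigr)^{\perp}$. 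This is a clean and somewhat stronger statement; the only caveat is that over a general field $\mathbb{K}$ you should make explicit that $\perp$ means the annihilator under the bilinear pairing $(x,y) \mapsto x^T y$ rather than an inner-product orthogonal complement, so that $\dim W^\perp = n - \dim W$ still holds. With that understanding the computation is correct and the conclusion matches the paper's.
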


The above theorem follows as a corollary of our decomposition algorithm. 
Given as input a basis $Z_1,\ldots,Z_p$ of $V$, we obtain a matrix-vector decomposition of the tensor $T \in \mathbb{K}^{m \times n \times p}$ obtained by ``stacking up'' the matrices $Z_1,\ldots,Z_p$ (i.e., these matrices are the 3-slices of $T$).
From this matrix-vector decomposition, we show that the direct sum conditions imply that the matrices of minimum rank in this decomposition are the desired minimum rank matrices.
A proof of the above theorem and of the aforementioned claims can be found in \cref{sec:minrank}.

\paragraph{Genericity of hypotheses.} It is worth noting that the hypotheses in all of the theorems stated above are \emph{generic properties}, in the algebraic geometric sense (which we discuss in more detail in \cref{section: preliminaries}). 
In particular, they will work for randomly chosen tensors of rank $r$ or vector spaces of matrices spanned by matrices $M_1, \dots, M_p$ randomly chosen such that
$$\sum_{i=1}^p \rk{M_i} \leq \min(m,n).$$

\subsection{Previous \& related works} \label{sec:previous}

\paragraph{Comparison with Jennrich's algorithm}
Let $T_1,\ldots,T_p$ be the 3-slices of the input tensor $T \in \kk^{m \times n \times p}$. 
Each slice is an $m \times n$ matrix.
Assume first for simplicity that one of the slices is invertible, for instance $T_1$. 
This implies that $m=n=r$ in \cref{th:jennrich}. 
In this case, the matrices $T_k T_1^{-1}$ for $k=2,\ldots,p$ turn out to be the simultaneously diagonalizable and the $u_i$ are the eigenvectors. 
This can be exploited algorithmically as follows: we compute two random linear combinations of the  3-slices. 
With high probability all the eigenvalues of $T_a T_b^{-1}$ are distinct, and the $u_i$ are the (unique) eigenvectors. 
The $v_i$ can computed with a similar procedure, and each $u_i$ can be paired with the corresponding $v_i$ by comparing the respective eigenvalues. Finally, the $w_i$ can be obtained by solving a linear system.
When $\kk = \rr$ or $\kk = \cc$ and we no longer assume that $m=n=r$, $T_b^{-1}$ can be replaced by the Moore-Penrose pseudoinverse $T_b^{\dagger}$.
For more details on the resulting \say{Jennrich} or \say{simultaneous diagonalization} algorithm refer to \cref{sec:simdiag}, where we have followed the presentation in~\cite[Section 3.3]{moitra18}.

As explained above, for our decomposition algorithm we drop condition~(iii) in \cref{th:jennrich}. 
The resulting algorithm can be thought of as a version of Jennrich's for multiple eigenvalues (the relevant matrices can have multiplicity greater than 1).
Also, we no longer have access to the Moore-Penrose inverse since we aim for a uniqueness theorem and a corresponding decomposition algorithm that apply to arbitrary fields. 
For this reason, instead of diagonalizing  $T_a T_b^{\dagger}$ we solve the generalized eigenvalue problem $T_a x = \lambda T_b x$. 
For the field of real and complex numbers, we also provide a version of the algorithm which uses the Moore-Penrose inverse (compare \cref{sec:images2,sec:images}).
Even when there are no multiple eigenvalues, as explained in \cref{rem:diff} this version of the algorithm slightly differs from the standard Jennrich algorithm as presented in~\cite[Section 3.3]{moitra18}.

When the matrices $T_k T_1^{-1}$ are simultaneously diagonalizable, they must in particular commute. 
In this paper we generalize Jennrich's uniqueness theorem and the corresponding decomposition algorithm from the case where there are no multiple eigenvalues to the case where the multiplicities can be larger than 1. 
In a forthcoming paper we plan to generalize this one step further, from the diagonalizable case to the case where the relevant matrices may no longer be diagonalizable, but still commute.

Another natural question to pursue is the computation of matrix-vector decompositions in the overcomplete setting. Two algorithms for the decomposition of generic tensors in that setting were recently given in~\cite{koi24overcomplete,kothari24}. In particular,
the algorithm in~\cite{koi24overcomplete} uses Jennrich's algorithm to decompose an auxiliary tensor $T'$ computed from the input tensor $T$.
If we can instead decompose $T'$ with the algorithm from the present paper, this would likely increase the range of applicability of the algorithm in~\cite{koi24overcomplete} (and of the corresponding uniqueness theorem).

\paragraph{Comparison with \cite{johnston23}.} 
The work of \cite{johnston23} studies the computational problem of determining the intersection of an algebraic variety with a generic linear subspace of appropriate dimension, such that the intersection is zero-dimensional.
Their motivation to study this problem is due to the fact that special cases of this problem have applications in quantum information theory and tensor decompositions.
Two of the main applications of the technical results of \cite{johnston23}, which are also related to our work, are \cite[Corollary 4]{johnston23} on the min-rank problem and \cite[Corollary 8]{johnston23} on the tensor decomposition problem. 

In Corollary 4, they give an algorithm that
takes as inputs $r \in \nn$, dimensions $n_1, n_2 > r$, and a basis for a linear subspace $\mathcal{U} \subset \kk^{n_1 \times n_2}$ of dimension 
$$R \leq \dfrac{\binom{n_1}{r+1} \cdot \binom{n_2}{r+1}}{(r+1)! \cdot \binom{n_1n_2 + r}{r+1}} \cdot (n_1n_2 + r) \leq \dfrac{n_1n_2}{((r+1)!)^2} $$
where $\mathcal{U}$ satisfies $\mathcal{U} := \Kspan{A_1, \dots, A_s} + \Kspan{B_1, \dots, B_{R-s}}$. 
Here, the $A_i$ are \emph{generic} matrices of rank at most $r$ and the $B_i$ are \emph{generic} matrices in $\kk^{n_1 \times n_2}$ (both $A_i, B_j$ are not known in advance).
The authors prove that $\mathcal{U}$ intersects the space of matrices of rank at most $r$ at exactly $s$ matrices (up to scalar multiples) and \cite[Algorithm 1]{johnston23} returns these elements (up to scalar multiples) in $(n_1 n_2)^{O(r)}$ time.

While their result yields polynomial time algorithms for any constant $r$, and in this regime for spaces of dimension $O(n_1n_2)$, as soon as $r$ is not constant the running time of their algorithm rapidly deteriorates, and as soon as $r \sim \frac{1}{2} \cdot \log(n_1n_2)$ their algorithm stops working due to the rank condition never being satisfied.
In comparison, our \cref{th:minrank} (and \cref{rem:minrank}) works for generic instances with input $\mathcal{U} := \Kspan{A_1, \dots, A_R}$ where the matrices $A_i$ (which can be thought of as the \say{hidden basis}) satisfy the following inequality: 
$$ \sum_{i=1}^R \rk A_i \leq \min\{n_1, n_2\}. $$
In the above case, our algorithm fully recovers the hidden basis in polynomial time.
In particular, our algorithm improves \cite[Corollary 4]{johnston23} in the \say{pure low rank} version of their problem (i.e., when $\mathcal{U}$ is only generated by low-rank matrices -- the $A_i$'s) whenever $r \gtrsim \log(n_1n_2)$. 

In \cite[Corollary 8]{johnston23}, the authors show that their main algorithm (\cite[Algorithm 1]{johnston23}) can compute in randomized polynomial time the tensor rank along with the tensor rank decomposition of generically chosen tensors $T \in \kk^{n_1} \otimes \kk^{n_2} \otimes \kk^{n_3}$ of rank upper bounded by $\min\left\{ \dfrac{(n_1 - 1)(n_2-1)}{4}, n_3 \right\}$. 
Their main algorithm is remarkably simple in nature, as it simply consists of applying Jennrich's algorithm to a suitable 3-tensor (in spaces of larger dimensions) constructed from $T$.
The generic conditions in their paper (\cite[Proposition 25]{johnston23}) are distinct from the ones we consider in this paper, and it is easy to see that no generic condition is a subset of the other whenever the parameters allow for both to occur. 
However, it is worth noting that their genericity assumption holds for random tensors of rank upper bounded by $\min\left\{ \dfrac{(n_1 - 1)(n_2-1)}{4}, n_3 \right\}$, which makes their elegant algorithm useful beyond what our approach can handle.

\section{Preliminaries}\label{section: preliminaries}

In this section we establish the technical preliminaries that we will need in the rest of the paper.
We begin with a discussion and precise definition of generic properties.

\paragraph{Genericity.} 
Throughout the paper, we use the term ``generic'' in its standard algebro-geometric sense: a set is generic if it contains a Zariski-open set (i.e., if it contains the complement of an algebraic set). 
More precisely, if $\kk$ is an infinite field,\footnote{If $\kk$ is finite, we can replace it by its algebraic closure.} we say that a property generically holds true in $\kk^N$ if there is a nonidentically zero polynomial $P(x_1,\ldots,x_N)$ such that the property holds true for all $x \in \kk^N$ such that $P(x_1,\ldots,x_N) \neq 0$.
In particular, if $\kk$ is the field of real or complex numbers, the set of points that do not satisfy the property has measure 0. 
 
We also use \say{generic linear combinations} in the presentation of the decomposition algorithm in order to emphasize the connection with the uniqueness theorem. 
It is more algorithmically realistic to choose the coefficients $a_1,\ldots,a_k$ from a finite set. 
This yields a randomized algorithm, and its probability of error can be made as small as desired by increasing the size of this set (Remark~\ref{rem:random}).

\subsection{The Generalized Eigenvalue Problem}
\label{sec:eigen}

\begin{definition}
For two matrices $A,B \in M_{m,n}(\kk)$, a nonzero vector $v \in \kk^n$ is a (generalized) eigenvector of the pair $(A,B)$ if there exists $\lambda \in \kk$ such that $Av = \lambda Bv$. Here, $\lambda$ is the corresponding (generalized) eigenvalue.
\end{definition}
The set of matrices of the form $A-\lambda B$ where $\lambda$ ranges over $\kk$ is traditionally called a (linear) {\em matrix pencil}. We use the same notation $(A,B)$ to denote a pair of matrices or the corresponding matrix pencil.

The (generalized) eigenspace associated to the (generalized) eigenvalue $\lambda$ is the set of vectors  $v \in \kk^n$ 
such that $Av = \lambda Bv$. This is indeed a linear subspace, like a traditional  eigenspace.
Despite this similarity, generalized eigenvectors and eigenvalues behave somewhat differently than the traditional ones. For instance, a nonzero vector
in $\ker A \cap \ker B$ can be viewed as an eigenvector associated to the eigenvalue $\lambda$ for {\em any} $\lambda \in \kk$.
These vectors will therefore belong to the eigenspace associated to  $\lambda$ for all $\lambda \in \kk$.
This motivates the following definition.
\begin{definition}
We say that an eigenvalue $\lambda$ of the pair $(A,B)$ is {\em nontrivial} if there exists an eigenvector $v$ associated to 
$\lambda$ such that $Bv \neq 0$.
\end{definition}
In particular, if $\ker B = \{0\}$ all eigenvalues are nontrivial. If $\ker B \neq \{0\}$, the pencil $(A,B)$
 is said to have {\em eigenvalues at infinity}. If $\ker A \cap \ker B \neq \{0\}$ the pencil is said to be degenerate.
 The pencil $(A,B)$ can also be viewed as a tensor of format $m \times n \times 2$. As pointed out in~\cite[Exercise~19.9]{BCS}, a pencil is nondegenerate if and only if the 2-slices of the corresponding tensor are linearly independent (the corresponding tensor is 
 said to be {\em 2-concise} in this case). We will not use this fact in the sequel.

When $m=n$ and $B$ is invertible, the eigenvalues/eigenvectors of the pair $(A,B)$ are those of the matrix $B^{-1}A$.
Applying this observation requires the computation of a matrix inverse.
An inverse-free algorithm for matrix pencil diagonalization can be found in~\cite{demmel23}.
In full generality, the generalized eigenvalue problem can be solved
by computing the Kronecker normal form of the matrix pencil (see e.g.~\cite{vanDooren79}).
In our case, however, there is a much simpler solution due to the
special structure of the matrices involved (see Section~\ref{sec:images}
and especially Proposition~\ref{prop:eigen} in Section~\ref{sec:images2}).

\subsection{Facts about matrix-vector decompositions}\label{sec:matrix}

We now establish basic facts about matrix-vector decompositions 
(recall that this notion was introduced in Definition~\ref{def:matrix_vector}).

\begin{proposition} \label{prop:smallest}
The smallest rank of a matrix-vector decomposition is equal to the tensor rank of $T$.
\end{proposition}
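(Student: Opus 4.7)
The plan is to prove the two inequalities separately: the matrix-vector rank is at most the tensor rank, and the tensor rank is at most the matrix-vector rank. Both directions rest on the identification $u \otimes v \otimes w = (uv^T) \otimes w$, which shows how a rank-one term in the classical decomposition is a special matrix-vector term with matrix $uv^T$ of rank one.

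For the easy direction (tensor rank $\leq$ matrix-vector rank), I start with a matrix-vector decomposition $T=\sum_{\ell=1}^q M_\ell \otimes w_\ell$ and write each matrix $M_\ell$ as a minimal sum of rank-one matrices $M_\ell = \sum_{j=1}^{\rk(M_\ell)} a_{\ell,j} b_{\ell,j}^T$. Substituting produces a tensor decomposition of $T$ as a sum of $\sum_\ell \rk(M_\ell)$ rank-one tensors, so the tensor rank is at most the rank of the matrix-vector decomposition.

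For the reverse direction, I start with a minimal tensor decomposition $T=\sum_{i=1}^r u_i \otimes v_i \otimes w_i$ of $T$, and partition the indices $\{1,\ldots,r\}$ into colinearity classes $I_1,\ldots,I_q$ of the $w_i$'s. Picking a representative $w_\ell$ from each class and absorbing the scalars into the $u_i$'s, I set $M_\ell := \sum_{i \in I_\ell} u_i v_i^T$. Then $T = \sum_\ell M_\ell \otimes w_\ell$, with the $w_\ell$ pairwise non-colinear by construction. Discarding any index $\ell$ for which $M_\ell = 0$ (this can only decrease the total) yields a valid matrix-vector decomposition in the sense of \cref{def:matrix_vector}. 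Since each $M_\ell$ is a sum of $|I_\ell|$ rank-one matrices, we get $\rk(M_\ell) \leq |I_\ell|$, and summing over $\ell$ gives the total rank of the matrix-vector decomposition at most $\sum_\ell |I_\ell| = r$, the tensor rank.

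I do not foresee any real obstacle here; both constructions are essentially bookkeeping. The only tiny subtlety is checking that the $w_\ell$ in the constructed matrix-vector decomposition can be made pairwise non-colinear (handled by the colinearity-class grouping) and that terms where $M_\ell$ vanishes are legitimately removed without breaking the bound.
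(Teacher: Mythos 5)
Your proof is correct and follows essentially the same approach as the paper: one direction expands each $M_\ell$ into $\rk(M_\ell)$ rank-one matrices, and the other direction groups colinear $w_i$'s and absorbs the scalar multiples into the matrices. Your write-up is a bit more careful about discarding zero matrices and explicitly bounding $\rk(M_\ell) \leq |I_\ell|$, but there is no substantive difference from the paper's argument.
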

\begin{proof}
Suppose that $T$ has a matrix-vector decomposition of rank $r$. Writing down each matrix $M_l$ in this decomposition 
as a sum of $\rk(M_l)$ matrices of rank one shows that $\rk(T) \leq r$. Assume conversely that $T$ admits a decomposition as a sum of $r$ rank-1 tensors  as in~(\ref{eq:decomp}). If there are no colinear vectors in the list $w_1,\ldots,w_l$, (\ref{eq:decomp}) is already a matrix-vector decomposition with $M_i=u_i \otimes v_i = u_i v_i^T$.
 If some of these vectors are colinear, 
 we can factorize using the rule:
$$\sum_{\ell} u_{\ell} \otimes v_{\ell} \otimes (\lambda_{\ell} w) =
 (\sum_{\ell} \lambda_{\ell}  u_{\ell} \otimes v_{\ell}) \otimes w.$$
 This results in a matrix-vector decomposition of rank at most $r$.
\end{proof}

Let $r=\rk(T)$. We say that $T$ has a unique matrix-vector decomposition of rank $r$ if up to permutation, 
the terms $M_{\ell} \otimes w_{\ell}$ are the same in all  matrix-vector decompositions of $T$ of rank $r$.
We will show in Proposition~\ref{prop:rk12matrix} that uniqueness of decomposition as a sum of rank-1 tensors implies
uniqueness of matrix-vector decompositions. The proof relies on the following well-known lemma (see e.g.~\cite{leurgans93}).
\begin{proposition} \label{lem:rank2}
The decomposition of a matrix of rank 2 as the sum of two matrices of rank 1 is never unique.
\end{proposition}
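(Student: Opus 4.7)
The plan is to exhibit, for any rank-$2$ matrix $M$ and any decomposition $M = x_1 y_1^T + x_2 y_2^T$ into two rank-$1$ summands, a one-parameter family of alternative decompositions, and then to check that at most finitely many members of that family could coincide with the given one. First I would argue that in any such decomposition the vectors $x_1, x_2$ are linearly independent and so are $y_1, y_2$: if either pair were dependent, one could absorb the scaling into the other factor and obtain a rank-$1$ matrix, contradicting $\rk(M) = 2$.

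Next I would encode the decomposition as a matrix factorization. Setting $X = [x_1 \mid x_2] \in M_{m,2}(\kk)$ and $Y = [y_1 \mid y_2] \in M_{n,2}(\kk)$, both of which have full column rank by the previous step, one has $M = X Y^T$. For every invertible $P \in GL_2(\kk)$, the identity $M = (XP)(P^{-1} Y^T) = (XP)(YP^{-T})^T$ yields a new rank-$2$ decomposition $M = x_1' {y_1'}^T + x_2' {y_2'}^T$ where $[x_1' \mid x_2'] = XP$ and $[y_1' \mid y_2'] = Y P^{-T}$. This gives an entire family of candidates indexed by $GL_2(\kk)$.

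Then I would specialize $P$ to produce an honestly different decomposition. A concrete choice is $P = \bigl( \begin{smallmatrix} 1 & 1 \\ 0 & 1 \end{smallmatrix} \bigr)$, which yields $x_1' = x_1$, $x_2' = x_1 + x_2$, $y_1' = y_1 - y_2$, $y_2' = y_2$, so that $M = x_1 (y_1 - y_2)^T + (x_1 + x_2) y_2^T$. To finish I would verify that this is not a permutation of the original decomposition as a multiset of rank-$1$ tensors: the new summand $x_1 (y_1 - y_2)^T$ cannot equal $x_2 y_2^T$ (since $x_1, x_2$ are independent), and it cannot equal $x_1 y_1^T$ because that would force $y_2 = 0$, contradicting $\rk(x_2 y_2^T) = 1$.

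The only potentially delicate step is the last verification, since one must rule out the edge cases where the perturbation happens to recover the same pair of rank-$1$ matrices; but the linear independence of $\{x_1, x_2\}$ and of $\{y_1, y_2\}$ established in the first step makes this straightforward. In fact the construction shows something slightly stronger: varying $P$ over $GL_2(\kk)$ produces an infinite family of distinct decompositions of $M$ whenever $\kk$ is infinite.
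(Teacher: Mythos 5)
Your proof is correct and takes essentially the same route as the paper: both exhibit a concrete shear-perturbed decomposition (the paper uses $M = u_1(v_1+v_2)^T + (u_2-u_1)v_2^T$, which is the $GL_2$-image of the given decomposition under $P = \bigl(\begin{smallmatrix}1 & -1\\ 0 & 1\end{smallmatrix}\bigr)$, the inverse of your choice) and then rule out equality with the original decomposition via linear-independence arguments on the $u_i$'s and $v_i$'s. Your $GL_2(\kk)$ framing is a pleasant conceptual packaging of the same idea and makes it immediate that over an infinite field there is in fact a continuous family of distinct decompositions, but the core verification is identical.
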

For completeness, we give a proof of this lemma below (note that it applies to an arbitrary field).
\begin{proof}
Let $M=u_1 v_1^T+u_2 v_2^T$ be a matrix of rank 2. Since $M=u_1(v_1+v_2)^T+(u_2-u_1)v_2^T$, it suffices to show that
$$\{u_1 v_1^T,u_2 v_2^T\} \neq \{u_1(v_1+v_2)^T,(u_2-u_1)v_2^T\}.$$
Suppose that $u_1 v_1^T = u_1(v_1+v_2)^T$. This implies $u_1v_2^T=0$, i.e., $u_1=0$ or $v_2=0$. This is impossible since $M$ would then be of rank less than 2. The equality $u_1 v_1^T = (u_2-u_1)v_2^T$ is also impossible. Indeed, the columns of
$u_1 v_1^T$ are colinear to $u_1$ and those of $(u_2-u_1)v_2^T$ are colinear to $u_2-u_1$. Then $u_1$ would be colinear 
to $u_2$, and $M$ would again be of rank less than 2. 
\end{proof}

\begin{proposition} \label{prop:rk12matrix}
Suppose that a tensor of rank $r$ admits a unique decomposition 
\begin{equation} \label{eq:unique}
T=\sum_{\ell=1}^r (u_{\ell} \otimes v_{\ell}) \otimes w_{\ell}
\end{equation}  as a sum of $r$ tensors of rank one.
This is also the unique matrix-vector decomposition of $T$ of rank $r$. In particular, there are no colinear vectors in
the list $w_1,\ldots,w_{\ell}$.
\end{proposition}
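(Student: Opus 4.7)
The plan is to reduce matrix-vector decompositions to standard rank-1 decompositions and to lean on Proposition~\ref{lem:rank2} as the key lever. I would proceed in two stages.

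First, I would show that no two vectors in $w_1,\dots,w_r$ are colinear. Suppose for contradiction that $w_{i'} = \lambda w_i$ for distinct indices $i,i'$. Grouping the two corresponding rank-1 terms gives
\[
u_i \otimes v_i \otimes w_i + u_{i'} \otimes v_{i'} \otimes w_{i'} = \bigl(u_i v_i^T + \lambda\, u_{i'} v_{i'}^T\bigr) \otimes w_i.
\]
The matrix $M := u_i v_i^T + \lambda u_{i'} v_{i'}^T$ has rank exactly $2$: it is at most $2$ as a sum of two rank-$1$ matrices, and if it had rank $\leq 1$ then the two rank-1 tensors above would be proportional, so $T$ would admit a rank-1 decomposition of length $r-1$, contradicting $\rk(T)=r$. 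Proposition~\ref{lem:rank2} then furnishes a genuinely different pair of rank-1 matrices whose sum equals $M$. Substituting that pair back into~(\ref{eq:unique}) produces a second length-$r$ rank-1 decomposition of $T$ that differs from the original one, contradicting the assumed uniqueness.

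Second, I would establish uniqueness among matrix-vector decompositions of rank $r$. Given any such decomposition $T = \sum_{\ell=1}^q N_\ell \otimes w''_\ell$ with $\sum_\ell \rk(N_\ell) = r$, I would write each $N_\ell$ as a sum of $\rk(N_\ell)$ rank-$1$ matrices, producing a rank-1 decomposition of $T$ of length exactly $r$. By the uniqueness hypothesis this must coincide, up to permutation and scaling, with $\{u_i \otimes v_i \otimes w_i\}_{i=1}^r$. Every rank-1 summand derived from $N_\ell$ carries $w''_\ell$ in the third slot, hence $w''_\ell$ must be colinear to some $w_i$. The first stage forbids colinearity among the $w_i$'s, and the definition of matrix-vector decomposition forbids colinearity among the $w''_\ell$'s, so the matching is a bijection. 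Consequently each $N_\ell$ consists of a single rank-1 matrix, $q = r$, and after relabelling $N_\ell \otimes w''_\ell = (u_\ell v_\ell^T) \otimes w_\ell$ term by term, which is exactly what uniqueness of the matrix-vector decomposition means.

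The main obstacle is the first stage. The second stage is essentially bookkeeping once non-colinearity of the $w_i$'s is secured; the care needed in the first stage is to verify that the auxiliary matrix $M$ really has rank $2$ so that Proposition~\ref{lem:rank2} applies, and that the replacement truly yields a distinct rank-1 decomposition of $T$ (not merely a distinct decomposition of $M$) — but both checks follow immediately from $\rk(T)=r$ and the fact that the remaining $r-2$ rank-1 terms of $T$ are left untouched.
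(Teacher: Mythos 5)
Your proof takes essentially the same route as the paper: non-colinearity of the $w_i$'s via the rank-2 matrix lemma (Proposition~\ref{lem:rank2}), then uniqueness by expanding a competing matrix-vector decomposition into rank-one terms and invoking the uniqueness hypothesis. One small slip: the intermediate claim that $\rk(M)\le 1$ would force the two rank-one tensors to be proportional is not true in general (e.g.\ $e_1e_1^T + e_2e_1^T$ has rank $1$ with non-proportional summands); fortunately it is also unnecessary, since $\rk(M)\le 1$ already yields directly a length-$(r-1)$ rank-one decomposition of $T$, contradicting $\rk(T)=r$, which is exactly the minimality argument the paper uses.
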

\begin{proof}
We first show that there are no colinear vectors in this list. 
 Assume for instance that $w_1$ and $w_2$ are colinear. We can assume that $w_1=w_2$ by scaling 
$v_1$ and $w_1$ (or $v_2$  and $w_2$)
 if necessary. The sum of the first two terms in~(\ref{eq:unique}) is equal to $M \otimes w_1$ 
where $M=u_1 \otimes v_1 + u_2 \otimes v_2$. This matrix must be of rank 2 by the minimality of~(\ref{eq:unique}).
By Lemma~\ref{lem:rank2}, there is another decomposition $M=u'_1 \otimes v'_1 + u'_2 \otimes v'_2$.
Replacing the first two terms of~(\ref{eq:unique}) by $u'_1 \otimes v'_1 \otimes w_1$ and $u'_2 \otimes v'_2 \otimes w_1$
yields a rank $r$ decomposition of $T$ which is not equivalent to~(\ref{eq:unique}).

We have thus shown that there are no colinear vectors among $w_1,\ldots,w_{\ell}$; hence~(\ref{eq:unique}) is a bona fide matrix-vector decomposition. It remains to show that any other matrix-vector decomposition of rank $r$, say,
\begin{equation} \label{eq:unique2}
T=\sum_{\ell} M_{\ell} \otimes w'_{\ell}
\end{equation}
must be equivalent to~(\ref{eq:unique}). We can expand each $M_{\ell}$ as a sum of $\rk(M_{\ell})$ matrices of rank 1. From these expansions and~(\ref{eq:unique2}) we obtain a decomposition of $T$ as a sum of $r$ tensors of rank 1.
This new decomposition must be equivalent to our first decomposition~(\ref{eq:unique}), which is assumed to be unique.
But: 
\begin{itemize}
\item[(i)] there are no colinear $w_{\ell}$ in~(\ref{eq:unique});
\item[(ii)] whereas the expansion of $M_{\ell}$ yields $\rk(M_{\ell})$ rank 1 tensors with the same third-mode vector $w'_{\ell}$ in our new decomposition of $T$ as a sum of rank 1 tensors.
\end{itemize}
Hence $\rk(M_{\ell}) = 1$ for all $\ell$. It follows that~(\ref{eq:unique}) and~(\ref{eq:unique2}) are equivalent as decompositions as sums of rank 1 tensors, and also as matrix-vector decompositions.
\end{proof}

\section{Uniqueness Theorems} \label{sec:unique}

After some preliminaries, we establish the equivalence between \cref{th:unique1,th:unique2} in Section~\ref{sec:state}. 
We then prove \cref{th:unique2} in Section~\ref{sec:proof}.
The proofs are based on the notion of (generalized) eigenvalues and eigenvectors for a matrix pencil, which can be found in Section~\ref{sec:eigen}.

\subsection{Equivalence of the uniqueness theorems}\label{sec:state}

Recall that a tensor $T \in \kk^{m \times n \times p}$ can be cut into $p$ "slices" $Z_1,\ldots,Z_p$ where each slice is a $m \times n$ matrix. 
These are the 3-slices of $T$.
One can also cut $T$ in the two other directions into its 1-slices and 2-slices. 
In this paper we will only work with the $3$-slices, and henceforth the term ``slices'' will refer to the 3-slices. 
It follows immediately from \cref{def:matrix_vector} that the slices are linear combinations of the matrices $M_{\ell}$ occuring in a decomposition of $T$, namely,
\begin{equation} \label{eq:slicematrix}
Z_k = \sum_{\ell =1}^q w_{\ell k}M_{\ell}.
\end{equation}
Moreover, it is well known that each slice can be expressed as a product of three matrices defined from the vectors occurring in a decomposition of $T$ in its traditional form~(\ref{eq:decomp}). 
Namely, if $\rk T = r$, we have
\begin{equation} \label{eq:slice}
Z_k = U D_k V^T,
\end{equation}
where $U$ is the $m \times r$ matrix having the $u_i$ as column vectors,  $V$ is the $r \times n$ matrix having the $v_i$ as column vectors, and $D_k = \diag(w_{1k},\ldots,w_{rk})$. 
This notation is consistent with e.g.~\cite{moitra18}; note however that the transposed notation $Z_k = U ^TD_k V$ was used
in~\cite{koi24overcomplete} 
(i.e., $U$ and $V$ were defined as the matrices having the $u_i$ and $v_i$ as {\em row vectors}).
We record the following simple consequence of~(\ref{eq:slicematrix}) and~(\ref{eq:slice}):
\begin{lemma} \label{lem:maxrank}
Any matrix in the span of the slices of $T$ is of rank at most $\rk(T)$.
\end{lemma}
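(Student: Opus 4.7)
The plan is to read off the bound directly from the factorization~(\ref{eq:slice}). First I would set $r = \rk(T)$ and fix any rank~$r$ decomposition $T = \sum_{i=1}^r u_i \otimes v_i \otimes w_i$ in the traditional sense; such a decomposition exists by definition of tensor rank. Relative to this fixed decomposition, equation~(\ref{eq:slice}) states that every slice factors as $Z_k = U D_k V^T$, where $U \in M_{m,r}(\kk)$ has columns $u_1,\ldots,u_r$, $V \in M_{n,r}(\kk)$ has columns $v_1,\ldots,v_r$, and $D_k = \diag(w_{1k},\ldots,w_{rk})$ is an $r \times r$ diagonal matrix.

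Next I would take an arbitrary element $Y$ in the span of the slices and write $Y = \sum_{k=1}^p a_k Z_k$ for some scalars $a_k \in \kk$. Substituting the factorization of each slice and pulling $U$ out on the left and $V^T$ out on the right gives $Y = U \bigl(\sum_{k=1}^p a_k D_k \bigr) V^T$. The middle factor is an $r \times r$ matrix (a linear combination of $r \times r$ diagonal matrices), so $Y$ is expressed as a product of three matrices whose inner dimensions are both equal to $r$. Consequently $\rk(Y) \leq r = \rk(T)$, since the image of $Y$ is contained in the image of $U$, which has at most $r$ columns.

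There is no real obstacle here; the lemma is essentially a one-line computation once~(\ref{eq:slice}) is invoked. The only minor point to note is that we do not need the $u_i$ or $v_i$ to be linearly independent: the rank bound $\rk(ABC) \leq \min(\rk A, \rk B, \rk C)$ is always valid, so the argument goes through for any rank decomposition of $T$, no matter how degenerate.
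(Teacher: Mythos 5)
Your proof is correct and matches the paper's primary argument: both factor an arbitrary element of the span as $U D V^T$ via equation~(\ref{eq:slice}) and then bound the rank by the size $r$ of the middle factor $D \in M_r(\kk)$. (The paper also records a second, equivalent route via matrix-vector decompositions and~(\ref{eq:slicematrix}), but your argument is precisely its first one.)
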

\begin{proof}
Let $r=\rk(T)$. By~(\ref{eq:slice}), any matrix $Z$ in the span of the $Z_k$ is of the form $Z = UDV^T$, where $D$ is a linear combination of the $D_k$. Since $D \in M_r(\mathbb{K})$, we have the upper bound $\rk(Z) \leq r$.

This can also be seen from a matrix-vector decomposition of $T$. 
Indeed, $T$ has a matrix-vector decomposition of rank $r$ by Proposition~\ref{prop:smallest}.
By~(\ref{eq:slicematrix}), $Z$ is a linear combination of the $M_{\ell}$ hence $\rk(Z) \leq \sum_{\ell} \rk(M_{\ell}) =r$.
\end{proof}

The following simple lemma, follows directly from~(\ref{eq:slicematrix})
\begin{lemma} \label{lem:mult}
Let $T \in \mathbb{K}^{m \times n \times p}$ be a tensor with slices $T_1,\ldots,T_p$ and a decomposition 
$T=\sum_{\ell=1}^q M_{\ell} \otimes w_{\ell}$. For any $A \in M_m(\mathbb{K})$, the tensor $T'$ with slices $AT_1,\ldots,AT_p$ 
admits the decomposition $T'=\sum_{\ell=1}^q (AM_{\ell}) \otimes w_{\ell}$.
\end{lemma}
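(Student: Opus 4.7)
The plan is to derive the claim directly from the slice formula~(\ref{eq:slicematrix}), which already expresses the $3$-slices of a tensor in terms of the matrices and vectors appearing in any matrix-vector decomposition. Indeed, since the correspondence between an order-$3$ tensor and its list of $3$-slices is a linear bijection, verifying the desired decomposition for $T'$ reduces to checking that its slices agree with those predicted by $\sum_{\ell=1}^q (AM_{\ell}) \otimes w_{\ell}$.

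Concretely, I would first apply~(\ref{eq:slicematrix}) to the given decomposition of $T$ to write $T_k = \sum_{\ell=1}^q w_{\ell k} M_{\ell}$ for each $k=1,\ldots,p$. Left-multiplying this identity by $A$ and using bilinearity of matrix multiplication in its right argument gives
\[
AT_k \;=\; A\Bigl(\sum_{\ell=1}^q w_{\ell k} M_{\ell}\Bigr) \;=\; \sum_{\ell=1}^q w_{\ell k}\, (AM_{\ell}).
\]
On the other hand, applying~(\ref{eq:slicematrix}) to the candidate decomposition $\sum_{\ell=1}^q (AM_{\ell}) \otimes w_{\ell}$ shows that its $k$-th slice is precisely $\sum_{\ell=1}^q w_{\ell k} (AM_{\ell})$. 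Since the two tensors have the same slices for every $k$, they are equal.

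There is essentially no obstacle here: the argument is a one-line consequence of~(\ref{eq:slicematrix}) combined with the linearity of left multiplication by $A$. The only minor caveat worth flagging is that the resulting expression need not be a matrix-vector decomposition in the strict sense of \cref{def:matrix_vector}, since some of the matrices $AM_{\ell}$ may vanish when $A$ has a nontrivial kernel intersecting $\Ima(M_{\ell})$; the statement should be read as an identity of tensors, with zero terms harmlessly dropped if a bona fide matrix-vector decomposition is desired (the non-colinearity of the $w_{\ell}$ is inherited unchanged from the original decomposition).
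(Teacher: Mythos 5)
Your argument is correct and is exactly the one the paper intends: the paper simply remarks that the lemma "follows directly from~(\ref{eq:slicematrix})," and your proof spells out that observation by left-multiplying the slice identity by $A$ and reading off the slices of the candidate decomposition. Your caveat that some $AM_{\ell}$ may vanish (so the result is an identity of tensors rather than necessarily a bona fide matrix-vector decomposition) is a sensible clarification, though the paper leaves it implicit.
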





\begin{lemma} \label{lem:rankr}
Let $\mathbb{K}$ be an arbitrary field. Let $(u_1,\ldots,u_r)$ and $(v_1,\ldots, v_r)$ be two families of  vectors of $\mathbb{K}^n$, respectively of rank $r_u$ and $r_v$. 
For the  matrix $M=\sum_{i=1}^r  u_i v_i^T$ we have $\rk M \leq \min(r_u,r_v)$. Moreover, if $r_u=r_v=r$ then $\rk M = r$ as well.
\end{lemma}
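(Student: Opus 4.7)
The plan is to reduce everything to the rank of a matrix product by packaging the $u_i$ and $v_i$ as columns. Let $U\in M_{n,r}(\mathbb{K})$ be the matrix with columns $u_1,\ldots,u_r$ and $V\in M_{n,r}(\mathbb{K})$ be the matrix with columns $v_1,\ldots,v_r$. Then $\rk U=r_u$ and $\rk V=r_v$ by definition, and a direct computation shows
\[
M=\sum_{i=1}^r u_i v_i^T = U V^T.
\]

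For the upper bound, I would argue at the level of images. Since $\Ima(M)\subseteq\Ima(U)$, we get $\rk M\leq \rk U=r_u$. Applying the same observation to the transpose $M^T=VU^T$ gives $\rk M=\rk M^T\leq \rk V=r_v$. Combining these yields $\rk M\leq \min(r_u,r_v)$.

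For the \emph{moreover} part, assume $r_u=r_v=r$. View $M=UV^T$ as a composition of linear maps $\mathbb{K}^n\xrightarrow{V^T}\mathbb{K}^r\xrightarrow{U}\mathbb{K}^n$. Because $\rk V^T=\rk V=r$, the map $V^T$ is surjective onto $\mathbb{K}^r$; because $\rk U=r$, the columns of $U$ are linearly independent, so $U$ is injective. Hence $\Ima(UV^T)=U(\Ima(V^T))=U(\mathbb{K}^r)$ has dimension $r$, giving $\rk M=r$. Together with the first part (which forces $\rk M\leq r$), this settles the equality.

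There is no real obstacle here: the lemma is a straightforward consequence of the factorization $M=UV^T$ and the basic identities $\rk(AB)\leq \min(\rk A,\rk B)$ and $\rk(AB)=\rk B$ when $A$ is injective on $\Ima(B)$. The only thing worth double-checking is that no hypothesis on $\mathbb{K}$ is used, which is clear since every step relies only on elementary dimension counting of images of linear maps, valid over any field.
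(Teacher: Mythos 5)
Your proof is correct. It differs from the paper's in two small but real ways. First, you package the vectors into matrices $U,V$ and invoke the factorization $M=UV^T$, whereas the paper works directly with the sum $\sum_i u_i v_i^T$ and re-expresses the $u_i$ in a basis to bound the number of rank-one terms; the two are morally equivalent, but yours makes the standard inequality $\rk(AB)\leq\min(\rk A,\rk B)$ do the work explicitly. Second, for the equality case, you argue on the image side ($V^T$ surjective, $U$ injective, so $\Ima(UV^T)=U(\mathbb{K}^r)$ has dimension $r$), while the paper argues on the kernel side (a vector $x$ lies in $\ker M$ iff $v_i^Tx=0$ for all $i$, which cuts out a subspace of dimension $n-r$). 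These are dual formulations of the same rank-nullity count, so the content is the same; your version is perhaps slightly more compact because the factorization is set up once and reused, while the paper's kernel computation is more self-contained and does not require introducing $U,V$.
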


\begin{proof}
In the expression for $M$ we can rewrite each $u_i$ as a linear combination of the elements of a basis $e_1,\ldots,e_{r_u}$. 
This yields an expression for $M$ as a sum of $r_u$ matrices of rank at most 1, namely, 
$$M=\sum_{i=1}^{r_u}  e_i w_i^T$$
where the $w_i$ are linear combinations of the $v_i$.
 Hence $\rk M \leq r_u$, and $\rk M \leq r_v$ by a similar argument.

Assume now that $r_u=r_v=r$. It remains to show that $\rk M = r$. This is equivalent to $\dim \ker M = n-r$. 
A vector $x \in \mathbb{K}^n$ is in the kernel
 if and only if $\sum_{i=1}^{r} (v_i^T x) u_i =0$. Since the $u_i$ are linearly independent, this is equivalent
 to $v_i^T x =0$ for all $i$. Using now the linear independence of the $v_i$, it follows that the solution space is of dimension $n-r$ as needed.
\end{proof}

\noindent We are now ready to show the equivalence of the two uniqueness theorems.

\begin{proposition} \label{prop:form1to2}
Let $T=\sum_{i=1}^r u_i \otimes v_i \otimes w_i$ be a tensor of format $m \times n \times p$ such that:
\begin{itemize}
\item[(i)] The vectors $u_i$ are linearly independent.
\item[(ii)] The vectors $v_i$ are linearly independent.
\item[(iii)] Every vector $w_i$ is nonzero.
\end{itemize}
Then $T$ has a matrix-vector decomposition 
$T=\sum_{\ell=1}^q M_{\ell} \otimes w'_{\ell}$ where  the linear spaces $\Ima(M_{\ell})$ are in direct sum,  
the linear spaces $\Ima(M_{\ell}^T)$ are in direct sum, 
$r=\sum_{\ell = 1}^q \rk(M_{\ell})$ and $\{w'_1,\ldots,w'_{\ell}\} \subseteq \{w_1,\ldots,w_r\}$.
\end{proposition}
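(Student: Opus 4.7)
The plan is to produce the matrix-vector decomposition by grouping the rank-one terms according to the colinearity classes of the vectors $w_i$. Since no $w_i$ is zero, partition $\{1,\ldots,r\}$ into equivalence classes $S_1,\ldots,S_q$ under the relation $i \sim j \iff w_i \text{ and } w_j \text{ are colinear}$. For each class $S_\ell$, choose a representative index $i_\ell \in S_\ell$ and set $w'_\ell = w_{i_\ell}$; then for every $i \in S_\ell$ there is a (necessarily nonzero) scalar $\lambda_i \in \mathbb{K}$ with $w_i = \lambda_i w'_\ell$. Define
\[
M_\ell \;=\; \sum_{i \in S_\ell} \lambda_i\, u_i v_i^T.
\]
By construction,
\[
T \;=\; \sum_{i=1}^r u_i \otimes v_i \otimes w_i
   \;=\; \sum_{\ell=1}^q \Bigl(\sum_{i \in S_\ell} \lambda_i u_i v_i^T\Bigr) \otimes w'_\ell
   \;=\; \sum_{\ell=1}^q M_\ell \otimes w'_\ell,
\]
and $\{w'_1,\ldots,w'_q\} \subseteq \{w_1,\ldots,w_r\}$ by construction, so it only remains to verify the three properties on the $M_\ell$.

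For the rank equality, I would apply \cref{lem:rankr} to each $M_\ell$. Since the $u_i$ (hence the $\lambda_i u_i$, as $\lambda_i \neq 0$) are linearly independent over all $i$, their restriction to $S_\ell$ is linearly independent; the same holds for the $v_i$. Thus \cref{lem:rankr} gives $\rk(M_\ell) = |S_\ell|$, so that $\sum_\ell \rk(M_\ell) = \sum_\ell |S_\ell| = r$. In particular each $M_\ell \neq 0$ and the $w'_\ell$ are pairwise non-colinear, so the expression is a bona fide matrix-vector decomposition in the sense of \cref{def:matrix_vector}.

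For the direct-sum conditions, I would identify the image spaces explicitly. Writing $M_\ell x = \sum_{i \in S_\ell} (v_i^T x)\, \lambda_i u_i$, clearly $\Ima(M_\ell) \subseteq \operatorname{span}(u_i : i \in S_\ell)$. Conversely, because the $v_i$ (for $i \in S_\ell$) are linearly independent, there exist vectors $x$ making $(v_i^T x)_{i \in S_\ell}$ any prescribed tuple of scalars; hence $\Ima(M_\ell) = \operatorname{span}(u_i : i \in S_\ell)$. Symmetrically, $\Ima(M_\ell^T) = \operatorname{span}(v_i : i \in S_\ell)$. Since the full families $(u_1,\ldots,u_r)$ and $(v_1,\ldots,v_r)$ are linearly independent and the classes $S_\ell$ partition $\{1,\ldots,r\}$, the spaces $\Ima(M_\ell)$ are in direct sum, and so are the spaces $\Ima(M_\ell^T)$.

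There is no real obstacle to speak of here: the argument is essentially combinatorial bookkeeping, and the only substantive input is \cref{lem:rankr} (plus the elementary observation that the column space of $\sum a_i b_i^T$ is exactly $\operatorname{span}(a_i)$ when the $b_i$ are linearly independent). The mildly delicate point worth double-checking is that $\lambda_i \neq 0$ — this is exactly where hypothesis (iii) ($w_i \neq 0$) enters the argument, ensuring that the restricted family $(\lambda_i u_i)_{i \in S_\ell}$ remains linearly independent so that \cref{lem:rankr} yields the expected rank.
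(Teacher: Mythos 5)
Your proof is correct and follows essentially the same route as the paper: group the rank-one terms by the colinearity classes of the $w_i$ and invoke \cref{lem:rankr} to obtain both the rank equality and the direct-sum conditions. The only minor presentational difference is that you apply \cref{lem:rankr} per block and identify each $\Ima(M_\ell) = \operatorname{span}(u_i : i \in S_\ell)$ explicitly, whereas the paper applies the lemma once to $M = \sum_\ell M_\ell$ and deduces $r = \sum_\ell \rk(M_\ell)$ together with the direct-sum property from a single dimension count; both verifications are equally sound.
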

\begin{proof}
Let us group together the $w_{\ell}$ that are colinear, like in the proof of Proposition~\ref{prop:smallest}. 
This yields a matrix-vector decomposition $T=\sum_{\ell=1}^q M_{\ell} \otimes w'_{\ell}$ where
 $\{w'_1,\ldots,w'_{\ell}\} \subseteq \{w_1,\ldots,w_r\}$. By Lemma~\ref{lem:rankr}, the matrix $M=\sum_{\ell = 1}^q M_{\ell}$
 is of rank $r$. This implies that $r=\sum_{\ell = 1}^q \rk(M_{\ell})$ 
 and that the  the linear spaces $\Ima(M_{\ell})$ are in direct sum.
  A similar reasoning for  $M^T=\sum_{\ell = 1}^q M_{\ell}^T$ shows that the linear spaces $\Ima(M_{\ell}^T)$ are in direct sum.
 \end{proof}
Here is a converse to this proposition.
\begin{proposition} \label{prop:form2to1}
Suppose that a tensor $T$  of format $m \times n \times p$ has a matrix-vector decomposition of the form:
$$T=\sum_{\ell=1}^q M_{\ell} \otimes w'_{\ell}$$
where the linear spaces $\Ima(M_{\ell})$ are in direct sum, 
and the linear spaces $\Ima(M_{\ell}^T)$ are also in direct sum. 
 Then $T$ has a decomposition of the form 
$T=\sum_{i=1}^r u_i \otimes v_i \otimes w_i$ where $r=\sum_{\ell = 1}^q \rk(M_{\ell})$ and:
\begin{itemize}
\item[(i)] The vectors $u_i$ are linearly independent.
\item[(ii)] The vectors $v_i$ are linearly independent.
\item[(iii)] Every vector $w_i$ is nonzero.
\end{itemize}
Moreover, we can assume that $w_i \in \{w'_1,\ldots,w'_{\ell}\}$ for all $i$.
\end{proposition}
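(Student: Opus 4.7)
The plan is to expand each $M_\ell$ as a sum of $\rk(M_\ell)$ rank-one matrices in a carefully chosen way, namely so that the column factors form a basis of $\Ima(M_\ell)$ and the row factors form a basis of $\Ima(M_\ell^T)$. Concatenating these rank-one factorizations across all $\ell$ will yield the desired rank-one decomposition of $T$, and the direct sum hypotheses will give the linear independence of the two families.

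More precisely, fix $\ell$ and let $r_\ell = \rk(M_\ell)$. I would write $M_\ell = U_\ell V_\ell^T$ where $U_\ell \in M_{m,r_\ell}(\kk)$ has columns $u_{\ell,1},\dots,u_{\ell,r_\ell}$ forming a basis of $\Ima(M_\ell)$; then $V_\ell \in M_{n,r_\ell}(\kk)$, with columns $v_{\ell,1},\dots,v_{\ell,r_\ell}$, is uniquely determined. Since $U_\ell^T$ is surjective onto $\kk^{r_\ell}$, we get $\Ima(M_\ell^T) = \Ima(V_\ell U_\ell^T) = \Ima(V_\ell)$, which has dimension $r_\ell$, so the $v_{\ell,j}$ are linearly independent and form a basis of $\Ima(M_\ell^T)$. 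Substituting the expansion $M_\ell = \sum_{j=1}^{r_\ell} u_{\ell,j} v_{\ell,j}^T$ into the given decomposition yields
\begin{equation*}
T = \sum_{\ell=1}^{q} \sum_{j=1}^{r_\ell} u_{\ell,j} \otimes v_{\ell,j} \otimes w'_{\ell},
\end{equation*}
which has $r = \sum_{\ell} r_\ell$ rank-one terms. I would then re-index this family as $(u_i,v_i,w_i)_{i=1}^r$ with every $w_i$ lying in $\{w'_1,\dots,w'_q\}$.

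It remains to check conditions (i)--(iii). For (iii), the vectors $w_i$ are among the $w'_\ell$, which are nonzero by the very definition of a matrix-vector decomposition. For (i), the full family $\{u_{\ell,j}\}_{\ell,j}$ is obtained by concatenating a basis of each $\Ima(M_\ell)$; since by hypothesis the subspaces $\Ima(M_\ell)$ are in direct sum, the concatenation is linearly independent. Condition (ii) follows identically from the direct sum hypothesis on the $\Ima(M_\ell^T)$, using the bases $\{v_{\ell,j}\}_j$ of $\Ima(M_\ell^T)$ constructed above.

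There is no real obstacle here; the only subtle point is choosing the rank-one expansion of each $M_\ell$ so that the row factors automatically land in $\Ima(M_\ell^T)$ as a basis. This is handled by using a full-rank factorization $M_\ell = U_\ell V_\ell^T$ with $U_\ell$ injective and $V_\ell$ of full column rank, which forces $\{v_{\ell,j}\}$ to be a basis of $\Ima(M_\ell^T)$ as shown above. Once this is in place, the two direct-sum hypotheses translate directly into the independence statements (i) and (ii).
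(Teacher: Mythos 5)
Your proof is correct and follows essentially the same route as the paper: write each $M_\ell$ as a sum of $\rk(M_\ell)$ rank-one matrices via a full-rank factorization and observe that the direct-sum hypotheses on $\Ima(M_\ell)$ and $\Ima(M_\ell^T)$ then give (i) and (ii) respectively. You make explicit a point the paper leaves implicit — that for the argument to go through the rank-one terms of $M_\ell$ must be chosen so their column factors span $\Ima(M_\ell)$ and their row factors span $\Ima(M_\ell^T)$ — which is a nice bit of added rigor but not a different argument.
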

\begin{proof}
We apply the transformation of Proposition~\ref{prop:form1to2} in reverse, i.e., we write each  $M_{\ell}$ as a sum of rank 1 matrices:
 $$M_{\ell} = \sum_{k=1}^{\rk(M_{\ell})} u_{k\ell} \otimes v_{k\ell}.$$
 Then we obtain the desired decomposition of $T$ by expanding each product  $M_{\ell} \otimes w'_{\ell}$.
 In the resulting expansion, the $u_i$ are linearly independent since they form a basis of $\bigoplus_{\ell} \Ima(M_{\ell})$.
A similar reasoning for the $M_{\ell}^T$ shows that the $v_i$ are linearly independent.
The $w_i$ are nonzero since they are the same vectors as the $w'_i$ (each $w'_i$ is repeated $\rk(M_i)$ times in the list of the $w_i$).
\end{proof}

\subsection{Proof of the uniqueness theorems}
\label{sec:proof}

In this section, we now prove \cref{th:unique2}. 
\emph{Throughout this section, we assume that $T$ is a tensor satisfying the (equivalent) hypotheses of} \cref{th:unique1,th:unique2}.
We also assume that our field $\mathbb{K}$ is infinite. 
This is without loss of generality since uniqueness of decomposition for a field implies uniqueness for all its subfields.
 
As a main step toward the uniqueness theorems, we show in Theorem~\ref{th:uniqueimage2} that the linear spaces $\Ima M_\ell$ are the same in all matrix-vector decompositions $T=\sum_{\ell =1}^q M_{\ell} \otimes \gamma_{\ell}$ of minimal rank.
The uniqueness of the matrix-vector decomposition then follows from a simple direct sum argument.

The next claim follows from a simple argument which appears in~\cite{koi24overcomplete} before Proposition~11.
We recall the proof here for the sake of completeness.

\begin{proposition} \label{prop:invert1}
The span of the slices of $T$ contains a matrix of rank $r$.
\end{proposition}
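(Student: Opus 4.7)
The plan is to exploit the slice formula \eqref{eq:slice}, which gives $Z_k = U D_k V^T$ with $D_k = \diag(w_{1k},\ldots,w_{rk})$, and then to show that a sufficiently generic linear combination of the $Z_k$ has rank $r$.

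First I would observe that for any $a = (a_1,\ldots,a_p) \in \kk^p$, linearity gives
\[
\sum_{k=1}^p a_k Z_k \;=\; U D_a V^T, \qquad D_a = \diag\bigl(\langle a, w_1\rangle, \ldots, \langle a, w_r\rangle\bigr),
\]
where $\langle a, w_i\rangle = \sum_k a_k w_{ik}$. Since $U$ has $r$ linearly independent columns $u_1,\ldots,u_r$ and $V$ has $r$ linearly independent columns $v_1,\ldots,v_r$ (by hypotheses (i) and (ii) of \cref{th:unique1}), Lemma~\ref{lem:rankr} (applied to $M = UD_aV^T = \sum_i (D_a)_{ii}\, u_i v_i^T$, after absorbing the diagonal entries into the $u_i$) guarantees that $\rk(UD_aV^T) = r$ as soon as $D_a$ is invertible, i.e.\ as soon as $\langle a, w_i\rangle \neq 0$ for every $i$.

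Next I would argue that such an $a$ exists. For each $i = 1,\ldots,r$, the condition $\langle a, w_i\rangle = 0$ cuts out a proper linear subspace of $\kk^p$, since $w_i \neq 0$ by hypothesis (iii). As we have already reduced to the case of an infinite field $\kk$, the union of these finitely many proper subspaces cannot cover $\kk^p$, so there is $a \in \kk^p$ avoiding all of them simultaneously. The corresponding linear combination $\sum_k a_k Z_k$ then lies in the span of the slices and has rank exactly $r$.

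There is no real obstacle here; the only subtlety is ensuring the existence of the vector $a$, which is handled by the standard fact that a vector space over an infinite field is not a finite union of proper subspaces. The claim that a generic $a$ works also foreshadows the \say{generic linear combination} language used in the decomposition algorithm.
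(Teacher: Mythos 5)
Your proof is correct and takes essentially the same route as the paper: both use the slice formula $Z = UDV^T$ with $D = \diag(\langle a, w_1\rangle, \ldots, \langle a, w_r\rangle)$, observe that hypothesis (iii) makes each diagonal entry a nonzero linear form in $a$, and invoke the infinitude of $\kk$ to find $a$ making $D$ invertible so that $\rk(UDV^T) = r$. The only cosmetic differences are that you explicitly spell out the ``finite union of proper subspaces'' argument and cite Lemma~\ref{lem:rankr} for the final rank computation, where the paper just notes that $U, D, V^T$ all have full rank $r$.
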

\begin{proof}
By~(\ref{eq:slice}), the matrices in $\langle Z_1,\ldots,Z_p \rangle$
are exactly the matrices of the form:
\begin{equation} \label{eq:span2}
Z=UDV^T,\ D = \sum_{k=1}^p c_k D_k
\end{equation}
where $c_1,\ldots,c_p \in \mathbb{K}$.  Each entry of $D$ is a linear form in the $c_k$.
Since $w_i \neq 0$, for all $i$ there exists $k$  such that $w_{ik} \neq 0$. As a result, these linear forms are all nonzero.  
Indeed, 
\begin{equation} \label{eq:diag}
D=\diag(\langle c,w_1 \rangle,\ldots, \langle c,w_r \rangle)
\end{equation}
and
the $k$-th coefficient of the $i$-th linear form is $(D_k)_{ii}=w_{ik} \neq 0$.
{Since $\mathbb{K}$ is infinite}, there exists $c_1,\ldots,c_p \in \mathbb{K}$ such that the matrix $D$ in~(\ref{eq:span2}) 
is invertible, and the corresponding $Z=UDV^T$ is of rank~$r$ since $U,D$ and $V^T$ are all of full rank $r$.
\end{proof}
It follows from \cref{prop:invert1} and \cref{lem:maxrank} that $\rk(T)=r$ as claimed in \cref{th:unique1}.
\begin{corollary} \label{cor:direct2}
In any matrix-vector decomposition of $T$ of rank $r$, the linear spaces $\Ima(M_{\ell})$ are in direct sum.
\end{corollary}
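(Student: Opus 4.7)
The plan is to use a dimension-counting argument, combining the lower bound from \cref{prop:invert1} (the span of the slices contains a matrix of rank $r$) with a trivial upper bound coming from the given matrix-vector decomposition.

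Concretely, let $T=\sum_{\ell=1}^q M_\ell \otimes w_\ell$ be any matrix-vector decomposition of $T$ of rank $r$, i.e.\ $\sum_{\ell=1}^q \rk(M_\ell)=r$. By \eqref{eq:slicematrix}, the slices of $T$ are $Z_k = \sum_{\ell=1}^q w_{\ell k}\, M_\ell$, so every matrix $Z$ in the span of $Z_1,\dots,Z_p$ is a linear combination $Z=\sum_{\ell=1}^q \alpha_\ell M_\ell$ of the $M_\ell$. In particular,
\[
\Ima(Z)\ \subseteq\ \sum_{\ell=1}^q \Ima(M_\ell).
\]

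Now I would invoke \cref{prop:invert1}, which (having been proved directly from the hypotheses on $T$, not from any specific matrix-vector decomposition) guarantees the existence of some $Z$ in the span of the slices with $\rk(Z)=r$. The containment above then gives
\[
r\ =\ \dim\Ima(Z)\ \leq\ \dim\!\Bigl(\sum_{\ell=1}^q \Ima(M_\ell)\Bigr)\ \leq\ \sum_{\ell=1}^q \dim\Ima(M_\ell)\ =\ \sum_{\ell=1}^q \rk(M_\ell)\ =\ r.
\]
Hence both inequalities are equalities, and the sum $\sum_\ell \Ima(M_\ell)$ is a direct sum, as required.

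There is no real obstacle here — the only thing to watch is that \cref{prop:invert1} is a statement about $T$ itself (proved using the fixed hypothesis decomposition satisfying (i)--(iii)), and therefore can be freely applied even when reasoning about a different, a priori unknown matrix-vector decomposition of minimum rank. The argument extends verbatim to the $\Ima(M_\ell^T)$ by applying the same reasoning to the transposed slices $Z_k^T$, which yields the analogous direct sum statement for the row spaces.
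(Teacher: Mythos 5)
Your proof is correct and takes essentially the same route as the paper: apply \cref{prop:invert1} to find a rank-$r$ matrix in the span of the slices, observe via \eqref{eq:slicematrix} that it lies in $\sum_\ell \Ima(M_\ell)$, and then squeeze the dimension count to force a direct sum. Your clarifying remark that \cref{prop:invert1} is a property of $T$ itself (independent of which minimum-rank decomposition is under consideration) is a worthwhile point to make explicit; the aside about transposed slices corresponds to \cref{cor:direct3} rather than this statement, but it matches the paper's treatment there.
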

\begin{proof}
Let $T=\sum_{\ell =1}^q M_{\ell} \otimes w_{\ell}$ be a matrix-vector decomposition of $T$ of rank~$r$. 
By \cref{prop:invert1} there is a matrix $Z$ of rank $r$ in the span of the slices of~$T$, and by~\cref{eq:slicematrix} $Z$ is a linear combination of the $M_{\ell}$. 
Therefore,
$$r=\dim \Ima(Z) \leq \dim (\sum_{\ell} \Ima(M_{\ell})) \leq \sum_{\ell} \dim (\Ima M_{\ell})=r.$$
Here the  last equality follows from the definition of the rank of a matrix-vector decomposition.
We conclude that
$$\dim ( \sum_{\ell} \Ima M_{\ell}) = \sum_{\ell} \dim  (\Ima M_{\ell}),$$
and the spaces $\Ima(M_{\ell})$ are indeed in direct sum.
\end{proof}
\begin{corollary} \label{cor:direct3}
In any matrix-vector decomposition of $T$ of rank $r$, the linear spaces $\Ima(M_{\ell}^T)$ are in direct sum.
\end{corollary}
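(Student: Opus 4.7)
The plan is to reduce \cref{cor:direct3} to \cref{cor:direct2} by a transposition argument. Let $T=\sum_{\ell=1}^q M_\ell \otimes w_\ell$ be an arbitrary matrix-vector decomposition of $T$ of rank $r$. Define the tensor $T' \in \mathbb{K}^{n \times m \times p}$ whose slices are the transposes of the slices of $T$; equivalently, $T'=\sum_{\ell=1}^q M_\ell^T \otimes w_\ell$. Since $\rk(M_\ell^T)=\rk(M_\ell)$, this is a matrix-vector decomposition of $T'$ of rank $r$.

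Next I would check that $T'$ satisfies the hypotheses of \cref{th:unique1} (and hence of \cref{th:unique2}). Starting from the original decomposition $T=\sum_{i=1}^r u_i \otimes v_i \otimes w_i$ guaranteed by \cref{prop:form2to1}, the slices of $T$ are given by \cref{eq:slice}, so the slices of $T'$ are $Z_k^T = V D_k U^T$, which shows $T'=\sum_{i=1}^r v_i \otimes u_i \otimes w_i$. This decomposition has linearly independent $v_i$, linearly independent $u_i$, and nonzero $w_i$, exactly the hypotheses of \cref{th:unique1} (with the roles of the first two modes swapped).

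At this point the situation for $T'$ is identical to the situation for $T$ in \cref{cor:direct2}. Applying \cref{cor:direct2} to $T'$ and to its rank-$r$ matrix-vector decomposition $T'=\sum_{\ell=1}^q M_\ell^T \otimes w_\ell$ yields that the linear spaces $\Ima(M_\ell^T)$ are in direct sum, which is the desired conclusion.

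I do not expect any real obstacle here: the entire content of the argument is the observation that the hypotheses of the uniqueness theorem are symmetric in the first two modes (swap $u_i \leftrightarrow v_i$), and that transposing each slice realises this swap at the level of matrix-vector decompositions. The only thing to verify carefully is that the passage $T \mapsto T'$ sends a rank-$r$ matrix-vector decomposition of $T$ to a rank-$r$ matrix-vector decomposition of $T'$, which is immediate from $\rk(M_\ell^T)=\rk(M_\ell)$ and from the fact that the vectors $w_\ell$ (and hence their pairwise non-colinearity) are unchanged.
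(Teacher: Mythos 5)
Your proof is correct and takes essentially the same route as the paper: both pass to the tensor $T'$ obtained by transposing the $3$-slices (equivalently, swapping the roles of $u_i$ and $v_i$), observe that $T'=\sum_\ell M_\ell^T\otimes w_\ell$ is a rank-$r$ matrix-vector decomposition, and invoke \cref{cor:direct2} for $T'$. Your write-up is slightly more explicit than the paper's in verifying that $T'$ itself satisfies the standing hypotheses of \cref{th:unique1}, which is a welcome bit of care but not a different argument.
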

\begin{proof}
Let $T=\sum_{\ell =1}^q M_{\ell} \otimes \gamma_{\ell}$
 be a matrix-vector decomposition of $T$ of rank~$r$. 
 Consider the tensor $T'=\sum_{i=1}^r v_i \otimes u_i \otimes w_i$, obtained from $T$ by exchanging $u_i$ and $v_i$
 in the decomposition $T=\sum_{i=1}^r u_i \otimes v_i \otimes w_i$.
 For $T'$ we have the matrix-vector decomposition of rank $r$:
$$T'=\sum_{\ell =1}^q M_{\ell}^T \otimes \gamma_{\ell}.$$
The result therefore follows from \cref{cor:direct2} applied to $T'$.
\end{proof}
It remains to show that $T$ has a unique matrix-vector decomposition of rank~$r$.
First, we'll show (in \cref{th:uniqueimage2}) that the linear spaces $\Ima(M_{\ell})$ are the same in all matrix-vector decompositions of $T$. 

Henceforth, for any $c \in \mathbb{K}^p$ we denote by $T_c$ the linear combination $\sum_{k=1}^p c_kZ_k$, and likewise $D_c := \sum_{k=1}^p c_kD_k$ in~\cref{eq:diag}.
\begin{corollary} \label{cor:rank}
For a generic choice of $a,b$ in $\mathbb{K}^p$, $T_a$ and $T_b$ are of rank~$r$; the kernel of $T_a$ (and $T_b$) is of dimension $n-r$, and is equal to $\displaystyle \bigcap_{\ell \in [q]} \ker M_{\ell}$.
\end{corollary}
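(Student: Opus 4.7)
The plan is to combine both decomposition forms of $T$: the rank-$1$ form (under hypotheses (i)--(iii)) for the rank computation, and the matrix-vector form for identifying the kernel. To avoid a notational clash between the two third-mode labels, I will temporarily denote the matrix-vector third-mode vectors by $\gamma_\ell$ (so that $T=\sum_\ell M_\ell\otimes \gamma_\ell$), reserving $w_i$ for the rank-$1$ vectors.

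First, I would use (\ref{eq:slice}) and (\ref{eq:diag}) to write $T_a = U D_a V^T$ with $D_a = \diag(\langle a, w_1\rangle,\ldots,\langle a, w_r\rangle)$. Since every $w_i$ is nonzero and $\mathbb{K}$ is infinite, the polynomial $\prod_{i=1}^r \langle a, w_i\rangle$ is not identically zero, so $D_a$ is invertible on a Zariski-dense set. Combined with the fact that $U$ has $r$ linearly independent columns and $V$ has $r$ linearly independent rows, this yields $\rk T_a = r$; the same reasoning applies to $T_b$, and the intersection of the two open conditions is still generic.

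Second, I would switch to the matrix-vector form and invoke (\ref{eq:slicematrix}) to obtain $T_a = \sum_{\ell=1}^q \langle a, \gamma_\ell\rangle\, M_\ell$. Generically, all the scalars $\langle a, \gamma_\ell\rangle$ are nonzero since each $\gamma_\ell \neq 0$. The inclusion $\bigcap_\ell \ker M_\ell \subseteq \ker T_a$ is then immediate. For the reverse, if $T_a x = 0$, then $\sum_\ell \langle a, \gamma_\ell\rangle\, M_\ell x = 0$; since the vectors $M_\ell x$ lie in the subspaces $\Ima M_\ell$ which are in direct sum by Corollary~\ref{cor:direct2}, each summand must vanish, and the nonvanishing of the scalars then forces $M_\ell x = 0$ for every $\ell$.

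The dimension assertion becomes a consistency check: on the one hand $\dim \ker T_a = n - \rk T_a = n - r$, while on the other hand $\bigcap_\ell \ker M_\ell = \bigl(\sum_\ell \Ima M_\ell^T\bigr)^\perp$ has dimension $n - \sum_\ell \rk M_\ell = n - r$ by Corollary~\ref{cor:direct3}. The main obstacle here is purely bookkeeping---the two forms of the decomposition use overlapping notation, and the generic conditions for $a$ and $b$ must be taken simultaneously; the real structural content (the direct-sum decompositions of images and transposed images) has already been provided by Corollaries~\ref{cor:direct2} and~\ref{cor:direct3}, so no new idea is needed.
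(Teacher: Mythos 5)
Your proposal is correct and follows essentially the same route as the paper's proof: the rank claim comes from the proof of \cref{prop:invert1} (you simply re-derive rather than cite it), and the kernel identification uses \cref{eq:slicematrix} together with the direct-sum property from \cref{cor:direct2} and genericity of the scalars $\langle a, \gamma_\ell\rangle$. The only addition is the final consistency check via \cref{cor:direct3}, which is sound but redundant, since $\dim\ker T_a = n-r$ already follows from rank-nullity once $\rk T_a = r$ is known.
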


\begin{proof}
The first property follows from the proof of \cref{prop:invert1}: $T_a$ and $T_b$ are of rank $\leq r$ for any choice of $a$ and $b$, and the ranks will be equal to $r$ if $a$ and $b$ avoid a union of $r$ hyperplanes. 
The kernels are therefore of dimension $n-r$.

By \cref{eq:slicematrix}, $T_a = \sum_{\ell} \langle a,\gamma_{\ell} \rangle M_{\ell}$. 
If $M_{\ell} x =0$ for all $\ell$, this immediately shows that $T_a x =0$. 
The converse follows from the fact that $\Ima(M_{\ell})$ are in direct sum, and for a generic choice of $a$ we have  $\langle a,\gamma_{\ell} \rangle \neq 0$ for all $\ell$.
\end{proof}

\newpage

\begin{lemma} \label{lem:image2}
Let $T=\sum_{\ell =1}^q M_{\ell} \otimes \gamma_{\ell}$ be a matrix-vector decomposition of $T$ of rank $r$. 
The following properties hold for a generic choice of $a,b \in \mathbb{K}^p$:
\begin{itemize}
\item[(i)] If $x$ is an eigenvector of the pair $(T_a,T_b)$ then there is $\ell \in [q]$ such that $T_bx \in \Ima M_{\ell}$.

\item[(ii)] There are exactly $q$ nontrivial eigenvalues for   $(T_a,T_b)$, and if $\lambda$ is a nontrivial eigenvalue then $\langle a,\gamma_{\ell} \rangle = \lambda \langle b,\gamma_{\ell} \rangle$ for some $\ell \in [q]$. The dimension of the corresponding eigenspace $V_{\lambda}$ is equal to $n-r+\rk(M_{\ell})$, and the image of $V_{\lambda}$ by $T_b$ is equal to $\Ima M_{\ell}$.

\item[(iii)] If $x$ is an   eigenvector associated to the nontrivial eigenvalue 
$\lambda=\langle a,\gamma_{\ell} \rangle /  \langle b,\gamma_{\ell} \rangle$, we have $T_a x = \langle a,\gamma_{\ell} \rangle M_{\ell}x$ and 
$T_b x =  \langle b,\gamma_{\ell} \rangle M_{\ell}x$.
\end{itemize}
\end{lemma}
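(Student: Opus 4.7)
The plan is to leverage the two representations
$$T_a = \sum_{\ell=1}^q \alpha_\ell M_\ell, \qquad T_b = \sum_{\ell=1}^q \beta_\ell M_\ell$$
given by \cref{eq:slicematrix}, where $\alpha_\ell := \langle a, \gamma_\ell \rangle$ and $\beta_\ell := \langle b, \gamma_\ell \rangle$. The genericity conditions I will need on $(a, b)$ are: (a) every $\alpha_\ell$ and every $\beta_\ell$ is nonzero, and (b) the ratios $\alpha_\ell/\beta_\ell$ are pairwise distinct. Both are Zariski-open conditions: (a) because each $\gamma_\ell$ is nonzero, and (b) because $\alpha_\ell \beta_{\ell'} - \alpha_{\ell'}\beta_\ell$ is a polynomial in $(a,b)$ that fails to be identically zero exactly when $\gamma_\ell$ and $\gamma_{\ell'}$ are non-colinear, which is part of \cref{def:matrix_vector}.

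Given an eigenvector $x$ with $T_a x = \lambda T_b x$, the key move will be to rewrite the eigenvalue equation as $\sum_\ell (\alpha_\ell - \lambda \beta_\ell) M_\ell x = 0$ and then invoke the direct sum of the $\Ima M_\ell$ (\cref{cor:direct2}) to conclude $(\alpha_\ell - \lambda \beta_\ell) M_\ell x = 0$ for every $\ell$. By condition (b), at most one index $\ell_0$ can satisfy $\lambda = \alpha_{\ell_0}/\beta_{\ell_0}$, forcing $M_\ell x = 0$ for all other $\ell$. If no such $\ell_0$ exists then $T_b x = 0$ and (i) is vacuous; otherwise $T_a x = \alpha_{\ell_0} M_{\ell_0} x$ and $T_b x = \beta_{\ell_0} M_{\ell_0} x$, simultaneously establishing (iii) and (i). The same analysis shows that every nontrivial eigenvalue is necessarily of the form $\alpha_{\ell_0}/\beta_{\ell_0}$.

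For (ii), the eigenspace of $\lambda_{\ell_0} := \alpha_{\ell_0}/\beta_{\ell_0}$ is $V_{\lambda_{\ell_0}} = \bigcap_{\ell \neq \ell_0} \ker M_\ell$. To compute its dimension I will stack $\{M_\ell : \ell \neq \ell_0\}$ into a single block matrix: by the direct-sum hypothesis on the row spaces $\Ima M_\ell^T$ (\cref{cor:direct3}), this block matrix has rank $r - \rk M_{\ell_0}$, so $\dim V_{\lambda_{\ell_0}} = n - r + \rk M_{\ell_0}$. On $V_{\lambda_{\ell_0}}$ we have $T_b x = \beta_{\ell_0} M_{\ell_0} x$, so $T_b(V_{\lambda_{\ell_0}}) \subseteq \Ima M_{\ell_0}$; applying rank-nullity to $M_{\ell_0}|_{V_{\lambda_{\ell_0}}}$, whose kernel is $\bigcap_\ell \ker M_\ell$ of dimension $n - r$ by \cref{cor:rank}, yields $\dim T_b(V_{\lambda_{\ell_0}}) = \rk M_{\ell_0}$, forcing equality with $\Ima M_{\ell_0}$. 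Since $M_{\ell_0} \neq 0$ by the definition of a matrix-vector decomposition, this image is nonzero and $\lambda_{\ell_0}$ is genuinely nontrivial; combined with the earlier observation, this gives exactly $q$ nontrivial eigenvalues.

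The main obstacle I expect is the bookkeeping around genericity: one must identify the minimal set of non-vanishing conditions on $(a,b)$ and verify each is cut out by a nonzero polynomial. The decisive point is the pairwise-distinctness of the ratios $\alpha_\ell/\beta_\ell$, which rides entirely on the non-colinearity of the $\gamma_\ell$ baked into \cref{def:matrix_vector}; beyond this, everything reduces to the two direct-sum hypotheses and elementary rank-nullity arguments.
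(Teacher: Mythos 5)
Your proof is correct and takes essentially the same route as the paper: both expand $T_a,T_b$ as linear combinations of the $M_\ell$ via \cref{eq:slicematrix}, decouple the eigenvalue equation term-by-term using the direct sum of the $\Ima M_\ell$ (\cref{cor:direct2}), extract the pairwise-distinctness of the ratios from the non-colinearity of the $\gamma_\ell$ via the same $2\times 2$ minor $\alpha_\ell\beta_{\ell'}-\alpha_{\ell'}\beta_\ell$, compute $\dim V_\lambda$ by stacking the $M_\ell$ with $\ell\neq\ell_0$ and invoking the direct sum of the $\Ima M_\ell^T$ (\cref{cor:direct3}), and finish with rank--nullity applied to $T_b|_{V_\lambda}$ together with \cref{cor:rank}. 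The only cosmetic difference is the order of the case split (you branch on whether $\lambda$ equals some ratio, the paper branches on whether $T_b x=0$); the two are equivalent.
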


\begin{proof}
Let $x$ be an eigenvector of $(T_a, T_b)$ associated to the eigenvalue $\lambda$.
If $T_b x = 0$ then certainly $T_bx \in \Ima M_{\ell}$ for any $\ell$. 
For the remainder of the proof of (i) we assume that $T_b x \neq 0$.
By \cref{eq:slicematrix}, $T_a = \sum_{\ell} \langle a,\gamma_{\ell} \rangle M_{\ell}$ and 
$T_b = \sum_{\ell} \langle b,\gamma_{\ell} \rangle M_{\ell}$. This leads to
\begin{equation} \label{eq:2decomp2}
 \sum_{\ell} \langle a,\gamma_{\ell} \rangle M_{\ell}x= \sum_{\ell} \lambda \langle b,\gamma_{\ell} \rangle M_{\ell}x.
 \end{equation}
  
Remember from \cref{cor:direct2} that the spaces $\Ima M_{\ell}$ are in direct sum. 
As a result, we must have  $\langle a,\gamma_{\ell} \rangle = \lambda \langle b,\gamma_{\ell} \rangle$ for all the $\ell$ 
such that $M_{\ell}x \neq 0$ in \cref{eq:2decomp2}. We claim that there is in fact exactly one $\ell$ such that $M_{\ell}x \neq 0$.
Note first that one cannot have $M_{\ell} x =0$ for all $\ell$ since $T_bx \neq 0$.
   To continue the proof of the claim, let us assume that
 $$\langle a,\gamma_{\ell} \rangle = \lambda \langle b,\gamma_{\ell} \rangle\ \mathrm{and}\ \langle a,\gamma_{\ell'} \rangle = \lambda \langle b,\gamma_{\ell'} \rangle$$
 for two distinct indices $\ell \neq \ell'$. 
  This implies that 
 \begin{equation} \label{eq:bilinear}
 \langle a,\gamma_{\ell} \rangle \langle b,\gamma_{\ell'} \rangle 
 -  \langle a,\gamma_{\ell'} \rangle  \langle b,\gamma_{\ell} \rangle = 0.
 \end{equation}
 It remains to show that this bilinear form in $a$ and $b$ is not identically 0: this will show that for a generic choice of $a$ and $b$, (\ref{eq:bilinear}) cannot hold. The matrix of this bilinear form is equal to 
 $\gamma_{\ell} \gamma_{\ell'}^T -  \gamma_{\ell'} \gamma_{\ell}^T$. It is indeed nonzero
  since $\gamma_{\ell}$ and 
 $\gamma_{\ell'}$ are not colinear by definition of a matrix-vector decomposition. This completes the proof of the claim.
 We have therefore shown that  $T_a x = \langle a,\gamma_{\ell} \rangle M_{\ell}x$
 and $T_b x = \langle b,\gamma_{\ell} \rangle M_{\ell}x$, i.e., $T_a x$ and $T_bx$ both belong to $\Ima M_{\ell}$.
 This completes the proof of (i) and (iii); we have also shown that if  $\lambda$ is a nontrivial eigenvalue then $\lambda = \langle a,\gamma_{\ell} \rangle /  \langle b,\gamma_{\ell} \rangle$ for some $\ell \in [q]$ (note that $ \langle b,\gamma_{\ell} \rangle$ is generically nonzero for all $\ell$).
Next, we compute the dimension of the corresponding eigenspace $V_{\lambda}$.

Suppose for instance that $\lambda=\langle a,\gamma_{q} \rangle /  \langle b,\gamma_{q} \rangle$. 
The proof of (i) shows that $x \in V_{\lambda}$ if and only if $M_{\ell} x =0$ for all $\ell \neq q$. Let $M$ be the matrix of this linear system.
We have the block decomposition $M^T=(M_1^T \ldots  M_{q-1}^T)$. 
By \cref{cor:direct3} this implies that 
$$\rk(M)=\rk(M^T)=\sum_{\ell =1}^{q-1} \rk(M_{\ell}^T) = r - \rk(M_{q}),$$
and this shows that $\dim V_{\lambda} = n-r+\rk(M_{q})$ as claimed.

Finally, we build on this computation of $\dim V_{\lambda}$ to show that $T_b (V_{\lambda}) = \Ima M_{\ell}$. 
From (iii) we already know that  $T_b (V_{\lambda}) \subseteq \Ima M_{\ell}$. In order to show that these two subspaces 
are equal we will show that they have the same dimension. 
For this, we'll compute the dimension of the kernel
$K_{\lambda}$ of $T_b$
viewed as a linear operator on $V_{\lambda}$. Let us continue to assume for instance that $\lambda=\langle a,\gamma_{q} \rangle /  \langle b,\gamma_{q} \rangle$. Recall that  $x \in V_{\lambda}$ if and only if $M_{\ell} x =0$ for all $\ell \neq q$.
Therefore, $x \in K_{\lambda}$ if and only if $M_{\ell} x =0$ for all $\ell \in [q]$. This shows that $K_{\lambda}$ is actually
independent of $\lambda$, and is equal (by \cref{cor:rank}) to $\ker T_b$. Hence $\dim K_{\lambda} = n-r$,
and $\dim T_b (V_{\lambda}) = \dim V_{\lambda} - \dim K_{\lambda} = \rk(M_q)$.
\end{proof}

\begin{theorem} \label{th:uniqueimage2}
Let $T=\sum_{\ell =1}^q M_{\ell} \otimes \gamma_{\ell}$ be a matrix-vector decomposition 
of rank $r$. 
If $T=\sum_{\ell =1}^{q'} M'_{\ell} \otimes \gamma'_{\ell}$ is another matrix-vector decomposition of rank~$r$, then $q=q'$ and there is a permutation $\pi$ such that $\Ima M'_{\ell} = \Ima M_{\pi(\ell)}$.
\end{theorem}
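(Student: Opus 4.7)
The plan is to apply \cref{lem:image2} to \emph{both} matrix-vector decompositions simultaneously and exploit the fact that the pair $(T_a,T_b)$ is built only from the slices of $T$ and the scalars $a,b \in \mathbb{K}^p$. In particular, the eigenvalues of $(T_a,T_b)$, the associated eigenspaces $V_\lambda$, and the images $T_b(V_\lambda)$ depend on $T$ alone, not on any chosen decomposition. Both decompositions have rank $r$, so each satisfies the hypotheses of \cref{lem:image2}, and each yields its own Zariski-open set of \say{generic} $(a,b) \in \mathbb{K}^{2p}$ on which the conclusions of the lemma hold. Since $\mathbb{K}$ is infinite I may pick a single pair $(a,b)$ in the intersection of these two open sets.

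For this common choice, \cref{lem:image2}(ii) applied to the first decomposition asserts that $(T_a,T_b)$ has exactly $q$ nontrivial eigenvalues, while applied to the second decomposition it asserts that there are exactly $q'$; hence $q=q'$.

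Next I enumerate the nontrivial eigenvalues as $\lambda_1,\ldots,\lambda_q$ and let $V_{\lambda_i}$ denote the corresponding eigenspace. Applying \cref{lem:image2}(ii) to the first decomposition gives, for each $i$, a unique index $\pi(i) \in [q]$ with $T_b(V_{\lambda_i}) = \Ima M_{\pi(i)}$; applying it to the second decomposition gives a unique index $\sigma(i) \in [q]$ with $T_b(V_{\lambda_i}) = \Ima M'_{\sigma(i)}$. Since the left-hand side is the same in both equalities, we conclude $\Ima M_{\pi(i)} = \Ima M'_{\sigma(i)}$ for every $i$.

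It remains to check that $\pi$ and $\sigma$ are bijections $[q] \to [q]$. The fact that there are exactly $q$ nontrivial eigenvalues (rather than fewer) forces the $q$ ratios $\langle a,\gamma_\ell\rangle / \langle b,\gamma_\ell \rangle$ to be pairwise distinct, so the map $i \mapsto \pi(i)$ is surjective, hence bijective; the same holds for $\sigma$. Moreover the spaces $\Ima M_1,\ldots,\Ima M_q$ are pairwise distinct because they are nonzero and in direct sum by \cref{cor:direct2}. The permutation $\pi \circ \sigma^{-1}$ therefore witnesses the required reindexing. All the heavy lifting is done by \cref{lem:image2}, so no real obstacle remains here: the whole argument rests on the observation that the spectral data of the pencil $(T_a,T_b)$ is intrinsic to $T$.
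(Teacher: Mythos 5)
Your proof is correct and takes essentially the same approach as the paper: choose $(a,b)$ in the intersection of the two Zariski-open sets so that \cref{lem:image2} applies to both decompositions simultaneously, then read off $q=q'$ and the identification of the image spaces from the spectral data of the pencil $(T_a,T_b)$, which depends only on $T$. The paper states the permutation conclusion more tersely, whereas you spell out that the maps $i\mapsto\pi(i)$ and $i\mapsto\sigma(i)$ are bijections; this extra care is welcome but does not change the substance of the argument.
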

\begin{proof}
Since the intersection of two Zariski open sets is nonempty (and even Zariski open), there exist $a,b \in \mathbb{K}^p$ such that the 3 properties of \cref{lem:image2} apply to 
our two decompositions of $T$. In particular, by \cref{lem:image2}, $q$ and $q'$ are both equal to the number of nontrivial eigenvalues of the pair $(T_a,T_b)$. 
Moreover, the images of the corresponding eigenspaces are the spaces $\Ima M_{\ell}$, and (applying the lemma to the second decomposition of $T$) these images are {\em also} the spaces $\Ima M'_{\ell}$.
This shows that the second family of spaces is obtained from the first by a permutation of indices.
\end{proof}
With the above theorem at hand, the proof of \cref{th:unique2} then follows from two simple lemmas, which we now state and prove.

\begin{lemma} \label{lem:direct}
Suppose that  $E_1,\ldots,E_q \subseteq \mathbb{K}^m$ are in direct sum. The linear spaces of matrices:
$${\cal E}_1 = \{M \in M_{m,n}(\mathbb{K});\Ima(M) \subseteq E_1\},\ldots,{\cal E}_q = \{M \in M_{m,n}(\mathbb{K});\Ima(M) \subseteq E_q\}$$
are also in direct sum.
\end{lemma}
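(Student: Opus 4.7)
The plan is to prove the direct sum condition in the standard way: assume $\sum_{\ell=1}^q M_\ell = 0$ with $M_\ell \in \mathcal{E}_\ell$ for each $\ell$, and deduce that every $M_\ell$ must vanish. The key observation is that this identity of matrices, when applied to an arbitrary vector $x \in \mathbb{K}^n$, converts into an identity of the form $\sum_{\ell=1}^q M_\ell x = 0$ inside $\mathbb{K}^m$, which is precisely the kind of relation the direct sum hypothesis on $E_1, \ldots, E_q$ is designed to handle.

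First, I would fix $M_\ell \in \mathcal{E}_\ell$ with $\sum_{\ell=1}^q M_\ell = 0$, and fix an arbitrary $x \in \mathbb{K}^n$. By the definition of $\mathcal{E}_\ell$, the vector $M_\ell x$ lies in $\Ima(M_\ell) \subseteq E_\ell$. Since the $E_\ell$ are in direct sum, the equality $\sum_{\ell=1}^q M_\ell x = 0$ forces $M_\ell x = 0$ for every $\ell$. As $x$ was arbitrary, this shows $M_\ell = 0$ for every $\ell$, which is exactly the condition for $\mathcal{E}_1, \ldots, \mathcal{E}_q$ to be in direct sum.

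There is no substantial obstacle here; the lemma is a direct consequence of the fact that a matrix is zero if and only if it vanishes on every vector, combined with the pointwise application of the direct sum hypothesis. The proof fits in a few lines and does not require introducing bases of the $E_\ell$ or any dimension count, although one could alternatively phrase the argument by noting that $\mathcal{E}_\ell \cong \Hom(\mathbb{K}^n, E_\ell)$ and that $\Hom(\mathbb{K}^n, -)$ preserves direct sums.
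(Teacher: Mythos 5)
Your proof is correct and is essentially identical to the paper's: both assume $\sum_\ell M_\ell = 0$, apply it to an arbitrary $x \in \mathbb{K}^n$, use that $M_\ell x \in E_\ell$ together with the direct sum hypothesis on the $E_\ell$ to get $M_\ell x = 0$, and conclude $M_\ell = 0$. The closing remark about $\Hom(\mathbb{K}^n,-)$ preserving direct sums is a nice conceptual gloss but not needed.
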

\begin{proof}
Suppose that $M_1+\ldots+M_q=0$ with $M_i \in {\cal E}_i$ for $i=1,\ldots,q$.
For any $x \in \mathbb{K}^n$, $M_1x  +\ldots+M_qx=0$ and $M_i x \in E_i$ for all $i$. This implies $M_ix=0$ for all $i$ since the $E_i$ are in direct sum. Since this is true for any $x$, we conclude that $M_i=0$ for all $i$.
\end{proof}
\begin{lemma} \label{lem:rank1}
If  the vectors  $u,u' \in \mathbb{K}^m$ and $v,v' \in \mathbb{K}^n$ satisfy $v_j u = v'_j u'$ for all $j=1,\ldots,n$ 
we have $u \otimes v = u' \otimes v'$.
\end{lemma}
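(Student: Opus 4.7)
The proof will be essentially a direct unpacking of definitions, so my plan is very short.

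The plan is to compare the two rank-one matrices $u \otimes v$ and $u' \otimes v'$ entry by entry. Recall that by definition of the outer product, $(u \otimes v)_{ij} = u_i v_j$ and $(u' \otimes v')_{ij} = u'_i v'_j$, so it suffices to verify that $u_i v_j = u'_i v'_j$ for every $i \in \{1,\dots,m\}$ and $j \in \{1,\dots,n\}$.

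First I would rewrite the hypothesis coordinate-wise: the equation $v_j u = v'_j u'$ of vectors in $\mathbb{K}^m$ means precisely that for every $i$, $v_j u_i = v'_j u'_i$, and this holds for every $j$. By commutativity of scalar multiplication in $\mathbb{K}$, $u_i v_j = v_j u_i = v'_j u'_i = u'_i v'_j$, which is exactly the entrywise equality we needed. Hence $u \otimes v = u' \otimes v'$.

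I do not expect any real obstacle here: the lemma is essentially a reformulation of the fact that a rank-one matrix determines its two factors up to reciprocal scaling, and the stated hypothesis is already strong enough to make even this rescaling explicit in every column. The only mild point of care is keeping the role of the index $j$ (over which the hypothesis ranges) separate from the index $i$ (which is hidden in the vector equality $v_j u = v'_j u'$), but nothing beyond bookkeeping is required.
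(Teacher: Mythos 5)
Your proof is correct and is essentially the same as the paper's: the paper notes that $v_j u$ and $v'_j u'$ are precisely the $j$-th columns of $uv^T$ and $u'v'^T$, so the hypothesis says the two matrices agree column by column; you simply unpack this one step further to an entrywise comparison.
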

\begin{proof}
As matrices, $u \otimes v$ and $u' \otimes v'$ are  represented respectively by $uv^T$ and $u'v'^T$. These two matrices  are equal since their respective columns are equal. 
\end{proof}

We can now complete the proof of the uniqueness theorems.
For convenience, we restate \cref{th:unique2} here.

\thmUniqueB*

\begin{proof}
Consider two matrix-vector decompositions of rank $r$:
$$T=\sum_{\ell =1}^q M_{\ell} \otimes \gamma_{\ell} =  \sum_{\ell =1}^{q'} M_{\ell'} \otimes \gamma_{\ell'}.$$
By \cref{th:uniqueimage2}, $q=q'$ and we can renumber the terms in this decomposition so that 
$\Ima M'_{\ell} = \Ima M_{\ell}$ for $\ell=1,\ldots,q$. 
Moreover, by \cref{eq:slicematrix} we have two expressions for each of the slices of $T$:
$$Z_k = \sum_{\ell =1}^q {\gamma}_{\ell k}M_{\ell} = \sum_{\ell =1}^q {\gamma'}_{\ell k}M'_{\ell}.$$
Since $\Ima M'_{\ell} = \Ima M_{\ell}$ and these spaces are in direct sum, \cref{lem:direct} 
shows that ${\gamma}_{\ell k}M_{\ell} = {\gamma'}_{\ell k}M'_{\ell}$ for all $\ell$. 
Since this applies to all $k$, we have $M_{\ell} \otimes {\gamma}_{\ell } = M'_{\ell} \otimes {\gamma'}_{\ell}$ by \cref{lem:rank1} and we have shown that $T$ has a unique matrix-vector decomposition of minimal rank.
\end{proof}

\begin{remark}\label{remark: adjoint algebra uniqueness}
There is an alternative way of proving the above uniqueness theorem via the uniqueness theorem for indecomposable vector space decompositions (the latter proved in \cite[Corollary B.2]{garg20}), which in turn uses the Krull-Schmidt theorem for modules.
In short, in the alternative approach, we can show that the direct sum conditions of the given matrix-vector decomposition imply that the adjoint algebra (see \cite[Appendix A]{garg20}) of the set of 3-slices of our tensor can be block-diagonalizable in a unique way.
With the above at hand, we can now prove an analogous statement to \cref{lem:image2,th:uniqueimage2} which show that any matrix-vector decomposition of minimum rank should have the same adjoint algebra as the 3-slices.
The above two facts imply uniqueness of matrix-vector decompositions.

It is important to notice that the uniqueness of vector space decompositions alone is not enough to prove our result, but that the above is an alternative way to interpret (and prove) our uniqueness theorem.
\end{remark}

\section{Decomposition algorithm} \label{sec:algo}

In this section we propose and analyze an algorithm which, given an input tensor $T$, computes the unique matrix-vector decomposition guaranteed by \cref{th:unique1,th:unique2}. 
We will take the point of view of \cref{th:unique2}: assuming that there is a matrix-vector decomposition
\begin{equation} \label{eq:uniqueth2}
T=\sum_{\ell=1}^q M_{\ell} \otimes w_{\ell}
\end{equation}
where  the linear spaces $\Ima(M_{\ell})$ are in direct sum  { and where the linear spaces $\Ima(M_{\ell}^T)$ are also in direct sum}, we want to compute that (unique) decomposition.
In \cref{sec:images2}, we first show how to compute the spaces $\Ima M_\ell$. 
In \cref{sec:disjoint} we describe an algorithm for the case where the subspaces $\Ima(M_{\ell})$ have a very simple form, i.e., when they are coordinate subspaces.
Finally, in \cref{sec:main} we combine the algorithms from \cref{sec:images2,sec:disjoint} to give our main algorithm.

\subsection{Computing  \texorpdfstring{$\Ima(M_{\ell})$}{ImaMell} in an arbitrary field} \label{sec:images2}

In this section $\mathbb{K}$ can be an arbitrary (infinite) field.\footnote{In \cref{rem:random} we explain how the algorithm can be adapted to finite fields.}  
As explained in \cref{sec:previous}, we only need to assume that we have access to an algorithm for the computation of roots of polynomials with coefficients in $\mathbb{K}$. 
For the field of real and complex numbers, 
we provide in \cref{sec:images} an alternative algorithm based on the Moore-Penrose inverse, in the same style as the classical simultaneous diagonalization algorithm (see \cref{sec:simdiag}).

We will compute the linear spaces $\Ima(M_1),\ldots,\Ima(M_q)$ by solving a generalized eigenvalue problem\footnote{Recall that this was the point of view of \cref{sec:eigen,sec:proof}.} using the following result.

\begin{proposition} \label{prop:eigen}
Let $A=UA'V^T, B=UB'V^T$ where $U \in M_{m,r}(\mathbb{K})$, $V \in M_{n,r}(\mathbb{K})$ and $A',B' \in M_{r,r}(\mathbb{K})$. 
Assume that $r \leq \min(m,n)$. Then:
\begin{itemize}
\item[(i)] All the minors of size $r$ of $A-\lambda B$ are scalar multiples of $\det(A'-\lambda B')$.
\item[(ii)] Assume moreover that $U,V,A',B'$ are all of rank $r$.
Let $M_a \in M_{r,r}(\mathbb{K})$ be any submatrix of $A$ of rank $r$, 
and let $M_b$ be the matching submatrix of $B$ (i..e,  we select the same  rows and columns as for $M_a$). 
The nontrivial eigenvalues of the pair $(A,B)$ are exactly the roots 
of the polynomial $P(\lambda)=\det(M_a-\lambda M_b)$.
\end{itemize}
\end{proposition}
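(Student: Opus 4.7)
The plan is to use the factorization $A-\lambda B = U(A'-\lambda B')V^T$ and reduce everything to the $r \times r$ pencil $(A', B')$.

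For part (i), I would observe that every $r \times r$ submatrix of $A-\lambda B$ is obtained by extracting some set of $r$ rows $I$ and $r$ columns $J$, and the resulting block equals $U_I(A'-\lambda B')V_J^T$, where $U_I \in M_{r,r}(\mathbb{K})$ is the row-submatrix of $U$ and $V_J \in M_{r,r}(\mathbb{K})$ is the row-submatrix of $V$. Since all three factors are now square, multiplicativity of the determinant gives the minor as $\det(U_I)\det(V_J)\cdot \det(A'-\lambda B')$, which is a scalar multiple of $\det(A'-\lambda B')$ as claimed.

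For part (ii), my first step is to identify the nontrivial eigenvalues of $(A,B)$ with the roots of $\det(A'-\lambda B')$. Because $U$ has rank $r$ it is injective, so $Av = \lambda B v$ is equivalent to $(A'-\lambda B')V^T v = 0$. If $\lambda$ is a nontrivial eigenvalue witnessed by $v$ with $Bv = UB' V^T v \neq 0$, then necessarily $w := V^T v \neq 0$ lies in $\ker(A'-\lambda B')$, forcing $\det(A'-\lambda B') = 0$. Conversely, starting from any nonzero $w \in \ker(A'-\lambda B')$, I would lift through the surjective map $V^T$ to obtain an eigenvector $v$ with $V^T v = w$; nontriviality then reduces to showing $B'w \neq 0$, which is automatic since $B'$ is invertible (rank $r$) and $U$ is injective.

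For the second step, I combine this characterization with part (i) applied to the specific index sets $(I,J)$ defining $M_a$: the polynomial $P(\lambda)=\det(M_a-\lambda M_b)$ equals $\det(U_I)\det(V_J)\cdot \det(A'-\lambda B')$. Since $P(0)=\det(M_a) \neq 0$, the scalar $\det(U_I)\det(V_J)$ must be nonzero, so $P(\lambda)$ is a nonzero scalar multiple of $\det(A'-\lambda B')$ and has exactly the same roots, namely the nontrivial eigenvalues of $(A,B)$.

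I do not anticipate a real obstacle: the crux is simply that the rank hypotheses turn $U$ into an injection and $V^T$ into a surjection between $\mathbb{K}^r$ and its ambient spaces, collapsing the $m \times n$ generalized eigenvalue problem to the $r \times r$ one. The mildest care is needed when passing between "$\lambda$ is a nontrivial eigenvalue'' and "$\lambda$ is a root of $\det(A'-\lambda B')$''; the invertibility of $B'$ is what prevents an eigenvector $w$ of the small pencil from lifting to a vector annihilated by $B$, and it is precisely the full-rank hypothesis on all four factors that makes this equivalence clean.
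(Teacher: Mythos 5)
Your proof is correct, and part~(i) matches the paper's argument exactly: the $r\times r$ minor factors as $\det(U_I)\det(V_J)\det(A'-\lambda B')$. Part~(ii) takes a genuinely different route. The paper works with the pencil $(A,B)$ directly: it first records that $\ker A=\ker B=\ker V^T$ has dimension $n-r$, then argues that $\lambda$ is a nontrivial eigenvalue iff $\dim\ker(A-\lambda B)>n-r$ iff $\rk(A-\lambda B)\le r-1$, and relates this to $P(\lambda)=0$ via ranks of $r\times r$ submatrices (invoking part~(i) only for the converse implication). You instead reduce the $m\times n$ pencil to the $r\times r$ pencil $(A',B')$ by an explicit bijection on eigenvectors: $U$ injective collapses $Av=\lambda Bv$ to $(A'-\lambda B')V^Tv=0$, while $V^T$ surjective and $B'$ invertible let you lift a nonzero vector of $\ker(A'-\lambda B')$ to a nontrivial eigenvector of $(A,B)$; you then use part~(i) together with $P(0)=\det M_a\ne0$ to conclude that $P$ is a nonzero scalar multiple of $\det(A'-\lambda B')$. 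Your version isolates as a self-contained claim that $(A,B)$ and $(A',B')$ have the same nontrivial spectrum, which is cleaner conceptually; the paper's version avoids lifting eigenvectors and trades it for a dimension count on kernels. Both are short and both are valid.
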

\begin{proof}
A submatrix of size $r$ of  $A-\lambda B$ is of the form $M_a-\lambda M_b$ where $M_a \in M_{r,r}(\mathbb{K})$ is a submatrix of $A$
and $M_b$ is the matching submatrix of $B$.
Note that 
\begin{equation} \label{eq:submatrix}
M_a = U'A'V',\ M_b = U'B'V'
\end{equation}
where $U', V'$ are $r \times r$ submatrices of $U$ and $V^T$. This implies:
\begin{equation} \label{eq:detprod}
P(\lambda) =
\det(M_a-\lambda M_b) =  (\det U') (\det V') \det (A'-\lambda B'),
\end{equation}
and we have proved the first part of the proposition.

For the proof of (ii), we first observe that $\rk A = r$ since $U,A',V$ are all of rank $r$ (and likewise, $\rk B =r$). 
Hence there exists a submatrix 
$M_a$ of rank~$r$. Also, we observe that $\ker A = \ker B = \ker V^T$ and this kernel is of dimension $n-r$ since $U,V,A',B'$ are all of full rank $r$. The polynomial $P(\lambda)=\det(M_a-\lambda M_b)$ is not identically 0 since $P(0) \neq 0$.

Let $\lambda$ be a nontrivial eigenvalue of $(A,B)$ and $V_{\lambda}$ be the corresponding eigenspace.
By definition, there is $v {\not \in} \ker B$ such that $(A - \lambda B)v = 0$. 
Note that $\ker B \subseteq V_{\lambda}$ since $\ker A= \ker B$. 
Therefore, $V_{\lambda}$ must be of dimension at least $n-r+1$, i.e., $A-\lambda B$ must be of rank at most $r-1$. 
Its submatrix $M_a - \lambda M_b$ must therefore be of rank at most $r-1$, hence $P(\lambda)=0$.

Assume conversely that $P(\lambda)=0$. 
By \cref{eq:submatrix}, $U'$ and $V'$ are of rank $r$ since $M_a$ is of rank $r$.
Hence $\det (A'-\lambda B')=0$ by \cref{eq:detprod}. 
By part (i) of the proposition, this implies $\rk(A-\lambda B) \leq r-1$.
This in turn implies that $\lambda$ is a nontrivial eigenvalue of $(A,B)$ by the converse of the argument in the preceding paragraph. 
Namely, since $\ker (A-\lambda B)$ is of dimension at least $n-r+1$ and $\ker B$ of dimension $n-r$, there must exist $v$ such that $(A-\lambda B)v=0$ and $Bv \neq 0$.
\end{proof}

We are now ready to state the algorithm to compute the spaces $\Ima(M_\ell)$.

\begin{algorithm}\label{algo:images}
\SetAlgoLined
\nonl \textbf{Input:} a tensor $T \in \mathbb{K}^{m \times n \times p}$  with an unknown matrix-vector decomposition  
$T=\sum_{\ell =1}^q M_{\ell} \otimes w_{\ell}$.\\
\nonl \textbf{Output:} The linear spaces $\Ima(M_1),\ldots,\Ima(M_q)$.

Compute two generic linear combinations $T_a,T_b \in M_{m,n}(\mathbb{K})$ of the $3$-slices of $T$.\\
Compute $r=\rk(T_a)$ and find a submatrix $M_a$ of $T_a$ of size $r$ and rank $r$.\\
Let $M_b$ be the matching submatrix of $T_b$, and $P(\lambda)=\det(M_a-\lambda M_b)$.\\
Compute the roots $\lambda_1,\ldots,\lambda_q$ of $P$.\\
Compute the corresponding eigenspaces $V_1=\ker(T_a-\lambda_1 T_b),\ldots,V_q=\ker(T_a-\lambda_q T_b)$.\\
Output $T_b(V_1),\ldots,T_b(V_q)$.

\caption{Computing the linear spaces $\Ima(M_1),\ldots,\Ima(M_q)$.}
\end{algorithm}

The next theorem proves the correctness of the above algorithm.

\begin{theorem} \label{th:images}
Suppose that a tensor $T \in \mathbb{K}^{m \times n \times p}$ has a matrix-vector decomposition of the form:
$$T=\sum_{\ell=1}^q M_{\ell} \otimes w_{\ell}$$
where the linear spaces $\Ima(M_{\ell})$ are in direct sum (and where the linear spaces $\Ima(M_{\ell}^T)$ are also in direct sum.) 

Then Algorithm~\ref{algo:images} on input $T$ and with vectors $a,b \in \mathbb{K}^p$ of the linear combinations computed at step 1 being generically chosen, Algorithm~\ref{algo:images} correctly outputs the linear spaces $\Ima(M_1),\ldots,\Ima(M_q)$.
\end{theorem}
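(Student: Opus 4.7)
The plan is to verify the correctness of each step of Algorithm~\ref{algo:images} by combining three results already established in the paper: Corollary~\ref{cor:rank} (generic rank of $T_a$), Proposition~\ref{prop:eigen} (characterization of nontrivial eigenvalues via a submatrix determinant), and Lemma~\ref{lem:image2} (identification of the eigenspace images with the $\Ima M_\ell$). The only thing to check is that all the generic conditions demanded by these statements can be met simultaneously, which follows from the fact that a finite intersection of nonempty Zariski open subsets of $\mathbb{K}^p$ is again nonempty.

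First I would invoke Proposition~\ref{prop:form2to1} to produce, from the given matrix-vector decomposition, a rank-$r$ decomposition $T=\sum_{i=1}^r u_i \otimes v_i \otimes w_i$ in which the $u_i$ and $v_i$ are each linearly independent, where $r=\sum_\ell \rk(M_\ell)$. This puts the slices of $T$ in the form $Z_k = UD_kV^T$ with $U, V$ of full column rank $r$ (recall~\eqref{eq:slice}), so $T_a = UD_aV^T$ and $T_b = UD_bV^T$ with $D_a, D_b$ diagonal. By the argument in the proof of Proposition~\ref{prop:invert1}, for generic $a, b \in \mathbb{K}^p$ the diagonal matrices $D_a, D_b$ are invertible, hence $\rk(T_a) = \rk(T_b) = r$. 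This justifies step~2 and ensures the existence of an $r \times r$ submatrix $M_a$ of $T_a$ of rank $r$.

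Next, I would apply Proposition~\ref{prop:eigen}(ii) to the pair $(T_a, T_b)$ with the factorizations above: its conclusion is exactly that the nontrivial eigenvalues of the pencil $(T_a, T_b)$ coincide with the roots of $P(\lambda) = \det(M_a - \lambda M_b)$, which are then correctly computed at step~4. At this point, steps~5 and~6 are vindicated by Lemma~\ref{lem:image2}: for a generic choice of $a, b$, the pencil $(T_a, T_b)$ has exactly $q$ nontrivial eigenvalues $\lambda_1, \ldots, \lambda_q$, each of the form $\langle a,w_\ell\rangle/\langle b,w_\ell\rangle$ for a unique $\ell \in [q]$, and $T_b(V_\ell) = \Ima M_{\pi(\ell)}$ for some permutation $\pi$.

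The main (and essentially the only) obstacle is bookkeeping of the genericity: Corollary~\ref{cor:rank}, Proposition~\ref{prop:invert1}, and Lemma~\ref{lem:image2} each carve out a Zariski open subset of $\mathbb{K}^p \times \mathbb{K}^p$ from which $(a,b)$ must be drawn, and I must argue that their common intersection is nonempty. Since $\mathbb{K}$ is infinite and each set is the nonvanishing locus of a nonzero polynomial, the intersection is itself a nonempty Zariski open set, so a generic choice of $(a,b)$ simultaneously satisfies all the hypotheses needed above. This completes the verification that Algorithm~\ref{algo:images} outputs $\{\Ima(M_1), \ldots, \Ima(M_q)\}$ (as an unordered collection).
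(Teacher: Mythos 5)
Your proof is correct and follows essentially the same approach as the paper's: both rely on Proposition~\ref{prop:eigen} to justify that step~4 computes the nontrivial eigenvalues of the pencil $(T_a,T_b)$, and on Lemma~\ref{lem:image2}.(ii) to identify the images $T_b(V_\lambda)$ with the spaces $\Ima M_\ell$. The only differences are presentational: you spell out the appeal to Proposition~\ref{prop:form2to1} to obtain the $Z_k = UD_kV^T$ factorization with $U,V$ of full rank (which the paper uses implicitly), and you make the intersection-of-Zariski-opens genericity argument explicit, which the paper leaves as understood.
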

\begin{proof}
By \cref{prop:eigen}, at step 4 we compute the nontrivial eigenvalues of $(T_a,T_b)$. In this application of 
\cref{prop:eigen}, the matrices $A',B'$ are diagonal and of full rank since $a$ and $b$ are generically chosen.
There are exactly $q$ nontrivial eigenvalues by \cref{lem:image2}.(ii).	
We compute the corresponding eigenspaces at step~5, and another application of \cref{lem:image2}.(ii) shows that the output of step 6 is correct.
\end{proof}

\begin{remark} \label{rem:random}
In Algorithm~\ref{algo:images} the coefficients of the linear combination can be drawn uniformly at random from a finite set $S$. 
The proof of \cref{th:images} reveals that these coefficients should avoid the zero sets of polynomially many polynomials of polynomially bounded degree. 
By the Schwartz-Zippel Lemma~\cite{Schw,zippel}, we can make the probability of error smaller
than, say, 1/3 (or any other constant) by taking $S$ of polynomial size. 
This remark also applies to finite fields: we can take $S \subseteq \mathbb{K}$ if $\mathbb{K}$ is large enough.
If not, we can take the elements of $S$ from a field extension.
\end{remark}

A similar remark applies to the computation of $\Ima(M_1),\ldots,\Ima(M_q)$ using the Moore-Penrose inverse (\cref{prop:image}).

\subsection{Disjoint rows}\label{sec:disjoint}

The assumption that the spaces  $\Ima(M_{\ell})$ are in direct sum clearly holds in the  special case where  
the only nonzero rows of $M_1$ are its first $\rk(M_1)$ rows, the only nonzero rows of $M_2$ are the next $\rk(M_2)$ rows (namely, rows $1+\rk M_1$ to $\rk M_2 + \rk M_1$), and so on.
In this case we say that $T$ has the {\em disjoint rows property}. From \cref{eq:slicematrix}, the slices $Z_1,\ldots,Z_p$ must then have the structure:
\begin{equation} \label{eq:disjoint}
Z_k =  \begin{pmatrix}
Z_{k1} \\
\vdots \\
Z_{kq}\\
0
\end{pmatrix} =  \begin{pmatrix}
w_{1k} M'_1 \\
\vdots \\
w_{qk} M'_q\\
0
\end{pmatrix}
\end{equation}
where $M'_{\ell}$ is the $\rk(M_{\ell}) \times n$ block where all the nonzero entries of $M_{\ell}$ are located. The block of zeros at the bottom of $Z_k$ is present only when $m > \rk(T)=\sum_{\ell=1}^q \rk(M_{\ell})$;  $Z_{k\ell}$ is 
the block of $\rk(M_{\ell})$ rows of $Z_k$ which matches the corrresponding block on the right-hand side of \cref{eq:disjoint}.
Note that  $Z_{k\ell} = w_{\ell k} M'_{\ell}$, i.e., the blocks of $Z_1,\ldots,Z_p$ that are in same position in \cref{eq:disjoint} are all  proportional, and the $w_{\ell k}$ are the coefficients of proportionality. 
This leads to a very simple decomposition algorithm: 

\begin{algorithm} \label{algo:disjointrows}
\SetAlgoLined
\nonl \textbf{Input:} a tensor $T \in \mathbb{K}^{m \times n \times p}$  with the disjoint rows property, 
and the values $\rk(M_1),\ldots,\rk(M_q)$.\\ 
\nonl \textbf{Output:} The matrix-vector decomposition $T=\sum_{\ell =1}^q M_{\ell} \otimes w_{\ell}$.

For $\ell=1$ to $q$:\\
\Indp  Find a slice $Z_k$ of $T$ such that $Z_{k \ell} \neq 0$.\\
 Set $w_{\ell k}=1$, $M'_{\ell}= Z_{k \ell}$.\\
 For all $j \neq k$ set $w_{\ell j}$ so that $Z_{j\ell} = w_{\ell j} M'_{\ell}$.\\
 Construct $M_{\ell}$ by putting the appropriate number of null rows above and below $M'_{\ell}$.\\
\Indm Output the decomposition $T=\sum_{\ell =1}^q M_{\ell} \otimes w_{\ell}$.
\caption{Decomposition of a tensor with disjoint rows.}
\end{algorithm}

\begin{proposition}
If $T$ has the disjoint rows property, Algorithm~\ref{algo:disjointrows} produces a correct matrix-vector decomposition
 $T=\sum_{\ell =1}^q M_{\ell} \otimes w_{\ell}$.
\end{proposition}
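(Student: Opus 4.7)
The plan is to show that, for each $\ell$, the pair $(M_\ell, w_\ell)$ produced by Algorithm~\ref{algo:disjointrows} agrees with the corresponding pair in the true (unknown) matrix-vector decomposition up to a nonzero scalar, so that the tensor product $M_\ell \otimes w_\ell$ is recovered exactly. The argument rests entirely on the block structure in \cref{eq:disjoint} together with the fact that in any matrix-vector decomposition every $M_\ell$ and every $w_\ell$ is nonzero (\cref{def:matrix_vector}).

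To keep the bookkeeping straight, I would denote the true values from \cref{eq:disjoint} by $(M_\ell^\star, w_\ell^\star)$ and the true nonzero block by $M'_\ell{}^\star$, reserving the unstarred symbols for the outputs of the algorithm. First, step~2 always succeeds: since $M_\ell^\star \neq 0$, the block $M'_\ell{}^\star$ is nonzero, and since $w_\ell^\star \neq 0$, some index $k$ satisfies $w_{\ell k}^\star \neq 0$; for such a $k$ one has $Z_{k\ell} = w_{\ell k}^\star M'_\ell{}^\star \neq 0$. Fixing such a $k$, step~3 sets $M'_\ell := Z_{k\ell} = w_{\ell k}^\star M'_\ell{}^\star$, a nonzero scalar multiple of the true block. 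Next, for each $j \neq k$ we have $Z_{j\ell} = w_{\ell j}^\star M'_\ell{}^\star = (w_{\ell j}^\star / w_{\ell k}^\star) M'_\ell$, and since $M'_\ell \neq 0$ this equation uniquely determines the scalar $w_{\ell j} = w_{\ell j}^\star / w_{\ell k}^\star$ computed in step~4. Combined with $w_{\ell k} := 1$, this yields $w_\ell = (1/w_{\ell k}^\star)\, w_\ell^\star$, and after step~5 we have $M_\ell = w_{\ell k}^\star M_\ell^\star$ (the same nonzero block as $M_\ell^\star$ rescaled, surrounded by the same pattern of zero rows dictated by the disjoint rows property). Hence $M_\ell \otimes w_\ell = M_\ell^\star \otimes w_\ell^\star$, and summing over $\ell$ recovers $T$.

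I do not anticipate any genuine obstacle in this argument: the disjoint rows property collapses the decomposition problem to reading off proportionality constants between nonzero blocks, slice by slice. The only care required is in tracking scalar factors, which cancel when forming the tensor products $M_\ell \otimes w_\ell$; this is consistent with the usual ambiguity that each pair $(M_\ell, w_\ell)$ in a matrix-vector decomposition is only defined up to rescaling $M_\ell$ by a constant and $w_\ell$ by its reciprocal.
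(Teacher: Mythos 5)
Your argument is correct and follows essentially the same route as the paper's proof: step~2 succeeds because $M_\ell \neq 0$ and $w_\ell \neq 0$ force some slice to have a nonzero block, the normalization $w_{\ell k}=1$ is harmless because the pair $(M_\ell, w_\ell)$ is only determined up to reciprocal scaling, and the rest is reading off proportionality constants from the block structure of \cref{eq:disjoint}. Your version is a bit more explicit in tracking the scalar $w_{\ell k}^\star$ through the computation, but the underlying argument is the same.
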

\begin{proof}
At line 2 of the algorithm, we look for a slice $Z_k$ where $Z_{k \ell} \neq 0$. 
There must be such a slice since it is assumed that $T$ has a matrix decomposition $T=\sum_{\ell =1}^q M_{\ell} \otimes w_{\ell}$ with the disjoint rows property. 
Indeed, $Z_{k \ell} = 0$ for all $k$ implies $w_{\ell}=0$ or $M_{\ell} = 0$. 
This is not allowed by definition of a matrix-vector decomposition. At line 3 we set $w_{\ell k}=1$. This is legitimate since $w_{\ell}$ and $M_{\ell}$ are only unique up to scaling. 
The correctness of the algorithm then follows from \cref{eq:disjoint}.
 \end{proof}

\begin{remark}
In Algorithm~\ref{algo:disjointrows}, we assumed that $\rk(M_1),\ldots,\rk(M_q)$ are given as input to the algorithm.
We can do this, since we have computed these values (and more) in \cref{sec:images2}. 
\end{remark}

\subsection{Main algorithm}\label{sec:main}

Our main algorithm will reduce the general case to the case of disjoint rows treated in \cref{sec:disjoint}. 
This is easy  once we have determined $\Ima(M_1),\ldots,\Ima(M_q)$. 
Indeed, we can apply a linear map which sends each $\Ima(M_{\ell})$ to $E_{\ell}$ where $E_1$ is the space spanned by the first $\rk(M_1)$ vectors of the canonical basis of $\mathbb{K}^m$, $E_2$ is spanned by the next $\rk(M_2)$ vectors of this basis, etc.
This is justified by \cref{lem:mult}. 
We are now in position to describe our main algorithm and prove our main algorithmic result, and we restate the latter here for convenience.

\begin{algorithm}\label{algo:main}
\SetAlgoLined
\nonl \textbf{Input:} a tensor $T \in \mathbb{K}^{m \times n \times p}$  with an unknown matrix-vector decomposition  
$T=\sum_{\ell =1}^q M_{\ell} \otimes w_{\ell}$.\\
\nonl \textbf{Output:} The above decomposition.

Determine the linear spaces $\Ima(M_1),\ldots,\Ima(M_q)$ using Algorithm~\ref{algo:images} (or via \cref{prop:image}).

Find $A \in GL_m(\mathbb{K})$ such that $A$ maps $\Ima(M_1),\ldots,\Ima(M_q)$ to the linear spaces $E_1,\ldots,E_q$ defined at the beginning of \cref{sec:main}.

Let $T'$ be the tensor with slices $AT_1,\ldots,AT_p$, where $T_1,\ldots,T_p$ are the slices of $T$.

Compute a matrix-vector decomposition $T'= \sum_{\ell=1}^q N_{\ell} \otimes w_{\ell}$ with Algorithm~\ref{algo:disjointrows}.

Output the decomposition $T=\sum_{\ell = 1}^q (A^{-1}N_{\ell}) \otimes w_{\ell}$.

\caption{Matrix-vector decomposition algorithm.}
\end{algorithm}

\uniquenessAlg*

\begin{proof}
At Step 2 we find the required matrix $A$ since the spaces $\Ima(M_{\ell})$ are in direct sum.
The tensor $T'$ defined at step 3 has the disjoint rows property by \cref{lem:mult}.
Thereby, we can decompose it with Algorithm~\ref{algo:disjointrows} at Step 4.
Finally, at step 5 we undo the effect of the multiplications by $A$.
\end{proof}

\begin{remark} \label{rem:diff}
In Algorithm~\ref{algo:main} we could even reduce to a block diagonal structure (i.e., disjoint rows and distinct columns) 
by multiplying the slices of $T$ from the left {\em and from the right}.
But as demonstrated in Section~\ref{sec:disjoint} the disjoint rows property alone is sufficient, so we only multiply from 
the left.
\end{remark}

\begin{remark}
This algorithm departs from  the standard version of Jennrich's algorithm (as presented in~\cite[Section 3.3]{moitra18} and \cref{sec:simdiag}) even in the case where the $M_{\ell}$ have rank 1 and when we use the Moore-Penrose inverse (\cref{prop:eigen}) to determine the spaces $\Ima(M_{\ell})$ at step~1. 
Indeed, we only compute one pseudo-inverse whereas the standard algorithm computes two pseudo-inverses.
Moreover, we do not need to solve an overdetermined system of linear equations to find the vectors $w_i$ (as in step 5 of Algorithm~\ref{algo:jennrich}). 
Instead, their components are directly read off as coefficients of proportionality in Algorithm~\ref{algo:disjointrows}.
\end{remark}

\begin{remark}\label{rem:time}
Suppose that $\mathbb{K}=\mathbb{Q}$ and that the coefficients of the linear combinations $T_a,T_b$ in Algorithm~\ref{algo:images} are chosen at random from a polynomial size set, as suggested in \cref{rem:random}.
Then Algorithm~\ref{algo:main} can be implemented efficiently in the Turing machine model of computation: we obtain a randomized algorithm that runs in time polynomial in the bit size of the input tensor. Indeed, at step 4 of Algorithm~\ref{algo:images} we compute the nontrivial generalized eigenvalues of the pair $(T_a,T_b)$ as the roots of $P(\lambda)$.
By \cref{lem:image2}.(ii), these eigenvalues are of the form  
$\lambda=\langle a,\gamma_{q} \rangle /  \langle b,\gamma_{q} \rangle$, i.e., they are rational numbers.
But it is well known that rational roots of polynomials with rational coefficients can be computed in polynomial time, so 
we can compute the eigenvalues in polynomial time. The other steps of Algorithms~\ref{algo:disjointrows},~\ref{algo:images} and~\ref{algo:main} are standard linear algebraic computations that run in polynomial time.
\end{remark}

\begin{remark}\label{remark: uniqueness algorithm via adjoint algebra}
Similarly to \cref{remark: adjoint algebra uniqueness}, we note here that one can also compute the minimum matrix-vector decomposition by using the indecomposable vector space decomposition algorithm as a subroutine.
By block-diagonalizing the adjoint algebra and computing some generalized eigenvalues, we obtain a decomposition for each of the matrices $M_\ell$, and we can proceed as we did above to compute the $w_\ell$ vectors.
\end{remark}

\section{The minimum rank problem}
\label{sec:rank}

In this section we provide an application of our uniqueness theorem and algorithmic result: finding the matrices of minimum rank in certain generic vector spaces of matrices.
We begin with the following proposition on the minimum ranks of certain vector spaces.



\begin{proposition}\label{th:smallest}
Let $V$ be a subspace of $M_{m,n}(\mathbb{K})$, and let $M_1,\ldots,M_p$ be a basis of $V$.
We assume that the matrices in this basis are ordered by nondecreasing rank ($\rk M_i \leq \rk M_{i+1}$).
If the subspaces $\Ima M_1,\ldots,\Ima M_p$ are in direct sum, then 
$$\min \{\rk M;\ M \in V \setminus \text{Span}(M_1,\ldots,M_{i-1})\} = \rk M_i, \quad \text{for all } i \geq 1.$$
\end{proposition}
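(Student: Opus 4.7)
The plan is to bound $\rk M$ from below for any $M \in V \setminus \text{Span}(M_1,\ldots,M_{i-1})$ and then observe that $M_i$ itself realizes the bound. So first I would expand an arbitrary such $M$ in the basis as $M = \sum_{j=1}^p c_j M_j$. Since $M$ does not lie in $\text{Span}(M_1,\ldots,M_{i-1})$, there must be some $j \geq i$ with $c_j \neq 0$; let $k$ be the largest such index, so $k \geq i$ and, by the nondecreasing ordering of ranks, $\rk M_k \geq \rk M_i$.

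The key step is to exploit the direct-sum hypothesis via a projection. Because $\Ima M_1,\ldots,\Ima M_p$ are in direct sum inside $\mathbb{K}^m$, I can define the projection $\pi$ of $\Ima M_1 \oplus \cdots \oplus \Ima M_p$ onto $\Ima M_k$ along the remaining summands (and extend it arbitrarily to $\mathbb{K}^m$ if desired). For every $x \in \mathbb{K}^n$, we have $Mx = \sum_j c_j M_j x$ with each term $c_j M_j x$ lying in $\Ima M_j$, so $\pi(Mx) = c_k M_k x$. Thus the composition $\pi \circ M$, viewed as a linear map $\mathbb{K}^n \to \Ima M_k$, equals $c_k M_k$, which has rank $\rk M_k$ since $c_k \neq 0$. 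As composition with a linear map cannot increase rank,
\[
\rk M \geq \rk(\pi \circ M) = \rk M_k \geq \rk M_i.
\]

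Finally, equality is witnessed by $M = M_i$ itself: it belongs to $V$, has rank $\rk M_i$, and cannot lie in $\text{Span}(M_1,\ldots,M_{i-1})$ because $M_1,\ldots,M_p$ form a basis. The boundary case $i=1$ is included, where the excluded span is $\{0\}$ and the statement reduces to saying that the minimum rank of a nonzero element of $V$ is $\rk M_1$. There is no substantial obstacle here; the only conceptual content is the use of the direct-sum decomposition to manufacture the projection $\pi$, after which the rank inequality is immediate.
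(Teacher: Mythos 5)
Your proof is correct and follows essentially the same approach as the paper: expand $M$ in the hidden basis, use the direct-sum hypothesis to isolate a summand $M_j$ with $j\geq i$ and nonzero coefficient, deduce $\rk M \geq \rk M_j \geq \rk M_i$, and observe that $M_i$ attains the bound. The only cosmetic difference is that you extract the inequality by composing with a projection onto $\Ima M_k$, whereas the paper argues dually via the kernel inclusion $\ker M \subseteq \ker M_j$ and rank--nullity; the two are interchangeable.
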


\begin{proof}
Let $r_i := \min \{\rk M;\ M \in V \setminus \text{Span}(M_1,\ldots,M_{i-1})\}$.
Since $M_i \in V \setminus \text{Span}(M_1,\ldots,M_{i-1})$, we have $r_i \leq \rk M_i$.
In order to prove the converse inequality, pick any matrix $M \in V \setminus \text{Span}(M_1,\ldots,M_{i-1})\}$.
We can write $M=\sum_{j \in [i-1]} \beta_j M_j + \sum_{j \in I} \alpha_j M_j$ where $I \subseteq \{i,\ldots,p\}$ is nonempty, and $\alpha_j \neq 0$ for all $j \in I$.
If $x \in  \ker M$ then $\sum_{j \in [i-1]} \beta_j M_jx + \sum_{j \in I} \alpha_j M_jx =0$; this implies $M_j x =0$ for all $j \in I$ by the direct sum assumption.
Thus, $\ker M \subseteq \bigcap_{j \in I} \ker M_j$. 
In particular, we have $\ker M \subseteq \ker M_j$ for any $j \in I$, so that $\displaystyle \rk M \geq \rk M_j \geq \min_{1 \leq i \leq p} \rk M_i$ by the rank-nullity theorem.
This completes the proof since $\rk M_j \geq \rk M_i$ for $j \geq i$.
\end{proof}

\subsection{Uniqueness from rank arguments}

With an additional assumption, we can strengthen the conclusion of \cref{th:smallest}. Namely, we can conclude that the only matrices of minimum rank in $V \setminus \text{Span}(M_1,\ldots,M_{i-1})$ are elements of the basis $M_1,\ldots,M_p$. 

\begin{proposition} \label{th:uniquerank}
Let $V$ be a subspace of $M_{m,n}(\mathbb{K})$, and let $M_1,\ldots,M_p$ be a basis of $V$.
We assume that the matrices in this basis are ordered by nondecreasing rank ($\rk M_i \leq \rk M_{i+1}$).
As in \cref{th:smallest} we assume that  the subspaces $\Ima M_1,\ldots,\Ima M_p$ are in direct sum;
in addition we assume that $\rk N^{i,j} > \rk M_i$ for all $1 \leq i < j \leq p$ where
 \begin{equation} \label{eq:Nij}
 N^{i,j}= 
\begin{pmatrix} 
M_i \\ M_j
  \end{pmatrix}
  \end{equation}
  is a $2m \times n$ matrix.
Then for all $i \geq 1$,
 \begin{equation} \label{eq:smallest}
 \min \{\rk M;\ M \in V \setminus \textrm{Span}(M_1,\ldots,M_{i-1})\} = \rk M_i,
 \end{equation}
  and the only matrices $M$ that achieve the minimum are 
  scalar multiples of $M_{i},M_{i+1},\ldots,M_p$.
\end{proposition}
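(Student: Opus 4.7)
The first equality is already the content of \cref{th:smallest}, so the new substance of the proposition is the uniqueness claim. My plan is to refine the analysis in the proof of \cref{th:smallest} in order to pin down which matrices realize the minimum, using the extra hypothesis on $\rk N^{i,j}$ as the additional lever.

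I would let $M \in V \setminus \textrm{Span}(M_1,\ldots,M_{i-1})$ attain $\rk M = \rk M_i$, write $M = \sum_{k=1}^p \alpha_k M_k$, and set $J = \{k : \alpha_k \neq 0\}$. The assumption that $M \notin \textrm{Span}(M_1,\ldots,M_{i-1})$ forces $J \cap \{i,\ldots,p\} \neq \emptyset$, and the goal is to show that $J$ reduces to a single index $j^* \geq i$ with $\rk M_{j^*} = \rk M_i$. The direct-sum argument from \cref{th:smallest} actually yields an equality rather than merely an inclusion: if $x \in \ker M$ then $\sum_{k \in J} \alpha_k M_k x = 0$, and the direct sum of the $\Ima M_k$ combined with $\alpha_k \neq 0$ on $J$ forces $M_k x = 0$ for every $k \in J$. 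So $\ker M = \bigcap_{k \in J} \ker M_k$ and consequently $\rk M = \rk N^{J}$, where $N^{J}$ is the block-stacked matrix with row blocks $M_k$ for $k \in J$. In particular, $\rk N^{k_1,k_2} \leq \rk M$ for every pair $k_1 < k_2$ in $J$.

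With that identity in hand, the core step is a case analysis on $J$. If $J$ contained two distinct indices $k_1 < k_2$ both lying in $\{i,\ldots,p\}$, then the hypothesis applied to the pair $(k_1,k_2)$ gives $\rk N^{k_1,k_2} > \rk M_{k_1} \geq \rk M_i$ (using that the basis is ordered by nondecreasing rank), which together with $\rk N^{k_1,k_2} \leq \rk M$ contradicts $\rk M = \rk M_i$. So $J$ has a unique element $j^*$ in $\{i,\ldots,p\}$. One then needs to rule out any index $k < i$ belonging to $J$; the natural attempt is to apply the hypothesis to the pair $(k,j^*)$ and again use $\rk M \geq \rk N^{k,j^*}$. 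Once $J = \{j^*\}$ is established, we immediately get $M = \alpha_{j^*} M_{j^*}$ with $\rk M_{j^*} = \rk M = \rk M_i$, so $j^* \in \{i,\ldots,p\}$ has rank exactly $\rk M_i$, which is precisely the claimed conclusion.

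The step I expect to be most delicate is the elimination of low-index contributions $k < i$ from $J$. The direct hypothesis $\rk N^{k,j^*} > \rk M_k$ only yields $\rk M > \rk M_k$, which is weaker than the $\rk M > \rk M_i$ one would want when $\rk M_k < \rk M_i$. Bridging this gap probably requires treating the low-index support of $M$ collectively rather than a single pair at a time: one should stack all $M_k$ for $k \in J$ with $k < i$ together with $M_{j^*}$, and exploit the identity $\rk M = \rk N^{J}$ established above to extract a rank bound strictly exceeding $\rk M_i$. The rest of the proof is a straightforward bookkeeping step once this key case has been resolved.
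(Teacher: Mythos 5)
Your proof takes essentially the same route as the paper's: write $M=\sum_k \alpha_k M_k$, use the direct sum of the $\Ima M_k$ to get $\ker M=\bigcap_{k\in J}\ker M_k=\ker N^{J}$, and then play the rank hypothesis on pairs against $\rk M$. The refinement $\rk M=\rk N^{J}$ is correct and slightly cleaner than the bare inclusion the paper uses. But the gap you flag at the end is a real one, and the fix you sketch (stacking all low-index blocks with $M_{j^*}$ and extracting a strict bound) cannot close it: the hypothesis only controls $\rk N^{a,b}$ (for $a<b$) against $\rk M_a$, the \emph{smaller} of the two ranks, so the stacked rank can equal $\rk M_{j^*}$ exactly. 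A concrete failure: in $M_{3,3}(\kk)$ take $M_1=e_1e_1^T$ (rank $1$) and $M_2=e_2e_1^T+e_3e_2^T$ (rank $2$). Then $\Ima M_1=\operatorname{span}(e_1)$ and $\Ima M_2=\operatorname{span}(e_2,e_3)$ are in direct sum, and $\rk N^{1,2}=2>1=\rk M_1$, so every stated hypothesis holds; yet $M_1+M_2$ has rank $2=\rk M_2$, lies outside $\textrm{Span}(M_1)$, and is not a scalar multiple of $M_2$, contradicting the conclusion for $i=2$. So no completion of your sketch is possible under the hypothesis as literally written.

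For context: the paper's own proof has the same blind spot. It asserts ``$\rk N^{i_1,i_2}>\rk M_{i_1}$ by assumption'' even when $i_2<i_1$, whereas the stated assumption only yields $\rk N^{i_1,i_2}>\rk M_{\min(i_1,i_2)}$, which is weaker. The statement (and both proofs) become correct if the hypothesis is strengthened to $\rk N^{a,b}>\rk M_b$ for $a<b$ (equivalently $\rk N^{a,b}>\max(\rk M_a,\rk M_b)$); this is precisely what is verified in the applications, where both the $\Ima M_\ell$ and the $\Ima M_\ell^T$ are in direct sum and hence $\rk N^{a,b}=\rk M_a+\rk M_b$. Note also that for $i=1$ — the only case the paper actually uses — the obstruction vanishes, since any two indices in the support are $\geq 1$ and one may take $i_1<i_2$, giving $\rk N^{i_1,i_2}>\rk M_{i_1}\geq\rk M_1$ directly; your argument is already complete in that case.
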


\begin{proof}
We already know from \cref{th:smallest} that \cref{eq:smallest} holds true, so we only have to prove the second assertion. 
Consider therefore a matrix $M \in V \setminus \textrm{Span}(M_1,\ldots,M_{i-1})\}$ of same rank as $M_i$, and let us write $M=\sum_j \alpha_j M_j$. 
There is at least one nonzero coefficient $\alpha_{i_1}$ in this expression with $i_1 \geq i$. 
Suppose that there is at least one other nonzero coefficient $\alpha_{i_2}$ (which may be smaller or bigger than $i$). 
As we have seen in the proof of \cref{prop:smallest}, this implies $\ker M \subseteq \ker M_{i_1} \cap \ker M_{i_2}$.
Note that this intersection of kernels is defined by the linear system $N^{i_1,i_2}x=0$, and $\rk N^{i_1,i_2} > \rk M_{i_1}$ by assumption. 
This shows that $\rk M > \rk M_{i_1} \geq \rk M_i$, and $M$ would therefore not be of  same rank as $M_i$.
\end{proof}

In particular, taking $i=1$ in \cref{eq:smallest}, we see that that the nonzero matrices of smallest rank in $V$ are up to scaling the matrices in the basis $M_1,\ldots,M_p$ which have same rank as $M_1$. 
We will use this fact in \cref{sec:minrank} to analyze a minrank algorithm, and we will use it again in the appendix to given an alternative proof of the Jennrich uniqueness theorem. 
The general version of \cref{th:uniquerank} ($i \geq 1$ in \cref{eq:smallest}) is only provided for the sake of completeness, and we do not use it in the remainder of the paper.

\begin{remark}
The extra assumption in \cref{th:uniquerank} is necessary, as this new hypothesis is not always satisfied when the images are in direct sum. 
Take for instance $M_1=u_1v^T$, $M_2=u_1v^T$. 
If $u_1,u_2$ are linearly independent and $v \neq 0$, the images are in direct sum. 
But $N^{1,2}$ is of the form $uv^T$, where $u$ is obtained by stacking $v_1$ on top of $v_2$. 
So this matrix is not of higher rank than $M_1$ or $M_2$.
It is worth noting that the extra assumption will be satisfied when $\Ima(M_i)$ are in direct sum and when $\Ima(M_i^T)$ are also in direct sum.
\end{remark}

\subsection{The minrank algorithm} \label{sec:minrank}

In this section we analyze the following algorithm for the computation of all matrices of minimum rank in a given subspace of matrices (up to scalar multiples).
The algorithm simply computes a matrix-vector decomposition of the tensor formed by making the input basis its 3-slices.

\begin{algorithm}\label{algo:minrank}
\SetAlgoLined
\nonl \textbf{Input:} a subspace of $V \subseteq M_{m,n}(\mathbb{K})$, given by a basis $Z_1,\ldots,Z_p$.

\nonl \textbf{Output:} Set of matrices $A_1, \ldots, A_s \in V$ of minimum rank 

Construct the tensor $T \in  \mathbb{K}^{m \times n \times p}$ with slices $Z_1,\ldots,Z_p$.

Apply Algorithm~\ref{algo:main} to compute a matrix-vector decomposition $T=\sum_{\ell=1}^q A_{\ell} \otimes w_{\ell}$ of minimum rank, where $A_1,\ldots,A_q$ are sorted by nondecreasing rank.

Let $A_1,\ldots,A_s$ be the matrices of minimum rank among $A_1,\ldots,A_q$, and let $\rho = \rk A_1= \cdots = \rk A_s$. Declare that $$\rho=\min_{M \in V, M \neq 0} \rk M,$$  and output matrices $A_1,\dots,A_s$.

\caption{Minrank algorithm}
\end{algorithm}

\thmMinRank*

\begin{proof}
By \cref{th:smallest},  $\min_{M \in V, M \neq 0} \rk M = \min_{1 \leq i \leq p} \rk M_i.$
Moreover, by the case $i=1$ of \cref{th:uniquerank}, the only nonzero matrices in $V$ of minimum rank are up to scalar multiplication the matrices of minimum rank in the list $M_1,\ldots,M_p$.
Note that the hypothesis $\rk N^{i,j} > \rk M_i$ in \cref{th:uniquerank} is indeed satisfied 
due to the assumption that the spaces $\Ima M_{\ell}^T$ are in direct sum: we have 
$$\rk N^{i,j} = \rk (N^{i,j})^T = \rk(M_i^T)+\rk(M_j^T) > \rk M_i.$$

The correctness of the algorithm therefore follows from two claims
which we establish in the remainder of the proof:
\begin{itemize}
    \item[(i)] In the matrix-vector decomposition computed at step 2 of the algorithm,  the number of terms $q$ is equal to $p$.
    \item[(ii)] The matrices $A_1,\ldots,A_q$ are equal up to permutation and scaling to the matrices $M_1,\ldots,M_p$ of the hidden basis.
\end{itemize}

Since $Z_1,\ldots,Z_p$ and $M_1,\ldots,M_p$ are two bases of $V$, there is a change of basis matrix
 $W \in GL_p(\mathbb{K})$ such that $Z_k = \sum_{\ell =1}^p w_{\ell k} M_{\ell}$.
Hence by \cref{eq:slicematrix}, we have the decomposition 
 \begin{equation} \label{eq:basis}
 T=\sum_{\ell} M_{\ell} \otimes w_{\ell}, 
 \end{equation}
 where $w_{\ell} = (w_{\ell 1},\ldots,w_{\ell p})$. This is a matrix-vector decomposition since the $w_{\ell}$ are pairwise linearly independent (in fact, they are linearly independent since $W$ is invertible). 
By \cref{th:unique2}, this is the unique matrix-vector decomposition of $T$ of minimal rank.
The decomposition produced at step~2  of the algorithm must therefore be the same as \cref{eq:basis} up to permutation and scaling.
In particular, the two decompositions have the same number of terms (thereby proving claim (i)) and $A_1,\ldots,A_q$ are equal up to permutation and scaling to $M_1,\ldots,M_p$ (thereby proving claim (ii)).
\end{proof}

\begin{remark}\label{rem:minrank}
The decomposition $T=\sum_{\ell=1}^q A_{\ell} \otimes w_{\ell}$ computed at step 2 of the algorithm provides a certificate of correctness of its output. 
Indeed, from this decomposition we can easily check that $p=q$, that the linear spaces $\Ima(A_{\ell})$ are in direct sum, and that the linear spaces $\Ima(A_{\ell}^T)$ are also in direct sum.
These 3 conditions will be satisfied generically by \cref{th:minrank}; and whenever they are satisfied, \cref{th:minrank} guarantees that the algorithm's output is correct (note in particular that the condition $p=q$ implies that $A_1,\ldots,A_q$ is a basis of $V$).
\end{remark}

{\small
\bibliographystyle{alpha}

}

\appendix

\section{The simultaneous diagonalization algorithm}
\label{sec:simdiag}

In this section we recall the simultaneous diagonalization / Jennrich algorithm following~\cite[Section 3.3]{moitra18}.
It provides an efficient decomposition algorithm for generic tensors of rank $r \leq \min(m,n)$.
We assume that $K$ is the field of real or complex numbers since this version of the algorithm uses the Moore-Penrose inverse.
\begin{algorithm} \label{algo:jennrich}
\SetAlgoLined
\nonl \textbf{Input:} a tensor $T \in K^{m \times n \times p}$ satisfying the conditions of Theorem~\ref{th:jennrich}.\\ 
\nonl \textbf{Output:} the (unique) decomposition $T=\sum_{i=1}^r u_i \otimes v_i \otimes w_i$.

Compute two generic linear combinations 
$$T_a=\sum_{k=1}^p a_k T_k,\ T_b=\sum_{k=1}^p b_k T_k$$
of the slices of $T$.\\
 Compute the nonzero eigenvalues $\lambda_1,\ldots,\lambda_r$ and the corresponding eigenvectors $u_1,\ldots,u_r$ of $T_a{T_b}^{\dagger}$.\\
Compute the nonzero eigenvalues $\mu_1,\ldots,\mu_r$ and the corresponding eigenvectors $v_1,\ldots,v_r$ of $({T_a}^{\dagger}T_b)^T$.\\
Reorder these eigenvectors and their eigenvalues  to make sure that the corresponding eigenvalues are reciprocal
 (i.e., $\lambda_i \mu_i=1$).\\
Solve for $w_i$ in the linear system $T=\sum_{i=1}^r u_i \otimes v_i \otimes w_i$, output this decomposition.
\caption{decomposition by simultaneous diagonalization (sometimes called "Jennrich's algorithm").}
\end{algorithm}

An analysis of 
this algorithm can be found in~\cite[Section 3.3]{moitra18}, where it is called ``Jennrich's algorithm.'' In particular, it can be shown that with high probability over the choice of the coefficients $a_k,b_k$,  each of the matrices at steps 2 and 3 have exactly $r$ distinct nonzero eigenvalues.
 Moreover, these eigenvalues are reciprocal (refer to step 4). An optimized version of 
the simultaneous diagonalization algorithm (and a detailed complexity analysis) for the special case of 
symmetric tensors can be found in~\cite{KS23b,KS24}.

\section{Jennrich's uniqueness theorem from rank arguments}

Nowadays, the best known proof of the Jennrich uniqueness is probably the spectral one. It has the advantage of yielding an  efficient tensor decomposition algorithm based on simultaneous diagonalization (see~\cite[Chapter 3]{moitra18}, which emphasizes the algorithmic point of view).
This uniqueness theorem can be traced back (in a slightly less general form)   to Harshman~\cite{harshman70}, where it is attributed to Jennrich. 
Another version of the uniqueness theorem appears in a second paper by Harshman~\cite{harshman72}, and the proof seems closer to a simultaneous diagonalization argument. 
Jennrich's uniqueness theorem also follows from the (more involved) Kruskal uniqueness theorem~\cite{kruskal77,rhodes10}.
In this appendix we give a proof which builds on the rank arguments from Section~\ref{sec:rank}.

\begin{lemma} \label{lem:span}
If the vectors $w_1,\ldots,w_q$ in~(\ref{eq:matrixvector}) are linearly independent, the span of the matrices $Z_1,\ldots,Z_p$ is equal to the span of $M_1,\ldots,M_q$.
\end{lemma}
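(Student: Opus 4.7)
The plan is to prove the equality $\mathrm{span}(Z_1,\ldots,Z_p) = \mathrm{span}(M_1,\ldots,M_q)$ by establishing both inclusions separately. The inclusion $\mathrm{span}(Z_1,\ldots,Z_p) \subseteq \mathrm{span}(M_1,\ldots,M_q)$ is immediate from equation~(\ref{eq:slicematrix}), which already expresses every slice $Z_k$ as the linear combination $\sum_{\ell=1}^q w_{\ell k} M_\ell$.

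For the reverse inclusion, the key idea is to use the linear independence hypothesis to exhibit each $M_\ell$ explicitly as a linear combination of the slices. First, I would form the $q \times p$ matrix $W$ whose $\ell$-th row is $w_\ell^T$; by hypothesis $W$ has rank $q$, hence admits a right inverse. Equivalently, there exist dual vectors $\alpha^{(1)},\ldots,\alpha^{(q)} \in \mathbb{K}^p$ satisfying $\langle \alpha^{(\ell)}, w_{\ell'} \rangle = \delta_{\ell \ell'}$ for all $\ell,\ell' \in [q]$. Second, I would plug the definition of $Z_k$ from~(\ref{eq:slicematrix}) into the corresponding linear combination of the slices and swap the order of summation:
$$\sum_{k=1}^p \alpha^{(\ell)}_k Z_k \;=\; \sum_{k=1}^p \alpha^{(\ell)}_k \sum_{\ell'=1}^q w_{\ell' k} M_{\ell'} \;=\; \sum_{\ell'=1}^q \Bigl(\sum_{k=1}^p \alpha^{(\ell)}_k w_{\ell' k}\Bigr) M_{\ell'} \;=\; M_\ell,$$
where the last equality uses the duality relation. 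This shows that each $M_\ell$ lies in $\mathrm{span}(Z_1,\ldots,Z_p)$, completing the proof.

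There is essentially no obstacle here: the statement is an elementary linear-algebra consequence of~(\ref{eq:slicematrix}) combined with the existence of a dual family to the $w_\ell$'s. The only subtlety worth noting is that the construction of the dual vectors $\alpha^{(\ell)}$ uses the linear independence hypothesis in a crucial way; without it, only the easy inclusion would survive in general.
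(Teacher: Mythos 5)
Your proof is correct and matches the paper's argument in essence: both establish the easy inclusion from~(\ref{eq:slicematrix}) and then invert the matrix formed from the $w_\ell$'s (your dual vectors $\alpha^{(\ell)}$ are exactly the rows of the paper's left inverse $W'$) to write each $M_\ell$ explicitly as a linear combination of the slices. The only cosmetic difference is that the paper carries out the inversion coordinate-wise via the vectors $t_{ij}, m_{ij}$, whereas you work directly at the level of the matrices $Z_k$ and $M_\ell$.
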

\begin{proof}
It follows immediately from~(\ref{eq:slicematrix}) that the span of the $Z_k$ is included in the span of the $M_{\ell}$, and for this no hypothesis on the $w_{\ell}$ is needed. 

For the converse first observe that the entries of $T$ are given by the formula: 
\begin{equation} \label{eq:entries}
T_{ijk} = \sum_{\ell=1}^q (M_{\ell})_{ij} w_{\ell k}.
\end{equation}
Let us denote by $t_{ij}$ and $m_{ij}$ the column vectors of size $p$ and $q$ with respective entries $(T_{ijk})_{1 \leq k \leq p}$ and $((M_{\ell})_{ij})_{1 \leq  \ell \leq  q}$. 
We can rewrite~(\ref{eq:entries}) as the matrix-vector product $t_{ij}=Wm_{ij}$ where $W$ has $w_1,\ldots,w_q$ as column vectors. 
Since these vectors are linearly independent, there is a $q \times p$ matrix $W'$ such that $W'W=I_q$. From this we obtain $m_{ij}=W' t_{ij}$ and the identity
$$M_{\ell} = \sum_{k=1}^p w'_{k \ell}Z_k,$$
which is  converse to~(\ref{eq:slicematrix}).
\end{proof}

\begin{lemma} \label{lem:rank2b}
Let $M \in M_{2m,n}(K)$ be a matrix of the form 
$$M=\begin{pmatrix}
u v^T \\ u'v'^T
\end{pmatrix}$$
where $u,u' \in K^m$ and $v,v' \in K^n$. If the two vectors vectors $u,u'$ are linearly independendent, and  the two vectors vectors $v,v''$ are also linearly independendent, then $\rk M = 2$.
\end{lemma}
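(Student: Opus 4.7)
The plan is to reduce this directly to Lemma~\ref{lem:rankr}. Write
\[
M = \tilde{u}\, v^T + \tilde{u}'\, v'^T, \qquad
\tilde{u} = \begin{pmatrix} u \\ 0 \end{pmatrix},\quad
\tilde{u}' = \begin{pmatrix} 0 \\ u' \end{pmatrix} \in K^{2m}.
\]
This exhibits $M$ in the form $\sum_{i=1}^{2} \tilde{u}_i v_i^T$ needed to apply Lemma~\ref{lem:rankr}.

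The first step is to check that $\tilde{u}, \tilde{u}'$ are linearly independent in $K^{2m}$. Since $u,u'$ are linearly independent by hypothesis, in particular $u \neq 0$ and $u' \neq 0$; if $\alpha \tilde{u} + \beta \tilde{u}' = 0$ then reading off the two blocks forces $\alpha u = 0$ and $\beta u' = 0$, hence $\alpha = \beta = 0$. Thus the family $(\tilde{u}, \tilde{u}')$ has rank $2$. By assumption, the family $(v, v')$ also has rank $2$.

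The second step is to invoke the second assertion of Lemma~\ref{lem:rankr} with $r = r_u = r_v = 2$, which immediately yields $\rk M = 2$.

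There is no serious obstacle here: the only thing to verify is the linear independence of the lifted vectors $\tilde{u}, \tilde{u}'$, which follows from the block structure and the fact that $u, u'$ are each nonzero. Alternatively, one could argue directly that $\ker M$ consists of those $x \in K^n$ with $v^T x = 0$ and $v'^T x = 0$ (using $u, u' \neq 0$), an intersection of two distinct hyperplanes of codimension $2$; but the route through Lemma~\ref{lem:rankr} is the most economical since the relevant machinery is already in place.
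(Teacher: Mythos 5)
Your proof is correct, and it is a genuinely different (if closely related) route from the paper's. The paper argues directly: $Mx = 0$ forces $(v^Tx)u = 0$ and $(v'^Tx)u' = 0$, hence $v^Tx = v'^Tx = 0$, so $\dim\ker M = n-2$. You instead lift the rank-1 blocks into $K^{2m}$, observe that $M = \tilde u v^T + \tilde u' v'^T$ with $\tilde u, \tilde u'$ linearly independent, and then invoke \cref{lem:rankr}. What your route buys is that the lemma does all the work in a clean, reusable way (and, as you note, \cref{lem:rankr} is itself the engine behind several other arguments in the paper); what the paper's route buys is that it avoids introducing the lifted vectors and goes straight to the kernel. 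One small point worth flagging: as stated, \cref{lem:rankr} has both families $(u_i)$ and $(v_i)$ living in the same space $\kk^n$, whereas in your application $\tilde u, \tilde u' \in K^{2m}$ while $v, v' \in K^n$. The proof of \cref{lem:rankr} never uses that the two families share an ambient space, so your application is sound, but if you want to cite the lemma verbatim you would either need to note this mild generalization or restate it for $u_i \in \kk^m$, $v_i \in \kk^n$. Your closing remark sketches exactly the paper's argument, so you have in fact identified both routes.
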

\begin{proof}
In order to show that $\rk M=2$, we'll show that $\dim \ker M = n-2$.
A vector $x \in K^n$ belongs to $\ker M$ if and only if $(v^Tx)u+(v'Tx)u'=0$. Since $u,u'$ are linearly independent, this is equivalent to $v^Tx=v'Tx=0$. Since $v,v'$  are linearly independent, the space of solutions of this linear system has dimension $n-2$.
\end{proof}

\begin{theorem}[Jennrich's uniqueness theorem] \label{th:jennrich2}
Let $T=\sum_{i=1}^r u_i \otimes v_i \otimes w_i$ be a tensor of format $m \times n \times p$ such that:
\begin{itemize}
\item[(i)] The vectors $u_i$ are linearly independent.
\item[(ii)] Every pair of vectors in the set $\{v_i;\ 1\leq i \leq r\}$ is linearly independent.
\item[(iii)] The vectors $w_i$ are linearly independent.
\end{itemize}
Then $\rk(T)=r$, and the decomposition of $T$ as a sum of $r$ rank one tensors is essentially unique.
\end{theorem}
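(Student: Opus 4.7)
The plan is to apply \cref{th:uniquerank} to the subspace $V=\text{Span}(M_1,\ldots,M_r) \subseteq M_{m,n}(\mathbb{K})$, where I set $M_i := u_iv_i^T$. The hypotheses of that theorem follow directly from the present assumptions: by (i), the images $\Ima(M_i)=\text{Span}(u_i)$ are in direct sum; and for any $1 \leq i<j\leq r$, the stacked matrix $N^{i,j}$ from~\cref{eq:Nij} has rank $2>1=\rk M_i$ by \cref{lem:rank2b}, since $u_i,u_j$ are linearly independent by (i) and $v_i,v_j$ are linearly independent by (ii). Hence \cref{th:uniquerank} applies and asserts that the only nonzero matrices of minimum rank $1$ in $V$ are scalar multiples of $M_1,\ldots,M_r$.

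Next, I would establish that $\rk(T)=r$. The matrices $M_1,\ldots,M_r$ are linearly independent: any relation $\sum_i \alpha_i u_iv_i^T=0$, applied to a vector $x$ with $v_j^Tx\neq 0$, forces $\alpha_j=0$ by the linear independence of the $u_i$. Thus $\dim V=r$. Since the $w_i$ are linearly independent by (iii), \cref{lem:span} yields $V=\text{Span}(Z_1,\ldots,Z_p)$, where $Z_k$ are the 3-slices of $T$. For any alternative decomposition $T=\sum_{i=1}^{r'} u'_i\otimes v'_i\otimes w'_i$, each slice lies in $\text{Span}(M'_1,\ldots,M'_{r'})$ with $M'_i:=u'_i(v'_i)^T$, so $r'\geq\dim V=r$, establishing $\rk(T)=r$.

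For essential uniqueness, take any rank-$r$ decomposition $T=\sum_{i=1}^r u'_i\otimes v'_i\otimes w'_i$. By the dimension count above, the rank-one matrices $M'_1,\ldots,M'_r$ must span $V$ and therefore form a basis of $V$ consisting of minimum-rank elements. By the first step, after a permutation $\pi$ we obtain $M'_i=\alpha_i M_{\pi(i)}$ for some nonzero scalars $\alpha_i$. Expanding each slice $Z_k$ in the basis $M_1,\ldots,M_r$ in the two ways $\sum_i w_{ik} M_i = \sum_i w'_{ik}\alpha_i M_{\pi(i)}$ pins down $w'_i=\alpha_i^{-1} w_{\pi(i)}$; combined with $u'_i(v'_i)^T=\alpha_i u_{\pi(i)}v_{\pi(i)}^T$, this yields $u'_i\otimes v'_i\otimes w'_i = u_{\pi(i)}\otimes v_{\pi(i)}\otimes w_{\pi(i)}$, which is essential uniqueness.

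The main subtlety lies in verifying the second hypothesis of \cref{th:uniquerank}: the pairwise independence assumption (ii) is exactly what \cref{lem:rank2b} requires, so the argument uses the full strength of (ii) but nothing more, which is precisely the point of this rank-based proof route (in particular, no global independence of the $v_i$ is needed). The remaining bookkeeping of scalars across the three tensor modes is routine once the $M'_i$ have been identified as scalar multiples of the $M_{\pi(i)}$.
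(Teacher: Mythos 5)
Your proof is correct and follows essentially the same route as the paper's: both define $V$ as the span of the rank-one matrices $M_i = u_iv_i^T$, identify $V$ with the span of the 3-slices via \cref{lem:span}, verify the extra hypothesis of \cref{th:uniquerank} via \cref{lem:rank2b}, conclude that the $M_i$ are (up to scaling) the only rank-one matrices in $V$, and then argue that any rank-$r$ decomposition must produce rank-one matrices forming a basis of $V$, hence equal to the $M_i$ up to scaling and permutation. Your write-up is slightly more explicit about the scalar bookkeeping in the last step, but the key lemmas, the structure, and the logic are the same as in the paper's appendix.
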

Compared to the usual statement of this uniqueness theorem, we have switched the roles of the $v_i$ and $w_i$: it is usually assumed that the $w_i$ (instead of the $v_i$) are pairwise linearly independent (see Theorem~\ref{th:jennrich}).
The proof below hinges on the fact that $u_1 \otimes v_1,\ldots,u_r \otimes v_r$ are up to scaling the only matrices of rank 1 in
the span of the 3-slices of $T$. With the usual statement of the uniqueness theorem, one would have to work with the 1-slices or the 2-slices instead of the 3-slices.
\begin{proof}
Let $V$ be the span of the 3-slices of $T$. By Lemma~\ref{lem:span}, $V = \text{Span}(u_1 v_1^T,\ldots,u_r v_r^T)$.
Since the $u_i$ are linearly independent and the $v_i$ nonzero, these $r$ matrices are linearly independent.
In particular, $\dim V = r$ and $V$ contains $r$ matrices of rank 1. 
Furthermore, by Lemma~\ref{lem:rank2b} we can apply Theorem~\ref{th:uniquerank} to $V$ and to its basis 
$M_1=u_1v_1^T,\ldots,M_r=u_rv_r^T$. As a result, taking $i=1$ in~(\ref{eq:smallest}), we conclude that $M_1,\ldots,M_r$
are up to scaling the only rank-1 matrices in $V$.

Consider now any other decomposition $T=\sum_{i=1}^{r'} u'_i \otimes v'_i \otimes w'_i$. 
Note that $V \subseteq \text{Span}(u'_1 {v'_1}^T,\ldots,u'_{r'} {v'_{r'}}^T)$, but we have seen that $\dim V = r$.
Hence $r' \geq r$, and we have shown that $\rk(T)=r$.

For the remainder of the proof we will assume that $r'=r$; it remains to show that the rank-1 tensors $u_i \otimes v_i \otimes w_i$ are up to permutation the same as $u'_i \otimes v'_i \otimes w'_i$.
We have just seen that $V \subseteq \text{Span}(u'_1 {v'_1}^T,\ldots,u'_r {v'_r}^T)$. Since $\dim V = r$, this must be an equality: 
$V = \text{Span}(u'_1 {v'_1}^T,\ldots,u'_r {v'_r}^T)$, i.e., the matrices $u'_1 {v'_1}^T,\ldots,u'_r {v'_r}^T$ form a basis of $V$.
But we have seen that $u_1 v_1^T,\ldots,u_r v_r^T$ are up to scaling the only rank-1 matrices in $V$.
We conclude that $u_i v_i^T = u'_i {v'_i}^T,$ up to scaling and permutation.
Finally, we observe that there is a unique way of writing each 3-slice of $T$ as a linear combination of the  $u_i v_i^T$ since these matrices form a basis of $V$. This establishes the uniqueness of the vectors $w_1,\ldots,w_r$, and completes the proof.
\end{proof}

\section{Computing  \texorpdfstring{$\Ima(M_{\ell})$}{ImaMell} with the Moore-Penrose inverse}\label{sec:images}

In this section we assume that $K$ is the field of real or complex numbers.
We recall the following properties of the Moore-Penrose inverse.
\begin{proposition} \label{prop:pseudo}
Consider two matrices $A \in M_{m,n}(K)$, $B \in M_{n,p}(K)$. Their Moore-Penrose inverses 
$A^{\dagger} \in M_{n,m}(K)$, $B^{\dagger} \in M_{p,n}(K)$ satisfy the following properties:
\begin{itemize}
\item[(i)] If $A$ has linearly independent columns, $A^{\dagger} A = I_n$.
\item[(ii)] If $B$ has linearly independent rows, $B B^{\dagger} = I_m$.
\item[(iii)] If $A$ has linearly independent columns or $B$ has linearly independent rows, then $(AB)^{\dagger} = B^{\dagger} A^{\dagger}$. 
\end{itemize}
\end{proposition}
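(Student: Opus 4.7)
The plan is to exploit the closed-form expressions for the Moore-Penrose pseudoinverse that are valid in the full-rank cases. Specifically, if $A\in M_{m,n}(K)$ has linearly independent columns then $A^*A$ is a nonsingular $n\times n$ matrix and one has the formula $A^{\dagger}=(A^*A)^{-1}A^*$ (here $A^*$ denotes the conjugate transpose). Dually, if $B\in M_{n,p}(K)$ has linearly independent rows, then $BB^*$ is nonsingular and $B^{\dagger}=B^*(BB^*)^{-1}$. Both formulas can be derived from the four defining Penrose axioms by checking them directly, or by invoking the singular value decomposition.

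With these formulas in hand, parts (i) and (ii) reduce to immediate one-line calculations:
\[
A^{\dagger}A=(A^*A)^{-1}(A^*A)=I_n, \qquad BB^{\dagger}=(BB^*)(BB^*)^{-1}=I_n.
\]
This is the easy part.

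For (iii), I would set $C:=B^{\dagger}A^{\dagger}$ and verify the four Penrose axioms characterizing the pseudoinverse of $M:=AB$, namely $MCM=M$, $CMC=C$, $(MC)^*=MC$ and $(CM)^*=CM$. The two hypotheses ($A$ with independent columns, or $B$ with independent rows) enter symmetrically: in the first case part (i) gives $A^{\dagger}A=I_n$, and in the second case part (ii) gives $BB^{\dagger}=I_n$. Either identity lets us collapse the middle of the products $MCM=ABB^{\dagger}A^{\dagger}AB$ and $CMC=B^{\dagger}A^{\dagger}ABB^{\dagger}A^{\dagger}$, after which the general Penrose identities $BB^{\dagger}B=B$, $B^{\dagger}BB^{\dagger}=B^{\dagger}$, $AA^{\dagger}A=A$ and $A^{\dagger}AA^{\dagger}=A^{\dagger}$ close the computation. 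For the two self-adjointness conditions I would rewrite, e.g., $MC=A(BB^{\dagger})A^{\dagger}$ in the first case and use the general fact $(BB^{\dagger})^*=BB^{\dagger}$ together with $AA^{\dagger}$ being self-adjoint; the second case is symmetric.

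The main obstacle is a clean, uniform treatment of the two alternative hypotheses in (iii), since although the computations are symmetric under the exchange $(A,B)\leftrightarrow(B^*,A^*)$, one has to be careful not to implicitly assume both full-rank conditions at once. In the write-up I would handle one case in detail and then invoke the symmetry, noting that $(M^{\dagger})^*=(M^*)^{\dagger}$ to transfer the argument.
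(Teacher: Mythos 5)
The paper does not actually prove this proposition; it is stated at the top of Appendix~\ref{sec:images} as a recollection of standard facts about the Moore--Penrose inverse and then applied in Lemma~\ref{lem:pseudo}. Your treatment of (i) and (ii) via the full-column-rank formula $A^{\dagger}=(A^*A)^{-1}A^*$ and its dual is the standard route and is correct. (Incidentally, you correctly obtain $BB^{\dagger}=I_n$ in (ii), whereas the paper's statement says $I_m$; since $B\in M_{n,p}(K)$ the product $BB^{\dagger}$ is $n\times n$, so the paper has a typo.)

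For (iii) there is a genuine gap, and it lies exactly where you flagged an ``obstacle.'' The two product axioms $MCM=M$ and $CMC=C$ do collapse under a single full-rank hypothesis as you describe, but the self-adjointness axioms do not. With only $A^{\dagger}A=I_n$ you get $MC=A(BB^{\dagger})A^{\dagger}$, and self-adjointness of $BB^{\dagger}$ and of $AA^{\dagger}$ separately does not make this self-adjoint, because $A$ and $A^{\dagger}$ are not adjoints of one another. In fact the disjunctive form of (iii) is false: take
$A=\begin{pmatrix}1&0\\1&1\\0&1\end{pmatrix}$ and $B=\begin{pmatrix}1\\0\end{pmatrix}$,
so that $A$ has linearly independent columns; then $(AB)^{\dagger}=\tfrac12(1,1,0)$ while $B^{\dagger}A^{\dagger}=\tfrac13(2,1,-1)$. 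The reverse-order law $(AB)^{\dagger}=B^{\dagger}A^{\dagger}$ holds under the \emph{conjunction} (full column rank of $A$ \emph{and} full row rank of $B$), in which case your Penrose-axiom verification closes since both $A^{\dagger}A=I$ and $BB^{\dagger}=I$ (equivalently, via the rank-decomposition formula $(AB)^{\dagger}=B^*(BB^*)^{-1}(A^*A)^{-1}A^*$), or under the stronger disjunction ``$A^*A=I$ or $BB^*=I$.'' This does not affect the paper's correctness: in Lemma~\ref{lem:pseudo}, each application of (iii) is of the conjunctive kind ($U$ has full column rank and $D_bV^T$ has full row rank; then $D_b$ has full column rank and $V^T$ has full row rank), so the needed conclusion is valid even though the ``or'' in the stated hypothesis should be an ``and.''
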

We will switch back and forth between the point of views of Theorems~\ref{th:unique1} and~\ref{th:unique2}.
Recall that to go from the latter (the matrix-vector point of view) to the former (the viewpoint of ``ordinary rank decompositions''), we just need to write each matrix $M_{\ell}$ as the following sum of $\rk(M_{\ell})$ matrices of rank 1:
\begin{equation} \label{eq:rank1}
M_{\ell}=\sum_i u_i \otimes v_i.
\end{equation}
Note that $\Ima(M_{\ell})$ is the span of the vectors $u_i$ occurring in~(\ref{eq:rank1}).
As in Section~\ref{sec:proof}, for any $c \in K^p$ we denote by $T_c$ the linear combination of slices  $\sum_{k=1}^p c_kZ_k$, and $D_c$ denotes the linear combination $\sum_{k=1}^p c_kD_k$ in~(\ref{eq:diag}). By~(\ref{eq:slice}), $T_c = UD_c V^T$.
The computations that follow are reminiscent of the classical treatment of Jennrich's algorithm as in, e.g.,~\cite{moitra18} (see also~\cite{leurgans93}).
\begin{lemma} \label{lem:pseudo}
For any $a  \in K^p$ and a generically chosen $b \in K^p$, $T_a T_b^{\dagger}=U D_a D_b^{-1} U^{\dagger}$.
\end{lemma}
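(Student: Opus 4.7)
The plan is to compute $T_b^{\dagger}$ by exploiting the factorization $T_b = UD_bV^T$ and repeatedly applying the pseudoinverse product rule from \cref{prop:pseudo}, and then to simplify the resulting expression for $T_a T_b^{\dagger}$ using the one-sided identity $V^T (V^T)^{\dagger} = I_r$.

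First I would verify the genericity assumption on $b$. Recall from \cref{eq:diag} that $D_b = \diag(\langle b,w_1\rangle,\ldots,\langle b,w_r\rangle)$. Since each $w_i$ is nonzero, the linear form $b \mapsto \langle b,w_i\rangle$ is not identically zero, and so for $b$ outside a union of $r$ hyperplanes (a Zariski-open condition) all the diagonal entries of $D_b$ are nonzero. On this generic set, $D_b$ is invertible and in particular $D_b^{\dagger}=D_b^{-1}$. Moreover, the columns of $U$ are the $u_i$, which are linearly independent by hypothesis, and the rows of $V^T$ are the $v_i^T$, which are also linearly independent.

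Next I would compute $T_b^{\dagger}$. Write $T_b = U \cdot (D_b V^T)$. Since $U$ has linearly independent columns, \cref{prop:pseudo}(iii) gives $T_b^{\dagger} = (D_b V^T)^{\dagger} U^{\dagger}$. For the inner pseudoinverse, write $D_b V^T = D_b \cdot V^T$; since $D_b$ is invertible (hence its columns are linearly independent), another application of (iii) yields $(D_b V^T)^{\dagger} = (V^T)^{\dagger} D_b^{\dagger} = (V^T)^{\dagger} D_b^{-1}$. Combining these,
\[
T_b^{\dagger} = (V^T)^{\dagger} D_b^{-1} U^{\dagger}.
\]

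Finally, I would multiply and simplify:
\[
T_a T_b^{\dagger} = (U D_a V^T)\bigl((V^T)^{\dagger} D_b^{-1} U^{\dagger}\bigr) = U D_a \bigl(V^T (V^T)^{\dagger}\bigr) D_b^{-1} U^{\dagger}.
\]
Since $V^T$ has linearly independent rows, \cref{prop:pseudo}(ii) gives $V^T (V^T)^{\dagger} = I_r$, so the middle factor collapses and we obtain $T_a T_b^{\dagger} = U D_a D_b^{-1} U^{\dagger}$, as required. There is no real obstacle here: the only subtlety is being careful that each invocation of the product rule for the pseudoinverse satisfies one of the two one-sided hypotheses, which is ensured by the linear independence of the $u_i$ and $v_i$ together with the generic invertibility of $D_b$.
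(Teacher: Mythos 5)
Your proof is correct and follows exactly the paper's own argument: factor $T_b = U(D_b V^T)$, apply the pseudoinverse product rule (Proposition~\ref{prop:pseudo}(iii)) twice to obtain $T_b^\dagger = (V^T)^\dagger D_b^{-1} U^\dagger$, then left-multiply by $T_a = U D_a V^T$ and collapse the middle factor $V^T (V^T)^\dagger = I_r$ via Proposition~\ref{prop:pseudo}(ii). The genericity check that $D_b$ is invertible (coming from the proof of Proposition~\ref{prop:invert1}) is likewise the same.
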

\begin{proof}
For a generically chosen $b \in K^p$, $\rk(D_b)=r$ as shown in the proof of Proposition~\ref{prop:invert1}. This implies that $D_bV^T$ has linearly independent rows. Since $U$ has linearly independent columns, 
Proposition~\ref{prop:pseudo}.(iii) implies
$T_b^{\dagger}=(D_b V^T)^{\dagger} U^{\dagger}$. By the same token, since $D_b$ has linearly independent columns and $V^T$ linearly independent rows, $(D_b V^T)^{\dagger} = (V^T)^{\dagger} D_b^{\dagger}=(V^T)^{\dagger} D_b^{-1}$.

To conclude, we multiply by $T_a=UD_a V^T$ and use Proposition~\ref{prop:pseudo}.(ii): 
$V^T (V^T)^{\dagger} = I_r$ since the $r$ rows of $V$ are linearly independent.
\end{proof}

\begin{proposition} \label{prop:image}
For  generically chosen $a,b \in K^p$, $T_a T_b^{\dagger}$ has exactly $q$ distinct nonzero eigenvalues and the corresponding eigenspaces are $\Ima(M_1),\ldots,\Ima(M_q)$.
\end{proposition}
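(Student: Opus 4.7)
The plan is to combine Lemma~\ref{lem:pseudo} with the unwinding of the matrix-vector decomposition into an ordinary rank-$r$ decomposition, then read off eigenvalues and eigenspaces directly.

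First I would apply Proposition~\ref{prop:form2to1} to write $T=\sum_{i=1}^r u_i\otimes v_i\otimes w_i$ where the $u_i$'s and $v_i$'s are each linearly independent, and where, for each $\ell\in[q]$, the indices $i$ coming from the expansion of $M_\ell$ give a family of $u_i$'s which is a basis of $\Ima(M_\ell)$ and a family of $w_i$'s all equal (up to scaling) to $w_\ell$. By Lemma~\ref{lem:pseudo}, for generic $b$ we have
$$T_aT_b^{\dagger}=UD_aD_b^{-1}U^{\dagger},$$
with $D_a D_b^{-1}=\diag(\lambda_1,\ldots,\lambda_r)$ and $\lambda_i=\langle a,w_i\rangle/\langle b,w_i\rangle$, where $w_i$ in this expression refers to the third-mode vector from the rank-$1$ expansion above. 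Note that $\lambda_i=\lambda_{i'}$ whenever the indices $i,i'$ come from the same $M_\ell$, and I write $\mu_\ell$ for this common value.

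Next I would use the identities $U^{\dagger}U=I_r$ (Proposition~\ref{prop:pseudo}(i)) and $UU^{\dagger}=P$, the orthogonal projection onto $\Ima(U)$. This gives $T_aT_b^{\dagger}x=0$ for every $x$ orthogonal to $\Ima(U)$, while for $x=Uy$ one has $T_aT_b^{\dagger}Uy=UD_aD_b^{-1}y$. So $T_aT_b^{\dagger}$ preserves $\Ima(U)$ and acts on it by a matrix similar to $D_aD_b^{-1}$; in particular, its nonzero eigenvalues are precisely the nonzero $\mu_\ell$, and the eigenspace of $\mu_\ell$ inside $\Ima(U)$ is spanned by the $u_i$'s with $i$ coming from $M_\ell$, which is exactly $\Ima(M_\ell)$.

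The remaining task is to verify that the generic choice of $a,b$ gives (i) $\langle b,w_\ell\rangle\neq 0$ for all $\ell$, so that $D_b$ is invertible; (ii) $\langle a,w_\ell\rangle\neq 0$ for all $\ell$, so that each $\mu_\ell$ is nonzero; and (iii) the $\mu_\ell$'s are pairwise distinct. The latter condition unfolds to the nonvanishing of the bilinear forms
$$\langle a,w_\ell\rangle\langle b,w_{\ell'}\rangle-\langle a,w_{\ell'}\rangle\langle b,w_\ell\rangle,\qquad \ell\neq\ell',$$
each of which is non-identically zero because $w_\ell$ and $w_{\ell'}$ are not colinear (this is the same bilinear form that appears in the proof of Lemma~\ref{lem:image2}). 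Under these three generic conditions, the spectrum of $T_aT_b^{\dagger}$ consists of $0$ together with the $q$ distinct nonzero values $\mu_1,\ldots,\mu_q$, and the eigenspace of $\mu_\ell$ is $\Ima(M_\ell)$, as desired. The only mild obstacle I foresee is making sure the eigenspace of $\mu_\ell$ is not enlarged by contributions from the kernel direction, but since $\Ima(M_\ell)\subseteq\Ima(U)$ and $\mu_\ell\neq 0$, no kernel vector can be added to it.
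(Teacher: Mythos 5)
Your proof is correct and takes essentially the same approach as the paper's: unwind via Proposition~\ref{prop:form2to1}, apply Lemma~\ref{lem:pseudo} to get $T_aT_b^\dagger = UD_aD_b^{-1}U^\dagger$, and read off eigenvalues and eigenspaces from the diagonal $D_aD_b^{-1}$ under genericity of $a,b$. Your argument for the converse inclusion $V_{\mu_\ell}\subseteq\Ima M_\ell$ (that a nonzero eigenvalue forces the eigenvector into $\Ima(T_aT_b^\dagger)\subseteq\Ima(U)$) is a slight, valid variant of the paper's route, which instead counts multiplicities to show $0$ is an eigenvalue of multiplicity exactly $m-r$; just note that the precise reason no kernel-direction vector can enter $V_{\mu_\ell}$ is $\Ima(T_aT_b^\dagger)\subseteq\Ima(U)$ together with $\mu_\ell\neq 0$ (so $x=\mu_\ell^{-1}T_aT_b^\dagger x\in\Ima(U)$), not the $\Ima(M_\ell)\subseteq\Ima(U)$ you cite.
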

\begin{proof}
We first check that the columns of $U$ (i.e., the $u_i$) are eigenvectors of $T_a T_b^{\dagger}$: by the previous lemma,
$T_a T_b^{\dagger}U=(U D_a D_b^{-1} U^{\dagger}) U = UD_a D_b^{-1}$. Here we use the fact that $U^{\dagger} U=I_r$
since the $r$ columns of $U$ are linearly independent  (Proposition~\ref{prop:pseudo}.(i)). Note that $u_i$ is associated to some eigenvalue
 $\lambda=\langle w_{\ell},a \rangle / \langle w_{\ell},b \rangle$ where $w_1,\ldots,w_q$ are the $q$ distinct ``third mode vectors'' occurring in a decomposition of $T$. 
This eigenvalue is nonzero for a generic $a \in K^p$.
 Moreover,  by~(\ref{eq:rank1}),
$\Ima M_{\ell}$ is included in the corrresponding eigenspace  $V_{\lambda}$. 
For generically chosen $a,b \in K^p$, these $q$ eigenvalues are distinct. Like in the analysis of the standard version of Jennrich's algorithm, this follows from the fact that the $w_{\ell}$ are pairwise linearly independent (for a detailed argument, see the proof of Lemma~\ref{lem:image2}.(ii)).

In order to complete the proof of the proposition, we still need to derive the converse inclusion
 ($V_{\lambda} \subseteq \Ima M_{\ell}$) and we need to show that 0 is the only possible other eigenvalue.
 We will in fact show that 0 is an eigenvalue of multiplicity $m-r$, which achieves these two goals at once.
 For this, recall that $\rk U^{\dagger} = \rk U = r$, hence $\dim \ker U^{\dagger} = m-r$. 
 Moreover, $ \ker U^{\dagger} \subseteq \ker T_a T_b^{\dagger}$ by Lemma~\ref{lem:pseudo}.
 This shows that 0 has multiplicity at least $m-r$ as an eigenvalue. This is in fact the exact value of the multiplicity since we have already found other eigenvalues (the $\langle w_{\ell},a \rangle / \langle w_{\ell},b \rangle$) whose multiplicities sum at least to $r$.
\end{proof}

It is possible to eliminate the Moore-Penrose inverse from the above computations (and also from the ordinary Jennrich algorithm),
and to replace it by the ordinary matrix inverse. We briefly sketch how to do so. First we can obtain the span of the $u_i$
as $\Ima(T_a)$ for a generic $a \in K^p$. We can likewise obtain the span of the $v_i$ as $\Ima(T_a^T)$.
Then we can perform a change of basis to reduce to the case of a tensor $T'$ of format $r \times r \times p$ and rank $r$.
For such a tensor, the Moore-Penrose inverse can be replaced by the ordinary inverse. For the symmetric version of Jennrich's algorithm, this approach is worked out in detail (with numerical error bounds) in~\cite{KS24}, see in particular Section~1.3.1 of that paper.

The modifications above result in an algorithm that is applicable to an arbitrary field (not just to the real and complex numbers).
We have already presented such an algorithm in Section~\ref{sec:images2}. One difference is that the algorithm of Section~\ref{sec:images2} computes the spaces $\Ima(M_{\ell})$ ``in one go'' whereas the algorithm that we have just sketched first computes the span of the $u_i$ (i.e., the direct sum of the $\Ima(M_{\ell})$) before computing the spaces $\Ima(M_{\ell})$
themselves in a second stage.

\end{document}